\newif\ifnazi
\newcommand{\Card}[1]{\ifnazi\operatorname{Card}#1\else\##1\fi}
\newcommand{\At}{{\tilde A}}
\newcommand{\Ah}{{\hat A}}
\newcommand{\llangle}{\langle\!\langle}
\newcommand{\rrangle}{\rangle\!\rangle}
\newenvironment{fsa}[1][auto]{\begin{tikzpicture}[->,>=stealth',
    shorten >=1pt,auto,node distance=3cm,double distance between line centers=0.45ex,
    initial text=,accepting/.style=accepting by arrow,
    every loop/.style={looseness=12},semithick,#1]}{\end{tikzpicture}}
\newcommand\one{{\mathbb 1}}
\newcommand\mealy{{\mathcal M}}
\newcommand\End{{\operatorname{End}}}
\newcommand\Isom{{\operatorname{Isom}}}
\newcommand\GL{{\operatorname{GL}}}
\newcommand\chapterref[1]{24.\ref{#1}}
\title{Groups defined by automata}
\author{Laurent Bartholdi$^{1}$ \and Pedro V.~Silva$^{2,}$\thanks{The
    second author acknowledges support by Project ASA
    (PTDC/MAT/65481/2006) and C.M.U.P., financed by F.C.T. (Portugal)
    through the programmes POCTI and POSI, with national and
    E.U. structural funds.}}
\date{2010-12-07}
\address{$^1$~Mathematisches Institut\\
  Georg-August Universit\"at zu G\"ottingen\\
  Bunsenstra\ss e 3--5\\
  D-37073 G\"ottingen, Germany\\
  email:\,\url{laurent.bartholdi@gmail.com}
  \\[4pt]
  $^2$~Centro de Matem\'{a}tica, Faculdade de Ci\^{e}ncias\\
  Universidade do Porto\\
  R. Campo Alegre 687\\
  4169-007 Porto, Portugal\\
  email:\,\url{pvsilva@fc.up.pt}}
\begin{document}
\maketitle\label{chapterBS2}

\vspace{-1cm}

\begin{classification}
  20F65, 20E08, 20F10, 20F67, 68Q45
\end{classification}

\begin{keywords}
  Automatic groups, word-hyperbolic groups, self-similar groups.
\end{keywords}

\vspace{-5mm}

\localtableofcontents

Finite automata have been used effectively in recent years to define
infinite groups. The two main lines of research have as their most
representative objects the class of automatic groups (including
``word-hyperbolic groups'' as a particular case) and automata groups
(singled out among the more general ``self-similar groups'').

The first approach is studied in Section~\ref{BS2:sec:1} and implements
in the language of automata some tight constraints on the geometry of
the group's Cayley graph.  Automata are used to define a normal form
for group elements and to execute the fundamental group operations.

The second approach is developed in Section~\ref{BS2:sec:2} and focuses
on groups acting in a finitely constrained manner on a regular rooted
tree. The automata define sequential permutations of the tree, and can
even represent the group elements themselves.

The authors are grateful to Martin R. Bridson, Fran\c cois Dahmani,
Rostislav I. Grigorchuk, Luc Guyot, and Mark V. Sapir for their
remarks on a preliminary version of this text.

\section{The geometry of the Cayley graph}\label{BS2:sec:1}
Since its inception at the beginning of the 19th century, group theory
has been recognized as a powerful language to capture
\emph{symmetries} of mathematical objects: crystals in the early 19th
century, for Hessel and Frankenheim~\cite{Engel}*{page~120}; roots of
a polynomial, for Galois and Abel; solutions of a differential
equation, for Lie, Painlev\'e, etc. It was only later, mainly through
the work of Klein and Poincar\'e, that the tight connections between
group theory and geometry were brought to light.

Topology and group theory are related as follows.  Consider a space
$X$, on which a group $G$ acts \emph{freely}\index{action!free}: for
every $g\neq\one\in G$ and $x\in X$, we have $x\cdot g\neq x$. If the
quotient space $Z=X/G$ is compact, then $G$ ``looks very much like''
$X$, in the following sense: choose any $x\in X$, and consider the
orbit $x\cdot G$. This identifies $G$ with a roughly evenly
distributed subset of $X$.

Conversely, consider a ``nice'' compact space $Z$ with
\emph{fundamental group}\index{fundamental~group} $G$: then
$X=\widetilde Z$, the \emph{universal cover}\index{universal~cover} of
$Z$, admits a free $G$-action. In conclusion, properties of the
fundamental group of a compact space $Z$ reflect geometric properties
of the space's universal cover.

We recall that finitely generated groups were defined
in~\S\ref{BS1:fg}: they are groups $G$ admitting a
surjective map $\pi:F_A\twoheadrightarrow G$, where $F_A$ is the free
group on a finite set $A$.

\begin{definition}
  A group $G$ is \emph{finitely
    presented}\index{group!finitely~presented} if it is finitely
  generated, say by $\pi:F_A\twoheadrightarrow G$, and if there exists a
  finite subset $\mathscr R\subset F_A$ such the kernel $\ker(\pi)$ is
  generated by the $F_A$-conjugates of $\mathscr R$, that is,
  $\ker(\pi)=\llangle\mathscr R\rrangle$; one then has
  $G=F_A/\llangle\mathscr R\rrangle$. These $r\in \mathscr R$ are called
  \emph{relators}\index{group!relators~of~a} of the presentation; and one writes
  \[G=\langle A\mid \mathscr R\rangle.\]
  Sometimes it is convenient to write a relator in the form `$a=b$'
  rather than the more exact form `$ab^{-1}$'.
\end{definition}

Let $G$ be a finitely generated group, with generating set $A$. Its
\emph{Cayley graph}\index{Cayley~graph}\index{graph!Cayley} $\mathscr
C(G,A)$, introduced by \Cayley\ \cite{Cayley}, is the graph with
vertex set $G$ and edge set $G\times A$; the edge $(g,s)$ starts at
vertex $g$ and ends at vertex $gs$.

In particular, the group $G$ acts freely on $\mathscr C(G,A)$ by left
translation; the quotient $\mathscr C(G,A)/G$ is a graph with one
vertex and $\Card A$ loops.

Assume moreover that $G$ is finitely presented, with relator set
$\mathscr R$. For each $r=r_1\cdots r_n\in\mathscr R$ and each $g\in
G$, the word $r$ traces a closed path in $\mathscr C(G,A)$, starting
at $g$ and passing successively through
$gr_1,gr_1r_2,\dots,gr_1r_2\cdots r_n=g$. If one ``glues'' for each
such $r,g$ a closed disk to $\mathscr C(G,A)$ by identifying the disk's
boundary with that path, one obtains a $2$-dimensional cell complex in
which each loop is contractible --- this is a direct translation of
the fact that the normal closure of $\mathscr R$ is the kernel of the
presentation homomorphism $F_A\to G$.

For example, consider $G=\Z^2$, with generating set
$A=\{(0,1),(1,0)\}$. Its Cayley graph is the standard square grid. The
Cayley graph of a free group $F_A$, generated by $A$, is a tree.
\[\begin{tikzpicture}
   \draw[step=.5cm,thin] (-0.2,-0.2) grid (4.7,2.7);
 \end{tikzpicture}
 \qquad
 \begin{tikzpicture}[decoration=Fractal Tree,thin]
   \clip (-2.7,-1.7) rectangle (2.7,1.7);
   \draw decorate{ decorate{ decorate { decorate { (0,0) -- (2.5,0) } } } };
   \draw decorate{ decorate{ decorate{ decorate { (0,0) -- (-2.5,0) } } } };
   \draw decorate{ decorate{ decorate{ decorate { (0,0) -- (0,2.5) } } } };
   \draw decorate{ decorate{ decorate{ decorate { (0,0) -- (0,-2.5) } } } };
 \end{tikzpicture}\]

 More generally, consider a right $G$-set $X$, for instance the coset
 space $H\backslash G$. The \emph{Schreier
   graph}\index{Schreier~graph}\index{graph!Schreier} $\mathscr
 C(G,X,A)$ of $X$ is then the graph with vertex set $X$ and edge set
 $X\times A$; the edge $(x,s)$ starts in $x$ and ends in $xs$.

\subsection{History of geometric group theory}
In a remarkable series of papers,
\Dehn~\cites{MR1511580,MR1511645,MR1511705}, see also~\cite{MR881797},
initiated the geometric study of infinite groups, by trying to relate
algorithmic questions on a group $G$ and geometric questions on its
Cayley graph. These problems\index{problem,~decision} were described
in Definition~\ref{BS1:decproblems}, to which we refer. For instance,
the word problem asks if one can determine whether a path in the
Cayley graph of $G$ is closed, knowing only the path's labels.

It is striking that \Dehn\ used, for Cayley graph, the German
\emph{Gruppenbild}, literally ``group picture''. We must solve the
word problem in a group $G$ to be able to draw bounded portions of its
Cayley graph; and some algebraic properties of $G$ are tightly bound
to the algorithmic complexity of the word problem,
see~\S\ref{BS1:beyondfg}. For example, \Muller\ and \Schupp\ prove (see
Theorem~\ref{BS1:musc}) that a push-down automaton recognizes
precisely the trivial elements of $G$ if and only if $G$ admits a free
subgroup of finite index.

We consider now a more complicated example. Let $\mathcal S_g$ be an
oriented surface of genus $g\ge2$, and let $J_g$ denote its fundamental
group\index{surface~group}\index{group!surface}. Recall that $[x,y]$
denotes in a group the commutator $x^{-1}y^{-1}xy$. We have a presentation
\begin{equation}\label{BS2:eq:jg}
  J_g=\langle a_1,b_1,\dots,a_g,b_g\mid[a_1,b_1]\cdots[a_g,b_g]\rangle.
\end{equation}
Let $r=[a_1,b_1]\cdots[a_g,b_g]$ denote the relator, and let $\mathscr
R^*$ denote the set of cyclic permutations of $r^{\pm1}$.  The word
problem in $J_g$ is solvable in polynomial time by the following
algorithm: let $u$ be a given word. Freely reduce $u$ by removing all
$aa^{-1}$ subwords. Then, if $u$ contains a subword $v_1$ such that
$v_1v_2\in\mathscr R^*$ and $v_1$ is longer than $v_2$, replace
$v_1$ by $v_2^{-1}$ in $u$ and repeat. Eventually, $u$ represents
$\one\in G$ if and only if it is the empty word.\index{word~problem}

The validity of this algorithm relies on a lemma by \Dehn, that every
nontrivial word representing the identity contains more than half of
the relator as a subword.

Incidentally, the Cayley graph of $J_g$ is a tiling of the hyperbolic
plane\index{hyperbolic~plane} by $4g$-gons, with $4g$ meeting at each vertex.

\Tartakovsky~\cite{MR0033816},
\Greendlinger~\cites{MR0125020,MR0124381} and
\Lyndon~\cites{MR0577064,MR0214650} then devised ``small
cancellation''\index{small~cancellation} conditions on a group
presentation that guarantee that \Dehn's algorithm will
succeed. Briefly said, they require the relators to have small enough
overlaps. These conditions are purely combinatorial, and are described
in~\S\chapterref{BS2:sec:exautom}.

\Cannon\ and \Thurston, on the other hand, sought a formalism that
would encode the ``periodicity of pictures'' of a group's Cayley
graph. Treating the graph as a metric space with geodesic distance
$d$, already seen in~\S\ref{BS1:conjugacy}, they make the following
definition: the \emph{cone type}\index{cone~type} of $g\in G$ is
\begin{equation}\label{BS2:eq:cone}
  C_g=\{h\in G\mid d(\one,gh)=d(\one,g)+d(g,gh)\};
\end{equation}
the translate $gC_g$ is the set of vertices that may be connected to
$\one$ by a geodesic passing through $g$. Their intuition is that the
cone type of a vertex $v$ remembers, for points near $v$, whether they
are closer or further to the origin than $v$; for example, $\Z^2$ with
its standard generators has $9$ cone types: the cone type of the
origin (the whole plane), those of vertices on the axes (half-planes),
and those of other vertices (quadrants).

\Thurston's motivation was to get a good, algorithmic understanding of
fundamental groups\index{fundamental~group} of threefolds.
They should be made of nilpotent (or, more generally, solvable)
groups on the one hand, and ``automatic'' groups on the other hand.

\begin{definition}
  Let $G=\langle A\rangle$ be a finitely generated group, and recall
  that $\At$ denotes $A\sqcup A^{-1}$. The \emph{word
    metric}\index{word~metric}\index{distance!word~metric} on $G$ is
  the geodesic distance in $G$'s Cayley graph $\mathscr C(G,A)$. It
  may be defined directly as
  \[d(g,h)=\min\{n\mid g=hs_1\cdots s_n\text{ with all
  }s_i\in\At\},\] and is left-invariant: $d(xg,xh)=d(g,h)$.
  The \emph{ball of radius $n$}\index{group!ball~of~radius~$n$~in} is the set
  \[B_{G,A}(n)=\{g\in G\mid d(\one,g)\le n\}.\] The \emph{growth
    function}\index{growth~function} of $G$ is the function
  \[\gamma_{G,A}(n)=\Card B_{G,A}(n).\]
  The \emph{growth series}\index{growth~series} of $G$ is the power series
  \[\Gamma_{G,A}(z)=\sum_{g\in G}z^{d(\one,g)}=\sum_{n\ge0}\gamma_{G,A}(n)z^n(1-z).\]

  Growth functions are usually compared as follows:
  $\gamma\precsim\delta$ if there is a constant $C\in\N$ such that
  $\gamma(n)\le\delta(Cn)$ for all $n\in\N$; and $\gamma\sim\delta$ if
  $\gamma\precsim\delta\precsim\gamma$. The equivalence class of
  $\gamma_{G,A}$ is independent of $A$.
\end{definition}

\Cannon\ observed (in an unpublished 1981 manuscript; see
also~\cite{MR758901}) that, if a group has finitely many cone types,
then its growth series satisfies a finite linear system and is
therefore a rational function of $z$. For $J_g$, for instance, he
computes
\[\Gamma_{J_g,A}=\frac{1+2z+\dots+2z^{2g-1}+z^{2g}}{1+(2-4g)z+\dots+(2-4g)z^{2g-1}+z^{2g}}.\]\index{surface~group!growth~series}
This notion was formalized by \Thurston\ in 1984 using
automata, and is largely the topic of the next section. We will return
to growth of groups in~\S\chapterref{BS2:sec:growth}; see
however~\cite{MR1344918} for a good example of growth series of groups
computed thanks to a description of the Cayley graph by automata.

\Gromov\ emphasized the relevance to group theory of the following
definition, attributed to \Margulis:
\begin{definition}[\cite{MR804694}]\label{BS2:def:geom}
  A map $f:X\to Y$ between two metric spaces is a
  \emph{$C$-quasi-isometry}\index{quasi-isometry}, for a constant
  $C>0$, if one has
  \[C^{-1}d(x,y)-C\le d(f(x),f(y))\le Cd(x,y)+C
  \]
  for all $y\in Y$ such that $d(f(X),y)\le C$. A \emph{quasi-isometry}
  is a $C$-quasi-isometry for some $C>0$.  Two spaces are
  \emph{quasi-isometric} if there exists a quasi-isometry between
  them; this is an equivalence relation.

  A property of finitely generated groups is
  \emph{geometric}\index{property!geometric} if it
  only depends on the quasi-isometry class of its Cayley graph.
\end{definition}
Thus for instance the inclusion $\Z\to\R$, and the map
$\R\to\Z,x\mapsto\lfloor x\rfloor$ are quasi-isometries.

Being finite, having a finite-index subgroup isomorphic to $\Z$, and
being finitely presented are geometric properties. The asymptotics of
the growth function is also a geometric invariant; thus for instance
having growth function $\precsim n^2$ is a geometric property.

\subsection{Automatic groups}\index{automatic~group|(}
Let $G=\langle A\rangle$ be a finitely generated group. We will
consider the formal alphabet $\Ah=A\sqcup
A^{-1}\sqcup\{\one\}$, where $\one$ is treated as a ``padding''
symbol. Following the main reference~\cite{MR1161694} by \Epstein\
\emph{et al.}:

\begin{definition}[\cites{MR1161694,MR1189009,MR1147304}]\label{BS2:def:automatic}
  The group $G$ is
  \emph{automatic}\index{automatic~group}\index{group!automatic} if
  there are finite-state automata
  $\mathcal L,\mathcal M$, the \emph{language} and
  \emph{multiplication} automata, with the following properties:
  \begin{conditionsiii}
  \item $\mathcal L$ is an automaton with alphabet $\At$;
  \item $\mathcal M$ has alphabet $\Ah\times\Ah$, and
    has for each $s\in \Ah$ an accepting subset $T_s$ of states;
    call $\mathcal M_s$ the automaton with accepting states $T_s$;
  \item the language of $\mathcal L$ surjects onto $G$ by the natural map
    $f:\At\to F_A\to G$; words in $L(\mathcal L)$ are called
    \emph{normal forms}\index{automatic~group!normal~forms};
  \item for any two normal forms $u,v\in L(\mathcal L)$, consider the word
    \[w=(u_1,v_1)(u_2,v_2)\cdots(u_n,v_n)\in(\Ah\times\Ah)^*,\] where
    $n=\max\{|u|,|v|\}$ and $u_i,v_j=\one$ if $i>|u|,j>|v|$. Then
    $\mathcal M_s$ accepts $w$ if and only if $\pi(u)=\pi(vs)$.
  \end{conditionsiii}
\end{definition}

In words, $G$ is automatic if the automaton $\mathcal L$ singles out
sufficiently many words which may be used to represent all group
elements; and the automaton $\mathcal M_s$ recognizes when two such
singled out words represent group elements differing by a
generator. The pair $(\mathcal L,\mathcal M)$ is an \emph{automatic
  structure}\index{automatic~structure} for $G$.

We will give numerous examples of automatic groups
in~\S\chapterref{BS2:sec:exautom}. Here is a simple one that contains the main
features: the group
$G=\Z^2$\index{abelian~group!free}\index{group!free~abelian}, with
standard generators $x,y$. The
language accepted by $\mathcal L$ is
$(x^*\cup(x^{-1})^*)(y^*\cup(y^{-1})^*)$:
\[\begin{fsa}
  \node[state] (1) at (0,0) {};
  \node[state,accepting] (x) at (3,0) {};
  \node[state,accepting left] (X) at (-3,0) {};
  \node[state,accepting] (y) at (0,2) {};
  \node[state,accepting] (Y) at (0,-2) {};
  \path (135:1) edge (1);
  \path (1) edge (315:1);
  \path (1) edge node {$x$} (x) edge node {$x^{-1}$} (X)
            edge node {$y$} (y) edge node {$y^{-1}$} (Y);
  \path (x) edge[loop above] node {$x$} () edge node {$y$} (y) edge node {$y^{-1}$} (Y);
  \path (X) edge[loop above] node {$x^{-1}$} () edge node {$y$} (y) edge node {$y^{-1}$} (Y);
  \path (y) edge[loop left] node {$y$} ();
  \path (Y) edge[loop left] node {$y^{-1}$} ();
\end{fsa}\]

The multiplication automaton, in which states in $T_s$ are labeled $s$, is
\[\begin{fsa}
  \node[state] (1) at (0,0) {$\one$};
  \node[state] (x) at (3,0) {$x$};
  \node[state] (xy) at (3,3) {};
  \node[state] (xY) at (3,-3) {};
  \node[state] (y) at (0,3) {$y$};
  \node[state] (Y) at (0,-3) {$y^{-1}$};
  \node[state] (X) at (-3,0) {$x^{-1}$};
  \node[state] (Xy) at (-3,3) {};
  \node[state] (XY) at (-3,-3) {};
  \path (202:1) edge (1);
  \small
  \path (1) edge[loop above,left=3mm] node {$(s,s)$} ()
                    edge node[sloped,above] {$(x,y^{-1})$} node[sloped,below] {$(y,x^{-1})$} (xy)
                    edge node[sloped,above] {$(x,y)$} node[sloped,below,near end] {$(y^{-1},x^{-1})$} (xY)
                    edge node[near end,right] {$(y,\one)$} node[near end] {$(\one,y^{-1})$} (y)
                    edge node {$(x,\one)$} node[below] {$(\one,x^{-1})$} (x)
                    edge node[sloped,below] {$(x^{-1},y^{-1})$} node[sloped,above,near end] {$(y,x)$} (Xy)
                    edge node[sloped,below] {$(x^{-1},y)$} node[sloped,above,near end] {$(y^{-1},x)$} (XY)
                    edge node {$(\one,x)$} node[above] {$(x^{-1},\one)$} (X)
                    edge[bend left=25] node[near end] {$(\one,y)$} node[left,near end] {$(y^{-1},\one)$} (Y);
  \path (xy) edge[loop right] node {$(s,s)$} ()
             edge node[near start] {$(y^{-1},\one)$} node [near end] {$(\one,y)$} (x);
  \path (xY) edge[loop right] node {$(s,s)$} ()
             edge node [near start,right] {$(y,\one)$} node [near end,right] {$(\one,y^{-1})$} (x);
  \path (Xy) edge[loop left] node {$(s,s)$} ()
             edge node[near start,left] {$(y^{-1},\one)$} node [near end,left] {$(\one,y)$} (X);
  \path (XY) edge[loop left] node {$(s,s)$} ()
             edge node[near start] {$(y,\one)$} node [near end] {$(\one,y^{-1})$} (X);
\end{fsa}\]

The definition we gave is purely automata-theoretic. It does, however,
have a more geometric counterpart. A word $w\in\At^*$
represents in a natural way a path in the Cayley graph $\mathscr
C(G,A)$, starting at $\one$ and ending at $\pi(w)$. If $w=w_1\cdots
w_n$, we write $w(j)=w_1\cdots w_j$ the vertex of $\mathscr C(G,A)$
reached after $j$ steps; if $j>n$ then $w(j)=w$. For two paths
$u,v\in\At^*$, we say they
\emph{$k$-fellow-travel}\index{property!fellow~traveller}\index{fellow~traveller~property}
if
$d(u(j),v(j))\le k$ for all $j\in\{1,\dots,\max\{|u|,|v|\}\}$.
\begin{proposition}\label{BS2:prop:autom}
  A group $G$ is automatic if and only if there exists a rational
  language $L\subseteq\At^*$, mapping onto $G$, and a
  constant $k$, such that for any $u,v\in L$ with $d(\pi(u),\pi(v))\le
  1$ the paths $u,v$ $k$-fellow-travel.
\end{proposition}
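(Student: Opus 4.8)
The plan is to prove the two implications separately; the ``if'' direction is the constructive one, so I would treat it first.

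\emph{Fellow‑travelling implies automatic.}
Given a rational $L\subseteq\At^*$ mapping onto $G$ and a constant $k\ge 1$ with the stated property, I would take for $\mathcal L$ any finite automaton with $L(\mathcal L)=L$: it has alphabet $\At$ and its language surjects onto $G$, so conditions (i) and (iii) of Definition~\ref{BS2:def:automatic} hold at once. For $\mathcal M$ I would use the finite state set $B_{G,A}(k)\sqcup\{\dagger\}$, with initial state $\one$, with $\dagger$ an absorbing sink, with the transition on a letter $(a,b)\in\Ah\times\Ah$ carrying a state $g$ to $a^{-1}gb$ if this element still lies in $B_{G,A}(k)$ and to $\dagger$ otherwise, and with $T_s=\{\pi(s)^{-1}\}$. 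The key point is the invariant that, while reading the padded pair $w$ associated with two words $u,v$, the automaton is — as long as it has not reached $\dagger$ — in the state $\pi(u(j))^{-1}\pi(v(j))$ after $j$ letters. Hence if $u,v\in L$ and $\pi(u)=\pi(vs)$ then $d(\pi(u),\pi(v))\le 1$, so $u$ and $v$ $k$‑fellow‑travel, so every partial state stays in $B_{G,A}(k)$, the run survives, and it ends in the accepting state $\pi(u)^{-1}\pi(v)=\pi(s)^{-1}$; conversely a surviving accepted run ends in $\pi(s)^{-1}=\pi(u)^{-1}\pi(v)$, i.e.\ $\pi(u)=\pi(vs)$. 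This is precisely condition (iv), so $(\mathcal L,\mathcal M)$ is an automatic structure.

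\emph{Automatic implies fellow‑travelling.}
Given an automatic structure $(\mathcal L,\mathcal M)$, I would take $L=L(\mathcal L)$ — rational, onto $G$, contained in $\At^*$ — and argue by a minimal counterexample. Let $N$ bound the number of states of $\mathcal L$ and of each $\mathcal M_s$, put $k:=2N^3$, and suppose some $u,v\in L$ with $\pi(u)=\pi(vs)$ for some $s\in\Ah$ (equivalently $d(\pi(u),\pi(v))\le 1$) and some index $j$ satisfy $d(u(j),v(j))>k$, with $|u|+|v|$ minimal. Fix an accepting run $q_0,\dots,q_n$ of $\mathcal M_s$ on the padded pair $w$, together with accepting runs of $\mathcal L$ on $u$ and on $v$, and let $p_i,p'_i$ be the states of these last two runs after $i$ letters, held constant in their final accepting states once $i$ exceeds $|u|$, resp.\ $|v|$. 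Now inspect the triples $(q_i,p_i,p'_i)$ for $j\le i\le n$. If two of them agree, say at $i_1<i_2$, I would delete the positions $i_1+1,\dots,i_2$ simultaneously from $w$ and from $u$ and $v$: the spliced runs show that the resulting words $u',v'$ still lie in $L$ and that $w'$ is still the padded pair of $(u',v')$, whence $\pi(u')=\pi(v's)$ by condition (iv); moreover the deleted positions all lie beyond $j$, so the length‑$j$ prefixes are untouched and $d(u'(j),v'(j))=d(u(j),v(j))>k$ with $|u'|+|v'|<|u|+|v|$ — contradicting minimality. Otherwise all $n-j+1$ triples are distinct, so $n-j\le N^3-1$; but then $\pi(u(j))^{-1}\pi(v(j))=\pi(u_{j+1}\cdots u_{|u|})\,\pi(s)^{-1}\,\pi(v_{j+1}\cdots v_{|v|})^{-1}$ is represented by a word of length at most $2(n-j)+1<2N^3=k$, so $d(u(j),v(j))<k$ — again a contradiction. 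Hence no such pair exists and $L$ $k$‑fellow‑travels.

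The step I expect to be the real obstacle is the padding bookkeeping concealed in the word ``simultaneously'' of the second implication: one must check that deleting a common block of positions from the two tracks of a padded pair again produces a legitimate padded pair of words (with $\one$'s only as a suffix of each coordinate), that both resulting words are still accepted by $\mathcal L$ — which is exactly why the states of $\mathcal L$ on the two prefixes are carried along in the triple — and that the distinguished index $j$ is not itself swallowed by the deleted block, which is guaranteed by only ever using a repeated triple at positions $\ge j$ (and is also the reason a suffix estimate, rather than a prefix estimate, settles the leftover case). Everything else — the running invariant, well‑definedness of the transition $g\mapsto a^{-1}gb$, finiteness of $B_{G,A}(k)$, and the arithmetic of the constants — is routine.
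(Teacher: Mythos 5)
Your proof is correct. The ``if'' direction is essentially the paper's own construction: both build the multiplier automaton whose state records the word difference $\pi(u(j))^{-1}\pi(v(j))\in B_{G,A}(k)$; the paper additionally carries a factor $Q\times Q$ of $\mathcal L$-states (useful if one wants $\mathcal M_s$ to accept \emph{only} pairs of normal forms, but not needed for condition (iv) as stated), while you instead add an explicit sink --- a cosmetic difference. The ``only if'' direction is where you genuinely diverge. The paper argues directly: from the state $s_j$ of $\mathcal M$ reached after $j$ letters there is a path of length $<c$ to an accepting state, with label $(p,q)$, whence $\pi(u(j)p)=\pi(v(j)qs)$ and $d(u(j),v(j))\le 2c-1$. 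That is shorter and gives a better constant, but as a literal application of Definition~\ref{BS2:def:automatic} it is slightly too quick: condition (iv) is an equivalence only for pairs of \emph{normal forms}, and $u(j)p$, $v(j)q$ need not lie in $L(\mathcal L)$, so acceptance of the extended word has no a priori group-theoretic meaning (the standard fix is to normalize $\mathcal M$ so that $L(\mathcal M_s)$ is exactly the set of padded pairs of normal forms with $\pi(u)=\pi(vs)$). Your minimal-counterexample argument sidesteps exactly this issue: by pumping on triples that include the $\mathcal L$-states of both tracks, the spliced words stay in $L$, so condition (iv) applies legitimately, and your bookkeeping on the padding and on the surviving index $j$ is all in order. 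The price is a cruder constant ($2N^3$ versus $2c-1$) and a longer argument; the gain is that your proof works verbatim with the definition as given in the paper.
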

\begin{proof}[Sketch of proof]
  Assume first that $G$ has automatic structure $(\mathcal L,\mathcal
  M)$, and let $c$ denote the number of states of $\mathcal M$. If
  $u,v\in L(\mathcal L)$ satisfy $\pi(u)=\pi(vs)$, let $s_j$
  denote the state $\mathcal M$ is in after having read
  $(u_1,v_1)\cdots(u_j,v_j)$. There is a path of length $<c$, in
  $\mathcal M$, from $s_j$ to an accepting state (labeled $s$); let
  its label be $(p,q)$. Then $\pi(u(j)p)=\pi(v(j)qs)$, so
  $u(j)$ and $v(j)$ are at distance at most $2c-1$ in $\mathscr
  C(G,A)$.

  Conversely, assume that paths $k$-fellow-travel and that an
  automaton $\mathcal L$, with state set $Q$ is given, with language
  surjecting onto $G$. Recall that $B(k)$ denotes the set of group
  elements at distance $\le k$ from $\one$ in $\mathscr
  C(G,A)$. Consider the automaton with state set $Q\times Q\times
  B_k$. Its initial state is $(*,*,\one)$, where $*$ is the initial
  state of $\mathcal L$; its alphabet is $\Ah\times\Ah$, and its
  transitions are given by $(p,q,g)\cdot(s,t)=(p\cdot s,q\cdot
  t,s^{-1}gt)$ whenever these are defined. Its accepting set of
  states, for $s\in\Ah$, is $T_s=Q\times Q\times\{s\}$.
\end{proof}

\begin{corollary}
  If the finitely generated group $G=\langle A\rangle$ is automatic,
  and if $B$ is another finite generating set for $G$, then there also
  exists an automatic structure for $G$ using the alphabet $B$.
\end{corollary}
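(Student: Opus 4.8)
The plan is to exploit the geometric characterization in Proposition~\ref{BS2:prop:autom}: to show that $G$ is automatic over $B$ it is enough to produce a rational language $L'\subseteq\tilde B^*$ surjecting onto $G$ and a constant $k'$ such that any two words of $L'$ whose images lie at distance $\le1$ in $\mathscr C(G,B)$ fellow-travel with constant $k'$.

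I would begin with a rational language $L\subseteq\tilde A^*$ witnessing automaticity over $A$, as furnished by Proposition~\ref{BS2:prop:autom}, with fellow-traveller constant $k$. Expressing each $a\in\tilde A$ as a word $\beta_a\in\tilde B^*$ with the same image in $G$, and symmetrically each $b\in\tilde B$ as a word over $\tilde A$, one obtains a constant $N$ bounding all of these lengths; in particular the two word metrics on $G$ are bi-Lipschitz equivalent, $N^{-1}d_A\le d_B\le N\,d_A$. Let $\phi\colon\tilde A^*\to\tilde B^*$ be the monoid homomorphism sending $a$ to $\beta_a$, and set $L':=\phi(L)$. Then $L'$ is rational, being a homomorphic image of a rational language, and it surjects onto $G$ since $\phi$ preserves the group element represented by a word.

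The heart of the matter --- and, I expect, the main obstacle --- is verifying the fellow-traveller property for $L'$. Given $u'=\phi(u)$ and $v'=\phi(v)$ in $L'$ with $d_B(\pi(u'),\pi(v'))\le1$, one first has $d_A(\pi(u),\pi(v))\le N$; inserting at most $N$ intermediate normal forms from $L$ along an $A$-geodesic from $\pi(u)$ to $\pi(v)$ and applying the fellow-traveller property of $L$ finitely many times yields the uniform bound $d_A(u(i),v(i))\le Nk$ for all $i$. Passing from this to a bound on $d_B(u'(j),v'(j))$ is the delicate step: the homomorphism $\phi$ replaces the $i$-th edge of the path traced by $u$ by the block $\beta_{u_i}$, whose length fluctuates between $1$ and $N$, so the length-$j$ prefixes of $\phi(u)$ and of $\phi(v)$ need not correspond to comparable prefixes of $u$ and $v$, and a naive estimate controls only the ratio, not the difference, of the two parametrizations. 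To circumvent this I would first replace $L$ by a language of geodesic normal forms over $A$ (for instance the $\mathrm{ShortLex}$ ones); then two normal forms $u,v$ of elements within distance $N$ of each other are forced to run essentially in step, in the sense that the partial sums $\sum_{t\le p}\bigl(|\beta_{u_t}|-|\beta_{v_t}|\bigr)$ remain bounded --- geodesicity pins $d_A(\one,u(p))$ to $p$, and the $A$-fellow-traveller bound keeps the two paths uniformly close --- and this is precisely what is needed to line up the positions in $\phi(u)$ and $\phi(v)$. Feeding this alignment into the bi-Lipschitz comparison of metrics then yields a uniform $k'$, and Proposition~\ref{BS2:prop:autom} concludes.

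Everything outside of this reparametrization bookkeeping is routine. An alternative to the geometric route would be to construct the language and multiplier automata over $\tilde B$ directly from $\mathcal L$ and $\mathcal M$, but one meets the same synchronization problem --- a finite automaton scanning $\phi(u)$ and $\phi(v)$ letter by letter must simulate scanning $u$ and $v$, whose cursors drift apart --- to be handled again by using geodesic normal forms; alternatively one may simply invoke that an automatic group is $\mathrm{ShortLex}$-automatic with respect to any finite generating set, which makes the corollary immediate by taking $L'$ to be the $\mathrm{ShortLex}$ language over $B$.
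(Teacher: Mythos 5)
You have correctly located the crux --- the substitution $\phi$ sends the $i$-th letter of $u$ to a block of fluctuating length, so the letter-by-letter parametrizations of $\phi(u)$ and $\phi(v)$ drift apart --- but your proposed repair does not close the gap. First, you cannot in general ``replace $L$ by a language of geodesic normal forms over $A$'': an automatic structure only guarantees quasi-geodesic normal forms (the bound $|u|\le Kd(\one,\pi(u))$ noted after Lemma~\ref{BS2:lem:uniquerat}), and neither the regularity of the ShortLex language nor its fellow-traveller property is available for a general automatic group. The closing appeal to ``an automatic group is ShortLex-automatic with respect to any finite generating set'' is not a theorem you may invoke: it is false as stated (ShortLex-automaticity is sensitive to the generating set), and any statement of that shape already contains the change-of-generators problem, so invoking it here would be circular. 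Second, even granting geodesic normal forms, your key estimate --- that the partial sums $\sum_{t\le p}\bigl(|\beta_{u_t}|-|\beta_{v_t}|\bigr)$ stay bounded --- does not follow: geodesicity pins $d_A(\one,u(p))$ to $p$ but says nothing about $\sum_{t\le p}|\beta_{u_t}|$, which depends on \emph{which} letters occur; two fellow-travelling $A$-geodesics can use letters whose $B$-expansions have very different lengths, and then the two cursors separate linearly.

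The paper removes the obstacle before it arises, by a padding trick: after adjoining a trivial generator to $B$ (removable afterwards by a finite transducer), every $a\in\At$ can be written as a word $w_a\in\tilde B^*$ of length \emph{exactly} $M$. The substitution is then length-uniform, position $jM+r$ of $\phi(u)$ lies within bounded distance of $u(j)$, the prefixes of $\phi(u)$ and $\phi(v)$ stay aligned automatically, and $k$-fellow-travelling becomes $kM$-fellow-travelling with no reparametrization argument at all. Everything else in your plan (rationality and surjectivity of the homomorphic image, the bi-Lipschitz comparison of word metrics, the appeal to Proposition~\ref{BS2:prop:autom}) matches the paper; the missing idea is precisely this uniformization of block lengths.
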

\begin{proof}[Sketch of proof]
  Note first that a trivial generator may be added or removed from $A$
  or $B$, using an appropriate finite transducer for the latter.

  There exists then $M\in\N$ such that every $a\in\At$ can be
  written as a word $w_a\in\tilde B^*$ of length precisely
  $M$. Accept as normal forms all $w_{a_1}\cdots w_{a_n}$ such that
  $a_1\cdots a_n$ is a normal form in the original automatic structure
  $\mathcal L$. The new normal forms constitute a homomorphic image of
  $\mathcal L$ and therefore define a rational language. If paths in
  $L(\mathcal L)$ $k$-fellow-travel, then their images in the new
  structure will $kM$-fellow-travel.
\end{proof}

Note that the language of normal forms is only required to contain
``enough'' expressions; namely that the evaluation map $L(\mathcal
L)\to G$ is onto. We may assume that it is bijective, by the following
lemma. The language $L(\mathcal L)$ is then called a ``rational
cross-section''\index{rational~cross-section} by
\Gilman~\cite{MR895616}; and $(\mathcal L,\mathcal M)$ is called an
\emph{automatic structure with
  uniqueness}\index{automatic~structure!with~uniqueness}.

\begin{lemma}\label{BS2:lem:uniquerat}
  Let $G$ be an automatic group. Then $G$ admits an automatic
  structure with uniqueness.
\end{lemma}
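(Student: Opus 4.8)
The plan is to start from an automatic structure $(\mathcal L,\mathcal M)$ for $G$ and impose a total order on normal forms so that each group element is represented by exactly one minimal word, then verify that the set of these minimal words is still a rational language and still fellow-travels. First I would fix a total order on the alphabet $\At$ and use it to define the \emph{ShortLex} order on $\At^*$: $u\prec v$ if $|u|<|v|$, or $|u|=|v|$ and $u$ precedes $v$ lexicographically. For each $g\in G$ let $\nu(g)$ be the $\prec$-least element of $L(\mathcal L)$ mapping to $g$; set $L'=\{\nu(g):g\in G\}$. By construction $\pi$ restricts to a bijection $L'\to G$, so once I show $L'$ is rational, the pair $(\mathcal L',\mathcal M)$ with $\mathcal L'$ an automaton for $L'$ is an automatic structure with uniqueness (the fellow-traveller condition of Proposition~\ref{BS2:prop:autom} is inherited since $L'\subseteq L(\mathcal L)$, so I may invoke that proposition in the converse direction).

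The main work, then, is to prove $L'$ is rational. The key step is to express $L'$ as
\[
L' = L(\mathcal L)\setminus\bigl\{v\in L(\mathcal L)\;:\;\exists u\in L(\mathcal L)\text{ with }\pi(u)=\pi(v)\text{ and }u\prec v\bigr\}.
\]
I would show the subtracted set is rational by building an automaton that reads $v$ and guesses a witness $u$ of equal or smaller length. The case $|u|<|v|$ and the case $|u|=|v|$ are handled separately; in both cases one runs $\mathcal L$ on $v$, runs a second copy of $\mathcal L$ on the guessed $u$ (padded, using $\Ah$), and simultaneously tracks enough information to check $\pi(u)=\pi(v)$ and $u\prec v$. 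The multiplier $\mathcal M$ (applied with $s=\one$, so that $\mathcal M_{\one}$ accepts exactly the padded pairs of normal forms evaluating to the same element) is precisely the gadget that lets a finite automaton verify $\pi(u)=\pi(v)$ while reading $u$ and $v$ in lockstep; checking the lexicographic comparison $u\prec v$ needs only one extra bit of state (have we already seen a position where $u$ is strictly smaller?), and the length comparison $|u|\le|v|$ is visible from the padding symbols. Intersecting with $L(\mathcal L)$ on the $v$-track, then projecting away the $u$-track (a rational operation: homomorphic image, or equivalently the projection of a rational relation), yields the subtracted set as a rational language; its complement inside $L(\mathcal L)$ is then $L'$.

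The step I expect to be the main obstacle is the bookkeeping that makes the guessed witness $u$ legitimately \emph{shorter} than $v$ when lengths differ: the two tracks then fall out of sync, and I must be careful that the padding convention matches the one in Definition~\ref{BS2:def:automatic} and that $\mathcal M_{\one}$ is being fed a genuinely valid padded pair. A clean way around this is to first replace $L(\mathcal L)$ by the sublanguage of \emph{geodesic} normal forms, or at least to observe that among words evaluating to a fixed $g$ the $\prec$-least one is automatically of minimal length, so that when comparing $v$ against a potential strictly shorter witness one may restrict to witnesses of length $|v|-1,\dots$ down to $0$ — but since $\prec$ compares length first, it in fact suffices to find \emph{any} shorter normal form, and ``there exists a strictly shorter word in $L(\mathcal L)$ equal to $v$'' is itself a rational condition on $v$ by a standard product-automaton-plus-counter argument (or by noting the set of non-geodesic-within-$L(\mathcal L)$ words is rational). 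The remaining lexicographic-tie case keeps both tracks synchronized and is routine. I would present the construction at the level of these automata and leave the transition tables, which are mechanical, to the reader.
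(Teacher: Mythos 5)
Your proposal is correct and follows essentially the same route as the paper: select the short-lex-least normal form of each element, and obtain rationality of the resulting cross-section by combining $\mathcal M_\one$ (which recognizes pairs of normal forms with equal image) with the rational short-lex relation, using closure of rational languages under Boolean operations and projection, while keeping $\mathcal M$ unchanged. The paper's sketch states this more tersely (as a universally quantified condition rather than an explicit complement-of-projection), but the underlying construction is identical.
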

\begin{proof}[Sketch of proof]
  Consider $(\mathcal L',\mathcal M)$ an automatic structure. Recall the
  ``short-lex'' ordering on words: $u\le v$ if $|u|<|v|$, or if
  $|u|=|v|$ and $u$ comes lexicographically before $v$. The language
  $\{(u,v)\in\Ah^*\times\Ah^* \mid u\le v\}$ is
  rational. The language
  \[L=L(\mathcal L')\cap\{u\in\At^*\mid
\mbox{ for all }v\in\Ah^*,\mbox{ if }(u,v)\in L(\mathcal M_\one)\mbox{
  then }u\le v\}\] is then
  also rational, of the form $L(\mathcal L)$. The automaton $\mathcal
  M$ need not be changed.
\end{proof}

\noindent Various notions related to automaticity have emerged, some stronger,
some weaker:
\begin{itemize}
\item One may require the words accepted by $\mathcal L$ to be
  representatives of minimal length; the automatic structure is then
  called \emph{geodesic}\index{automatic~structure!geodesic}. It would
  then follow that the growth series
  $\Gamma_{G,A}(z)$ of $G$, which is the growth series of $\mathcal
  L$, is a rational function. Note that there is a constant $K$ such
  that, for the language produced by Lemma~\ref{BS2:lem:uniquerat},
  all words $u\in L(\mathcal L)$ satisfy $|u|\le Kd(\one,\pi(u))$.
\item The definition is asymmetric; more precisely, we have defined a
  \emph{right automatic} group, in that the automaton $\mathcal M$
  recognizes multiplication on the right. One could similarly define
  \emph{left automatic groups}\index{automatic~group!right/left}; then
  a group is right automatic if and
  only if it is left automatic.

  Indeed, let $(\mathcal L,\mathcal M)$ be an automatic structure where
  $\mathcal L$ recognizes a rational cross section. Then
  $L'=\{u^{-1}\mid u\in L(\mathcal L)\}$ and $M'=\{(u^{-1},v^{-1})\mid
  (u,v)\in L(\mathcal M)\}$ are again rational languages. Indeed,
  since rational languages are closed under reversal and morphisms,
  it follows easily that $L'$ is rational. On the other hand, using
  the pumping lemma and the fact that group elements admit unique
  representatives in $L(\mathcal L)$, the amount of padding at the end
  of word-pairs in $L(\mathcal M)$ is bounded, and can be moved from
  the beginning to the end of the word-pairs in $M'$ by a finite
  transducer. Therefore, $L',M'$ are the languages of a right automatic
  structure.

  However, one could require both properties simultaneously, namely,
  on top of an automatic structure, a third automaton $\mathcal N$
  accepting (in state $s\in\Ah$) all pairs of normal forms
  $(u,v)$ with $\pi(u)=\pi(sv)$. Such groups are called
  \emph{biautomatic}\index{biautomatic~group}\index{automatic~group!biautomatic}\index{group!biautomatic}. No
  example is known of a group that is automatic but not biautomatic.

\item One might also only keep the geometric notion of ``combing'': a
  \emph{combing}\index{combing~of~group} on a group is a choice, for every
  $g\in G$, of a word
  $w_g\in\At^*$ evaluating to $g$, such that the words $w_g$
  and $w_{gs}$ fellow-travel for all $g\in G$, $s\in\At$.

  In that sense, a group is automatic if and only if it admits a
  combing whose words form a rational language; see~\cite{MR2275631}
  for details.

  One may again require the combing lines to be geodesics, i.e., words
  of minimal length; see \Hermiller's
  work~\cites{MR1464930,MR2310150,MR2440717}.

  One may also put weaker constraints on the words of the combing; for
  example, require it to be an indexed language. \Bridson\ and
  \Gilman\ \cite{MR1420509} proved that all geometries of
  threefolds, in particular the Nil~\eqref{BS2:eq:heis} and Sol
  geometry, which are not automatic, fall in this framework.
\item Another relaxation is to allow the automaton $\mathcal M$ to
  read at will letters from the first or the second word; groups
  admitting such a structure are called \emph{asynchronously
    automatic}\index{group!asynchronously~automatic}. Among
  fundamental groups of threefolds, there is no
  difference between these definitions~\cite{MR1420509}, but for more
  general groups there is.
\item Finally, Definition~\ref{BS2:def:automatic} can be adapted to
  define automatic semigroups\index{semigroup!automatic}. Properties
  from automatic groups that
  can be proved within the automata-theoretic framework can often be
  generalized to automatic semigroups, or at least
  monoids~\cite{MR1795250}\index{monoid!automatic}. However,
  establishing an alternative
  geometric approach has proved to be a tough task and success was
  reached only in restricted cases~\cites{MR2082097,MR2277577}.
\end{itemize}

\subsection{Main examples of automatic groups}\label{BS2:sec:exautom}
From the very definition, it is clear that finite groups are
automatic: one chooses a word representing each group element, and
these necessarily form a fellow-travelling rational language.

It is also clear that $\Z$ is automatic: write $t$ for the canonical
generator of $\Z$; the language $t^*\cup(t^{-1})^*$ maps bijectively to
$\Z$; and the corresponding paths $1$-fellow-travel. The automata are
\[\mathcal L:\begin{fsa}[baseline]
  \node[state,initial above,accepting below] (1) at (0,0) {};
  \node[state,accepting below] (t) at (2,0) {};
  \node[state,accepting below] (T) at (-2,0) {};
  \path (1) edge node {$t$} (t) edge node {$t^{-1}$} (T);
  \path (t) edge[loop above] node {$t$} ();
  \path (T) edge[loop above] node {$t^{-1}$} ();
\end{fsa},\qquad\mathcal M:\begin{fsa}[baseline]
  \node[state,initial below] (1) at (0,0) {$\one$};
  \node[state] (t) at (2,0) {$t$};
  \node[state] (T) at (-2,0) {$t^{-1}$};
  \path (1) edge[loop above] node {$(s,s)$} ()
            edge node {$(t,\one)$} node[below] {$(\one,t^{-1})$} (t)
            edge node {$(t^{-1},\one)$} node[above] {$(\one,t)$} (T);
\end{fsa}.\]

Simple constructions show that the direct and free products of
automatic groups are again automatic. Finite extensions and
finite-index subgroups of automatic groups are automatic. It is
however still an open problem whether a direct factor of an automatic
group is automatic.

Recall that we glued disks, one for each $g\in G$ and each
$r\in\mathscr R$, to the Cayley graph of a finitely presented group
$G=\langle A\mid\mathscr R\rangle$, so as to obtain a $2$-complex
$\mathscr K$. The \emph{small cancellation
  conditions}\index{small~cancellation} express a
combinatorial form of non-positive curvature of $\mathscr K$: roughly,
$C(p)$ means that every proper edge cycle in $\mathscr K$ has length
$\ge p$, and $T(q)$ means that every proper edge cycle in the dual
$\mathscr K^\vee$ has length $\ge q$; see~\cite{MR0577064}*{Chapter~V}
for details. If $G$ satisfies $C(p)$ and $T(q)$ where
$p^{-1}+q^{-1}\le\frac12$, then $G$ is automatic.

Consider the configurations defined by $n$ strings in
$\R^2\times[0,1]$, with string 
$\#i$ starting at $(i,0,0)$ and ending at $(i,0,1)$; these
configurations are viewed up to isotopy preserving the
endpoints. They can be multiplied (by stacking them
above each other) and inverted (by flipping them up-down), yielding a
group, the \emph{pure braid group}; if the strings are allowed to end
in an arbitrary permutation, one obtains the \emph{braid
  group}\index{group!braid}. This
group $B_n$ is generated by elementary half-twists of strings
$\#i,i+1$ around each other, and admits the presentation
\[B_n=\langle\sigma_1,\dots,\sigma_{n-1}\mid\sigma_i\sigma_{i+1}\sigma_i=\sigma_{i+1}\sigma_i\sigma_{i+1},[\sigma_i,\sigma_j]\text{
  whenever }|i-j|\ge2\rangle.\] More generally, consider a surface
$\mathcal S$ of genus $g$, with $n$ punctures and $b$ boundary
components. The \emph{mapping class group}\index{group!mapping~class}
$M_{g,n,b}$ is the group of
maps $\mathcal S\to\mathcal S$ modulo isotopy, and $B_n$ is the special case
$M_{0,n,1}$ of mapping classes of the $n$-punctured disk. All mapping
class groups $M_{g,n,b}$ are automatic groups~\cite{MR1343324}.

As another generalization of braid groups, consider \emph{Artin
  groups}\index{group!Artin}\index{Artin~group}. 
Let $(m_{ij})$ be a
symmetric $n\times n$-matrix with entries in $\N\cup\{\infty\}$. 
The \emph{Artin group} of type
$(m_{ij})$ is the group with presentation
\[A(m)=\langle s_1,\dots,s_n\mid
(s_is_j)^{\lfloor m_{ij}/2\rfloor}=(s_js_i)^{\lfloor m_{ij}/2\rfloor}\text{ whenever 
}m_{ij}<\infty\rangle.\] 
The corresponding \emph{Coxeter
  group}\index{group!Coxeter}\index{Coxeter~group} has presentation
\[C(m)=\langle s_1,\dots,s_n\mid s_i^2,
(s_is_j)^{m_{ij}/2}=(s_js_i)^{m_{ij}/2}\text{ whenever }m_{ij}<\infty\rangle.\]
An Artin group $A(m)$ has \emph{finite type} if $C(m)$ is finite.
Artin groups of finite type are biautomatic~\cite{MR1157320}.  Coxeter
groups are automatic~\cite{MR1213378}.

Fundamental groups\index{fundamental~group!of~$3$-fold~is~automatic}
of threefolds, except those with a piece modelled on Nil or Sol
geometry~\cite{MR1161694}*{chapter 12}, are
automatic\Thurston[].

\subsection{Properties of automatic groups} The definition of
automatic groups, by automata, has a variety of interesting
consequences. First, automatic groups are finitely presented; more
generally, combable groups are finitely presented:
\begin{proposition}[\cite{MR1230633}]\label{BS2:prop:class}
  Let $G$ be a combable group. Then $G$ has type
  $F_\infty$\index{group!finfinity@$F_\infty$}\index{finfinity@$F_\infty$~group},
namely,
  there exists a contractible cellular complex with free $G$-action
  and finitely many $G$-orbits of cells in each dimension.
\end{proposition}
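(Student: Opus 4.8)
The plan is to build the contractible complex by hand from a combing, using the fellow-traveller property to control how the combing words fit together. First I would fix a combing $g\mapsto w_g$, so that $w_g$ and $w_{gs}$ $k$-fellow-travel for all $g\in G$, $s\in\At$. The $0$-skeleton is $G$ itself (equivalently the Cayley graph $\mathscr C(G,A)$ gives the $1$-skeleton), and $G$ acts freely with one orbit of vertices and $\Card A$ orbits of edges. To build the $2$-skeleton: for each $g\in G$ and each $s\in\At$, the two combing paths $w_g$ and $w_{gs}$ together with the edge $(g,s)$ bound a loop that stays in the $k$-ball of a common ``diagonal''; since the relators of a presentation (automatic groups are finitely presented, by the remarks just before the proposition, or one extracts a finite presentation directly from the fellow-traveller constant) have bounded length, this loop is a product of boundedly many conjugates of relators, each conjugating element having bounded length. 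So I would glue in $2$-cells of uniformly bounded ``size'' to fill these loops; crucially, up to the $G$-action there are only finitely many shapes of such loops (they are determined by the pair of combing words up to the initial segment, and fellow-travelling forces the relevant local data into a finite set), so finitely many $G$-orbits of $2$-cells suffice, and after this step the complex is simply connected.

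The inductive step is the heart of the argument and follows the standard ``combing homotopies push up one dimension'' scheme. Having a combing of the group, one promotes it to a combing of paths: given a loop $\ell$ in the $1$-skeleton, reading it off as $s_1\cdots s_n$ one gets a sequence of group elements $g_0=\one, g_1, \dots, g_n=\one$, hence a sequence of combing lines $w_{g_0},\dots,w_{g_n}$; consecutive ones $k$-fellow-travel, so the ``combing disk'' they sweep out is a homotopy from $\ell$ to the constant path, built of cells of bounded diameter. In dimension $d$, one does the analogous construction one parameter higher: a map of the $d$-sphere into the $d$-skeleton is subdivided finely, each simplex is small (bounded diameter) by the combability/fellow-traveller estimates, and one fills it using a $(d{+}1)$-cell chosen from a finite list indexed by the bounded combinatorial type of the simplex together with the combing data on its boundary. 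One must check that finitely many orbits of $(d{+}1)$-cells suffice at each stage; this is where one repeatedly invokes that the combing is $k$-fellow-travelling with a \emph{single} constant $k$ independent of $g$, so that all the local gluing data lives in a fixed finite set (a set of ``filling patterns'' for loops of bounded length, which exist because the complex built so far is $(d{-}1)$-connected and cocompact). The resulting complex $\mathscr K_\infty=\bigcup_d\mathscr K_d$ has free cocompact $G$-action in every dimension by construction, and is $d$-connected for every $d$, hence contractible by Whitehead's theorem — so $G$ has type $F_\infty$.

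The main obstacle is making precise the claim that ``finitely many orbits of $(d{+}1)$-cells suffice'' at each stage, i.e. that the set of filling patterns one must use is genuinely finite. This requires two ingredients working together: a \emph{uniform} bound (from the combing constant $k$, independent of the basepoint and independent of the dimension) on the diameter of the cells one introduces, and the fact that in a cocompact locally finite complex there are only finitely many isometry types of subcomplexes of bounded diameter. Getting the diameter bound in higher dimensions is the delicate point, because a naive iteration of the combing-disk construction could let the fellow-traveller constant grow with the dimension; the fix is the usual one — one re-combs at each stage, applying the fixed combing of $G$ (not of the previous skeleton) to the $1$-skeleton information, so a single constant $k$ governs all dimensions. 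Once that is in place the rest is bookkeeping: verify $d$-connectivity after adding $(d{+}1)$-cells (the added cells kill exactly $\pi_d$), verify freeness and cocompactness of the action (immediate from the orbit-indexed gluing), and conclude via Whitehead. I would present the argument for $d=1$ (the simply-connected, finitely-presented case) in full, then indicate the inductive step and refer to \cite{MR1230633} for the complete verification.
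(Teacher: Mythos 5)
Your construction is essentially the paper's: both build the complex skeleton by skeleton, gluing $2$-cells along the bounded-length quadrilaterals $u(j)\!-\!v(j)\!-\!v(j+1)\!-\!u(j+1)$ cut out by fellow-travelling combing lines (finitely many orbits since these loops have length $\le 2k+2$), and then iterate the same filling procedure in higher dimensions; the paper's sketch compresses your entire inductive step into the sentence ``the process may be continued with higher-dimensional cells.'' Your elaboration of that step (re-combing to keep the constant uniform, cocompactness plus bounded diameter giving finitely many cell orbits, Whitehead to conclude) is a faithful expansion of the same argument, not a different route.
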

\noindent(Finite presentation is equivalent to ``finitely many $G$-orbits of
cells in dimension $\le2$'').
\begin{proof}[Sketch of proof]
  By assumption, $G$ is finitely generated. Therefore, the Cayley
  graph contains one $G$-orbit of $0$-cells (vertices), and $\Card A$
  orbits of $1$-cells (edges). Consider all pairs of paths $u,v$ in
  the combing that have neigbouring extremities. They
  $k$-fellow-travel by hypothesis; so there are for all $j$ paths
  $w(j)$ of length $\le k$ connecting $u(j)$ to $v(j)$. The closed
  paths $u(j)-v(j)-v(j+1)-u(j+1)-u(j)$ have length $\le2k+2$, so they trace
  finitely many words in $F_A$. Taking them as relators defines a
  finite presentation for $G$. The process may be continued with
  higher-dimensional cells.
\end{proof}

\begin{proposition}\label{BS2:prop:quadraticiso}
  Automatic groups satisfy a \emph{quadratic isoperimetric
    inequality}\index{isoperimetric~inequality}\index{automatic~group!quadratic~isoperimetric~inequality}; that is, for any finite
  presentation $G=\langle
  A\mid\mathscr R\rangle$ there is a constant $k$ such that, if
  $w\in F_A$ is a word evaluating to $\one$ in $G$, then
  \[w=\prod_{i=1}^\ell r_i^{w_i}\text{ for some }r_i\in\mathscr
  R^{\pm1},w_i\in F_A\text{ and }\ell\le k|w|^2.\]
\end{proposition}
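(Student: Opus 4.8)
The plan is to fill $w$ by an explicit van Kampen diagram assembled from the combing lines of an automatic structure with uniqueness on the generating set $A$ of the given presentation, and then to bound the number of its $2$-cells. First I would fix such a structure $(\mathcal L,\mathcal M)$ — possible by the Corollary above together with Lemma~\ref{BS2:lem:uniquerat}, applying the short-lex construction — and write $\sigma_g\in L(\mathcal L)\subseteq\At^*$ for the unique normal form of $g\in G$; as recorded just after that lemma there is then a constant $K$ with $|\sigma_g|\le K\,d(\one,g)$ for all $g$, and $\sigma_\one$ is the empty word. By Proposition~\ref{BS2:prop:autom} there is also a fellow-travelling constant $k$: whenever $d(\pi(u),\pi(v))\le 1$ for $u,v\in L(\mathcal L)$ the paths $u,v$ $k$-fellow-travel, so in particular $d(\sigma_g(t),\sigma_{gs}(t))\le k$ for all $t\ge 0$ and all $s\in\At$.

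Next, write $w=w_1\cdots w_n$ with $\pi(w)=\one$, set $g_j=\pi(w_1\cdots w_j)$ (so that $g_0=g_n=\one$) and $\sigma_j:=\sigma_{g_j}$. Since $g_{j+1}=g_jw_{j+1}$, the paths $\sigma_j$ and $\sigma_{j+1}$ $k$-fellow-travel, so for each $t\ge 0$ I may choose a path $\tau_{j,t}$ of length $\le k$ in $\mathscr C(G,A)$ from $\sigma_j(t)$ to $\sigma_{j+1}(t)$, taking $\tau_{j,t}$ to be the single edge $w_{j+1}$ once $t\ge\max(|\sigma_j|,|\sigma_{j+1}|)$. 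For each $j$ and each $t$ the closed loop that runs from $\sigma_j(t)$ to $\sigma_j(t{+}1)$ along the $(t{+}1)$-st letter of $\sigma_j$ (trivial if $t\ge|\sigma_j|$), then along $\tau_{j,t+1}$ to $\sigma_{j+1}(t{+}1)$, then back to $\sigma_{j+1}(t)$ along the inverse of the $(t{+}1)$-st letter of $\sigma_{j+1}$, then along $\tau_{j,t}^{-1}$ back to $\sigma_j(t)$, has length at most $2k+2$ and, read off as edge labels, is a word of $F_A$ evaluating to $\one$. Since $G$ is finitely presented (Proposition~\ref{BS2:prop:class}) and only finitely many words of length $\le 2k+2$ occur, there is a constant $N=N(\mathscr R,k)$ with each such loop-word a product of at most $N$ conjugates of elements of $\mathscr R^{\pm1}$. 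Gluing these cells, for a fixed $j$, along $t=0,1,2,\dots$ joins the line $\sigma_j$ to the line $\sigma_{j+1}$ through a strip of at most $\max(|\sigma_j|,|\sigma_{j+1}|)$ nontrivial cells; gluing the strips for $j=0,\dots,n-1$ and using that $\sigma_0$ and $\sigma_n$ are both empty assembles a van Kampen diagram whose boundary is $w$.

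Then I would count the cells. As $g_j$ is joined to $\one$ both by $w_1\cdots w_j$ and by the reverse of $w_{j+1}\cdots w_n$, we have $d(\one,g_j)\le\min(j,n-j)$, so $|\sigma_j|\le K\min(j,n-j)\le Kn/2$. Hence strip $j$ contributes at most $Kn/2$ cells, and the whole diagram has at most $n\cdot(Kn/2)\cdot N$ cells; this gives $w=\prod_{i=1}^\ell r_i^{w_i}$ with $r_i\in\mathscr R^{\pm1}$, $w_i\in F_A$ and $\ell\le C|w|^2$ for $C=KN/2$, which is the assertion (with, as it must be, a constant depending on the chosen presentation, here through $N$).

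The step I expect to cost the most care is the bookkeeping around unequal lengths: when $|\sigma_j|\ne|\sigma_{j+1}|$ the cells at large $t$ degenerate to triangles or bigons, but the $k$-fellow-travelling estimate holds for all $t$ up to $\max(|\sigma_j|,|\sigma_{j+1}|)$, with $\sigma(t)$ interpreted as the endpoint once $t$ exceeds the path length, so the perimeter bound $2k+2$ — hence the uniform filling bound $N$ — is unaffected; one merely has to check that consecutive strips, and the first and last strips, glue consistently along the shared combing lines, so that the outer boundary of the assembled disk is precisely the path $g_0\to g_1\to\cdots\to g_n$ labelled by $w$. Beyond this there is no real obstacle: the quadratic bound is forced entirely by the linear estimate $|\sigma_g|\le K\,d(\one,g)$ together with $d(\one,g_j)\le\min(j,n-j)$.
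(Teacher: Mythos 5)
Your proposal is correct and is essentially the paper's own argument, worked out in full: the paper's sketch likewise draws the $n$ combing lines from $\one$ to the prefixes $w(j)$, fills each gap between neighbouring lines with $\mathcal O(n)$ relators of bounded size via fellow-travelling, and concludes with $\mathcal O(n^2)$ relators in total. Your added care with the linear length bound $|\sigma_g|\le K\,d(\one,g)$ and the degenerate cells at unequal lengths is exactly the bookkeeping the paper's sketch suppresses.
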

\begin{proof}[Sketch of proof]
  Write $n=|w|$, and draw the combing lines between $\one$ and
  $w(j)$. There are $n$ combing lines, which have length $\mathcal
  O(n)$; so the gap between neighbouring combing lines can be filled
  by $\mathcal O(n)$ relators. This gives $\mathcal O(n^2)$ relators
  in total.
\end{proof}

Note that being finitely presented is usually of little value as far
as algorithmic questions are concerned: there are finitely presented
groups whose word problem cannot be solved by a Turing
machine~\cites{MR0075197,MR0101267}. By contrast:\index{word~problem}
\begin{proposition}\label{BS2:prop:wpautom}
  The word problem in a group given by an automatic structure is
  solvable in quadratic 
  time\index{automatic~group!word~problem~in~quadratic~time}. A word
  may even be put into canonical form in quadratic time.
\end{proposition}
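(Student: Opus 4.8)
The plan is to turn the automata of an automatic structure into an explicit algorithm. First I would reduce to an automatic structure with uniqueness by Lemma~\ref{BS2:lem:uniquerat}, so that each $g\in G$ has a unique normal form $\overline g\in L(\mathcal L)$. The core subroutine is: given the normal form $\overline g$ of $g$ and a generator $s\in\At$, compute the normal form $\overline{gs}$ of $gs$. To do this I run the multiplication automaton $\mathcal M_s$; I want the unique $v\in L(\mathcal L)$ with $\pi(\overline g)=\pi(vs)$, i.e.\ with $(v,\overline g)\in L(\mathcal M_s)$ after suitable padding (up to swapping the roles, reading $\mathcal M_s$ with $u=\overline{gs}$, $v=\overline g$). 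Since $\mathcal M$ is a finite automaton reading pairs of letters, and since by the fellow-traveller estimate in the proof of Proposition~\ref{BS2:prop:autom} the two normal forms stay within bounded distance of each other, the length of $\overline{gs}$ is at most $|\overline g|+O(1)$; one finds it by a breadth-first search in the product of $\mathcal L$ with $\mathcal M_s$, restricted to the relevant bounded band, which costs time $O(|\overline g|)$ (the constant absorbing the sizes of $\mathcal L$, $\mathcal M$ and the ball $B(k)$). By the remark after Lemma~\ref{BS2:lem:uniquerat}, there is a constant $K$ with $|\overline g|\le K\,d(\one,\pi(g))$, so along the computation below all intermediate normal forms have length $O(n)$.

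Next, given an input word $w=s_1\cdots s_n\in\At^*$, I compute iteratively $\overline{\pi(s_1\cdots s_j)}$ for $j=0,1,\dots,n$: start from the normal form of $\one$, and at step $j$ apply the subroutine above to multiply by $s_j$. Each step costs $O(n)$ since the normal forms have length $O(n)$, and there are $n$ steps, for a total of $O(n^2)$. Finally I test whether $\overline{\pi(w)}$ equals the normal form of $\one$; equivalently, whether $\pi(w)=\one$. This both solves the word problem and, as the proposition's second sentence asserts, produces a canonical form (namely $\overline{\pi(w)}$) in quadratic time. If one does not insist on uniqueness, one can instead accumulate the pair $(w,\text{current normal form})$ and feed it letter by letter to $\mathcal M_\one$, but the uniqueness reduction makes the bookkeeping cleanest.

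The main obstacle is making precise the claim that the normal form of $gs$ can be read off $\mathcal M_s$ in time linear in $|\overline g|$ rather than, say, quadratic: one must argue that the search stays inside a band of bounded width around the diagonal of $L(\mathcal L)\times L(\mathcal M_s)$, so that at each position only boundedly many states are alive. This is exactly where the fellow-traveller property is used: $u(i)$ and $v(i)$ differ by an element of the fixed finite ball $B(2c-1)$, so the reachable states of the product automaton after reading $i$ letters form a set of size $O(1)$, and the whole accepting path is found by a linear scan with $O(1)$ work per position. The remaining points --- that rational languages are closed under the operations used, that padding is bounded by the pumping lemma, and that $|\overline g|=O(d(\one,\pi(g)))$ --- are all supplied by the earlier lemmas and are routine.
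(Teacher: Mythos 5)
Your proposal is correct and follows essentially the same route as the paper's own sketch: reduce to an automatic structure with uniqueness via Lemma~\ref{BS2:lem:uniquerat}, then iteratively compute the normal form of each prefix $s_1\cdots s_j$ by treating $\mathcal M_{s_j}$ as a nondeterministic automaton in one of its two variables, each step costing linear time for a quadratic total. The only difference is that you spell out (via the fellow-traveller bound and the bounded-width band in the product automaton) why each multiplication step is linear, a point the paper's sketch passes over with ``clearly the $w_i$ have linear length in $i$''.
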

\begin{proof}[Sketch of proof]
  We may assume, by Lemma~\ref{BS2:lem:uniquerat}, that every $g\in G$
  admits a unique normal form. Now, given a word $u=a_1\cdots
  a_n\in\Ah^*$, construct the following words: $w_0\in
  L(\mathcal L)$ is the representative of $\one$. Treating $\mathcal M_a$ as a
  non-deterministic automaton in its second variable, find for
  $i=1,\dots,n$ a word $w_i\in\Ah^*$ such that the padding of
  $(w_{i-1},w_i)$ is accepted by $\mathcal M_{a_i}$. Then
  $\pi(u)=\one\in G$ if and only if $w_n=w_0$.

  Clearly the $w_i$ have linear length in $i$, so the total running
  time is quadratic in $n$.
\end{proof}

In general, finitely generated subgroups and quotients of automatic
groups need not be automatic --- they need not even be finitely
presented. A subgroup $H$ of a finitely generated group $G=\langle
A\rangle$ is \emph{quasi-convex}\index{subgroup!quasi-convex} if there
exists a constant $\delta$
such that every $h\in H$ is connected to $\one\in G$ by a geodesic in
$\mathscr C(G,A)$ that remains at distance $\le\delta$ from
$H$. Typical examples are finite-index subgroups, free factors, and
direct factors.

On the other hand, a subgroup $H$ of an automatic group $G$ with
language $L(\mathcal L)$ is \emph{$\mathcal L$-rational} if the full
preimage of $H$ in $L(\mathcal L)$ is rational. The following is easy
but fundamental:
\begin{lemma}[\cite{MR1114609}]
  A subgroup $H$ of an automatic group is quasi-convex if and only if
  it is $\mathcal L$-rational.
\end{lemma}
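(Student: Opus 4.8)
The plan is to run both implications through the intermediate, language-theoretic condition $(\ast)$: \emph{there is a constant $\delta\ge 0$ such that $d(u(j),H)\le\delta$ in $\mathscr C(G,A)$ for every $h\in H$, every normal form $u=\hat h$, and every $0\le j\le|u|$} --- that is, the combing line of each element of $H$ stays uniformly close to $H$. By Lemma~\ref{BS2:lem:uniquerat} I may assume $(\mathcal L,\mathcal M)$ has uniqueness, so that each $g\in G$ has a unique normal form $\hat g\in L(\mathcal L)$, with $|\hat g|\le Kd(\one,g)$ for a constant $K$; let $k$ be the associated fellow-traveller constant, and write $L_H=\{u\in L(\mathcal L)\mid\pi(u)\in H\}$ for the full preimage of $H$. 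The core of the argument is the equivalence: $L_H$ is rational if and only if $(\ast)$ holds; the Lemma then follows once $(\ast)$ is matched with quasi-convexity.

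First I would prove that $(\ast)\Rightarrow L_H$ rational by building an automaton directly. Given $h\in H$ put $u=\hat h$, $n=|u|$, and choose $h^{(j)}\in H$ with $d(u(j),h^{(j)})\le\delta$ and $h^{(0)}=\one$, $h^{(n)}=h$; then each $p_j:=(h^{(j-1)})^{-1}h^{(j)}$ lies in the \emph{finite} set $S:=H\cap B_{G,A}(2\delta+1)$ and $h=p_1\cdots p_n$. Now take the automaton with state set $Q\times B_{G,A}(\delta)$, where $Q$ is the state set of $\mathcal L$: initial state $(*,\one)$ with $*$ the initial state of $\mathcal L$; transition, on reading $a\in\At$, from $(q,\xi)$ to $(q\cdot a,\,a^{-1}\xi p)$ for every $p\in S$ with $a^{-1}\xi p\in B_{G,A}(\delta)$; and $(q,\xi)$ accepting iff $q$ is accepting in $\mathcal L$ and $\xi=\one$. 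Tracking the invariant $\xi_j=(u(j))^{-1}z_j$ along a path $z_0=\one$, $z_j=z_{j-1}p_j$, one checks that this automaton accepts precisely those $u\in L(\mathcal L)$ for which $\pi(u)$ is a product of elements of $S\subseteq H$ realised by a path that $\delta$-tracks the prefixes of $u$: such a path exists for every $u\in L_H$ by the factorisation above, and conversely its existence forces $\pi(u)=z_n\in H$. Since $S$ and $B_{G,A}(\delta)$ are finite, $L_H$ is rational.

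The converse $L_H$ rational $\Rightarrow(\ast)$ is a short reachability argument. If $L_H$ is recognised by an $N$-state automaton, then for $h\in H$, $u=\hat h$, and any $0\le j\le|u|$, after reading the prefix $u(j)$ the automaton sits in a state from which an accepting state is reached by some word $v$ with $|v|<N$; hence $u(j)v\in L_H\subseteq L(\mathcal L)$, so $\pi(u(j)v)\in H$ and $d(u(j),H)\le|v|<N$. This gives $(\ast)$ with $\delta=N$.

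The hard part will be linking $(\ast)$ --- ``$\mathcal L$-quasi-convexity'' --- with quasi-convexity in the stated geodesic form. From $(\ast)$ one easily gets a path $\one\to h$ of length $O(d(\one,h))$ staying within bounded distance of $H$ (concatenate short geodesics between consecutive $h^{(j)}$); promoting this to an actual \emph{geodesic} near $H$, and conversely recovering from a near-$H$ geodesic the fact that the quasi-geodesic $\hat h$ itself stays near $H$, is exactly where the fellow-traveller property of $(\mathcal L,\mathcal M)$ (Proposition~\ref{BS2:prop:autom}, iterated along the geodesic, with $|\hat h|\le Kd(\one,h)$ controlling the reparametrisation) is essential --- a purely metric comparison does not suffice, combing lines being only quasi-geodesics. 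This is also the robust formulation, and the one worked out in~\cite{MR1114609}. The remaining verifications --- that the guessed factors range only over the finite set $S$, that $S$ generates $H$ (which falls out of $h=p_1\cdots p_n$), and the bookkeeping for the empty word $\hat{\one}$ --- are routine.
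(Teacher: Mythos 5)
The paper itself gives no proof of this lemma---it is quoted from \cite{MR1114609}---so your argument has to stand on its own. The part you actually carry out is correct and is the standard argument: the nondeterministic automaton on $Q\times B_{G,A}(\delta)$ that guesses a path in the finite set $S=H\cap B_{G,A}(2\delta+1)$ shadowing the prefixes of $u$ (your invariant $\xi_j=u(j)^{-1}z_j$ checks out), and the reachability bound $d(u(j),H)<N$ read off a trim $N$-state automaton for $L_H$. So you have a complete and correct proof that $L_H$ is rational if and only if your condition $(\ast)$ holds, i.e.\ if and only if the normal forms of elements of $H$ stay uniformly close to $H$.

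The genuine gap is the step you defer to the last paragraph, and it cannot be filled: with the paper's geodesic definition of quasi-convexity (``some geodesic from $\one$ to $h$ stays $\delta$-close to $H$''), the equivalence with $(\ast)$ is simply false, and the fellow-traveller property will not rescue it, because normal forms are only quasi-geodesics and automatic groups satisfy no Morse lemma. Concretely, take $G=\Z^2$ with the automatic structure displayed in the paper, whose normal forms are the words $x^my^n$ (with signs), and let $H=\langle xy\rangle$ be the diagonal. The staircase geodesic $(xy)^n$ from $\one$ to $(n,n)$ stays at distance $\le1$ from $H$, so $H$ is quasi-convex in the geodesic sense with $\delta=1$; but the normal form $x^ny^n$ passes through $(n,0)$, which is at distance $n$ from $H$, so $(\ast)$ fails---and indeed $L_H=\{x^ny^n\mid n\ge0\}\cup\{x^{-n}y^{-n}\mid n\ge0\}$ is not regular, by the pumping lemma. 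The resolution is that ``quasi-convex'' in this lemma must be read as $\mathcal L$-quasi-convex, which is exactly your $(\ast)$ and is how \cite{MR1114609} formulates the result; the geodesic and language-theoretic notions coincide when $\mathcal L$ consists of geodesics (e.g.\ for word-hyperbolic groups with the geodesic combing), but not for a general automatic structure. Read that way, your proof is complete; read literally against the paper's earlier definition of quasi-convexity, the statement itself is the problem, not your automata.
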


It is still unknown whether automatic groups have solvable conjugacy
problem; however, there are asynchronously automatic groups with
unsolvable conjugacy problem, for instance appropriate amalgamated
products of two free groups over finitely generated subgroups. These
groups are asynchronously automatic~\cite{MR1147304}*{Theorem E},
and have unsolvable conjugacy problem~\cite{MR0310044}.
\begin{theorem}[\Gersten-\Short]
  Biautomatic groups have solvable conjugacy
  problem\index{biautomatic~group!conjugacy~problem}.
\end{theorem}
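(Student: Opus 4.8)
The plan is to reduce the conjugacy problem for $G$ to deciding emptiness of a rational language, using crucially the left-multiplier automaton that biautomaticity --- as opposed to mere automaticity --- provides. Fix a biautomatic structure: automata $\mathcal L$, $\mathcal M$ (right multiplier) and $\mathcal N$ (left multiplier) as above; we may assume it has uniqueness (Lemma~\ref{BS2:lem:uniquerat}, together with a short check that passing to a short-lex cross-section preserves the left-multiplier automaton as well), so that each $g\in G$ has a unique normal form in $L(\mathcal L)$ and $\pi:L(\mathcal L)\to G$ is a bijection. The whole argument rests on the trivial reformulation that $g$ is conjugate to $h$ if and only if there is $x\in G$ with $gx=xh$, equivalently, by the bijection above, some $u\in L(\mathcal L)$ with $g\,\pi(u)=\pi(u)\,h$.

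So I would try to show that, given words $g=b_1\cdots b_r$ and $h=c_1\cdots c_t$ over $\At$ (one may take the input words of the conjugacy problem themselves, with no need to compute normal forms first), the set $E=\{u\in L(\mathcal L):g\,\pi(u)=\pi(u)\,h\}$ is rational, with an automaton computable from $g$, $h$ and the fixed biautomatic structure. For a single generator $s\in\At$ the relation ``$w$ is the normal form of $s\,\pi(u)$'' is synchronous, i.e.\ recognised reading $u$ and $w$ letter by letter with at most one symbol of padding, being essentially $\mathcal N$ with an appropriate choice of accepting states; symmetrically for right multiplication using $\mathcal M$. Composing the elementary left-multiplication relations coming from the letters of $g$ yields $R_L=\{(u,w)\in L(\mathcal L)^2:\pi(w)=g\,\pi(u)\}$, and composing those coming from the letters of $h$ yields $R_R=\{(u,w)\in L(\mathcal L)^2:\pi(w)=\pi(u)\,h\}$; each composition increases the length discrepancy by at most $1$, so $R_L$ and $R_R$ are again synchronous (recognisable with a bounded buffer absorbing the bounded lag). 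Synchronous relations are closed under intersection, so $R_L\cap R_R$ is synchronous, and its projection onto the first coordinate is rational; by uniqueness this projection is exactly $E$, since $g\,\pi(u)=\pi(u)\,h$ holds precisely when the two normal forms $w$ witnessing membership in $R_L$ and in $R_R$ coincide. Finally $g$ and $h$ are conjugate iff $E\neq\emptyset$, and emptiness of a rational language presented by an automaton is decidable; this finishes the proof. (If one prefers to avoid invoking uniqueness, one replaces the coincidence of normal forms by one extra synchronous relation, the equality automaton $\mathcal M_\one$, to be intersected in.)

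The step I expect to be the genuine obstacle --- and the one place biautomaticity is indispensable --- is the construction of $R_L$: without a left-multiplier automaton there is no evident way to describe $\{u\in L(\mathcal L):\pi(w)=g\,\pi(u)\}$ as a rational set, which is precisely why the conjugacy problem remains open for merely automatic groups. The secondary point requiring care is the bookkeeping that keeps each relation synchronous through the iterated composition, so that the intersection $R_L\cap R_R$, and hence its projection $E$, stays rational; this is routine but not vacuous, and it is here that one silently uses that the normal-form language is a genuine cross-section (the bounded padding in the multiplier automata). Everything else --- the closure properties of regular languages, deciding emptiness --- is standard.
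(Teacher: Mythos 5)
Your argument is correct and is essentially the paper's own proof: both reduce conjugacy to emptiness of a rational language built by iterating the left- and right-multiplier automata of the biautomatic structure and then intersecting. The only cosmetic difference is that the paper folds both multiplications into one relation $C(x^{-1},y)$ and intersects with the diagonal $\{(w,w)\}$, whereas you keep a middle witness $w$ and intersect the two synchronous relations $R_L$ and $R_R$ before projecting.
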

\begin{proof}[Sketch of proof; see~\cite{MR1117155}]
  Consider two words $x,y\in\At^*$. Using the biautomatic
  structure, the language
  \[C(x,y)=\{(u,v)\in\Ah^*\times\Ah^*\mid u,v\in\mathcal
  L\text{ and }\pi(u)=\pi(xvy)\}\]
  is rational.
  Now $x,y$ are conjugate if and only if $C(x^{-1},y)\cap\{(w,w)\mid
  w\in\mathcal L\}$ is non-empty. The problem of deciding whether a
  rational language is empty is algorithmically solvable.
\end{proof}
In fact, the centralizer of an element of a biautomatic group is a
quasi-convex subgroup, and is thus
biautomatic~\cite{MR1114609} (but we remark that it is still unknown
whether a quasi-convex subgroup of an automatic group is necessarily
automatic). There is therefore a good algorithmic 
description of \emph{all} elements that conjugate $x$ to $y$.

\subsection{Word-hyperbolic groups}\index{word-hyperbolic~group|(}
\Gromov\ \cite{MR919829} introduced the fundamental concept
of ``negative curvature'' to group theory. This goes further in the
direction of viewing groups as metric spaces, through the geodesic
distance on their Cayley graph. The definition is given for
\emph{geodesic} metric spaces, i.e., metric spaces in which any two
points can be joined by a geodesic segment:
\begin{definition}[\cites{MR1086648,MR1170363,MR1075994}]\label{BS2:def:hyp}
  Let $X$ be a geodesic metric space, and let $\delta>0$ be
  given. The space $X$ is \emph{$\delta$-hyperbolic} if, for any three
  points $A,B,C\in X$ and geodesics arcs $a,b,c$ joining them, every
  $P\in a$ is at distance at most $\delta$ from $b\cup c$.

  The space $X$ is \emph{hyperbolic}\index{hyperbolic!space} if it is
  $\delta$-hyperbolic for some $\delta$.  The finitely generated group
  $G=\langle A\rangle$ is
  \emph{word-hyperbolic}\index{word-hyperbolic~group}\index{group!word-hyperbolic}
  if it acts by isometries on a hyperbolic metric space $X$ with
  discrete orbits, finite point stabilizers, and compact quotient
  $X/G$.

  Equivalently, $G$ is word-hyperbolic if and only if $\mathscr
  C(G,A)$ is hyperbolic.
\end{definition}

\Gilman\ \cite{MR1985464} gave a purely automata-theoretic
definition of word-hyperbolic groups: $G$ is word-hyperbolic if and
only if, for some regular combing $\mathcal M\subset\At^*$, the
language $\{u\one v\one w\mid u,v,w\in\mathcal
M,\pi(uvw)=\one\}\subset\Ah^*$ is context-free. Using the geometric
definition, we note immediately the following examples: first, the
hyperbolic plane $\mathbb H^2$ is hyperbolic (with $\delta=\log3$); so
is $\mathbb H^n$. Any discrete, cocompact group of isometries of
$\mathbb H^n$ is word-hyperbolic. This applies in particular to the
surface group $J_g$ from~\eqref{BS2:eq:jg}, if $g\ge2$. Note however
that some word-hyperbolic groups are not small cancellation groups,
for instance because for small cancellation groups the complex in
Proposition~\ref{BS2:prop:class} has trivial homology in dimension
$\ge3$; yet the complex associated with a cocompact group acting on
$\mathbb H^n$ has infinite cyclic homology in degree $n$
(see~\cite{MR2365352} for applications of topology to group theory).

It is also possible to define $\delta$-hyperbolicity for spaces $X$
that are not geodesic (as, e.g., a group):
\begin{definition}
  Let $X$ be a metric space, and let $\delta'>0$ be given. The space
  $X$ is \emph{$\delta'$-hyperbolic} if, for any four points
  $A,B,C,D\in X$, the numbers
  \[\{d(A,B)+d(C,D),d(A,C)+d(B,D),d(A,D)+d(B,C)\}\] are such that the
  largest two differ by at most $\delta'$.
\end{definition}

Word-hyperbolic groups arise naturally in geometry, in the following
way: let $\mathcal M$ be a compact Riemannian manifold with negative
(not necessarily constant) sectional curvature. Then $\pi_1(\mathcal
M)$ is a word-hyperbolic group.

Word-hyperbolic groups are also ``generic'' among finitely-presented
groups, in the following sense: fix a number $k$ of generators, and a
constant $\epsilon\in[0,1]$. For large $N$, there are
$\approx(2k-1)^N$ words of length $\le N$ in $F_k$; choose a subset
$\mathscr R$ of size $\approx(2k-1)^{\epsilon N}$ of them uniformly at
random, and consider the group $G$ with presentation $\langle
A\mid\mathscr R\rangle$.

Then, with probability $\to1$ as $N\to\infty$, the group $G$ is
word-hyperbolic. Furthermore, if $\epsilon<\frac12$, then with probability
$\to1$ the group $G$ is infinite, while if $\epsilon>\frac12$, then
with probability $\to1$ the group $G$ is
trivial~\cite{MR1167524}.

Word-hyperbolic groups provide us with a large number of examples of
automatic groups. Better:
\begin{theorem}[\Gersten-\Short, \Gromov] Let $G$ be a word-hyperbolic
  group. Then $G$ is
  biautomatic\index{word-hyperbolic~group!is~biautomatic}. Moreover,
  the normal form $\mathcal L$ may be chosen to consist of geodesics.
\end{theorem}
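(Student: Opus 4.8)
The plan is to establish that a word-hyperbolic group $G=\langle A\rangle$ admits a biautomatic structure whose normal forms are geodesics, by taking $\mathcal L$ to accept a suitable set of geodesic words and verifying the fellow-traveller property via thin triangles. First I would invoke the geometric characterization of automaticity from Proposition~\ref{BS2:prop:autom}: it suffices to exhibit a rational language $L\subseteq\At^*$ mapping onto $G$ together with a fellow-travelling constant. For the language, I would fix the short-lex ordering on $\At^*$ and let $L$ be the set of short-lex-least geodesic representatives of elements of $G$; this is automatically a rational cross-section by a standard pumping/local-to-global argument, provided one first knows that the \emph{full} set of geodesic words in a hyperbolic group is rational, which follows because in a $\delta$-hyperbolic Cayley graph whether a word is geodesic can be detected by checking a bounded-length window (the "no local shortcut" condition), so a finite automaton remembering the last $O(\delta)$ letters suffices.

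The core geometric input is the \emph{thin triangles} hypothesis of Definition~\ref{BS2:def:hyp}. Given $u,v\in L$ with $d(\pi(u),\pi(v))\le1$, form the geodesic triangle with vertices $\one$, $\pi(u)$, $\pi(v)$, whose third side is a single edge. By $\delta$-thinness every point of the side $u$ lies within $\delta$ of the union of the other two sides, and since the third side has length $\le1$, in fact $u(j)$ lies within $\delta+1$ of the geodesic $v$. A short synchronization argument — geodesics that stay Hausdorff-close in a hyperbolic space also fellow-travel, because they cannot "double back" without creating a short-cut — upgrades this to $d(u(j),v(j))\le k$ for a uniform $k=k(\delta)$. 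This is exactly the hypothesis of Proposition~\ref{BS2:prop:autom}, so $G$ is automatic with a geodesic normal form.

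For \textbf{biautomaticity} I would run the mirror-image construction to obtain the third automaton $\mathcal N$: since $L$ consists of geodesics and $\pi(u)=\pi(sv)$ forces $\pi(s^{-1}u)=\pi(v)$, the left-translated triangle with vertices $\one$, $s^{-1}$, $\pi(v)$ is again $\delta$-thin, giving the left-fellow-traveller property $d(s^{-1}u(j),v(j))\le k'$; the set of pairs $(u,v)$ with $\pi(u)=\pi(sv)$ is then recognized by an automaton with state set $Q\times Q\times B(k')$ exactly as in the proof of Proposition~\ref{BS2:prop:autom}. Because $L$ is the \emph{same} geodesic language on both sides, the resulting structure is biautomatic. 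Alternatively one can quote \Gilman's context-free characterization mentioned just above and feed it into a general mechanism, but the triangle argument is more self-contained.

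The main obstacle is the step showing that geodesic words (or the short-lex-least ones) form a \emph{rational} language: thinness of triangles gives the fellow-traveller property quite directly, but rationality requires the "local geodesic $\Rightarrow$ global geodesic" principle, i.e.\ that a word with no short subword that can be shortened is in fact geodesic. Proving this needs a quantitative divergence estimate in $\delta$-hyperbolic spaces (a quasi-geodesic that is locally geodesic stays uniformly close to a true geodesic), and getting the constants to depend only on $\delta$ and $\Card A$ — so that the window size, and hence the automaton, is finite — is the delicate part. Everything else is bookkeeping with the machinery of Proposition~\ref{BS2:prop:autom} and Lemma~\ref{BS2:lem:uniquerat}.
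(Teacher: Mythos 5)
Your thin-triangle argument for the fellow-traveller property and your mirror-image construction of the third automaton are sound, and they flesh out what the paper merely asserts. The genuine gap sits exactly at the step you flag as delicate, and the repair you propose for it does not work: it is false that geodesity in a $\delta$-hyperbolic group is detected by a bounded-length window, and false that a word with no shortenable subword of bounded length is geodesic. The correct local-to-global statement is only that a $k$-local geodesic with $k\gg\delta$ is a \emph{quasi}-geodesic, with multiplicative constant strictly greater than $1$, so its defect over a true geodesic can grow linearly in its length. Concretely, a piecewise-geodesic word with many gentle corners separated by long straight stretches can be geodesic on every window of length $k$ while the small defects at the corners accumulate until the whole word is no longer geodesic. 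The automaton you describe, remembering only the last $O(\delta)$ letters, therefore accepts a regular language that in general strictly contains the geodesics, and the short-lex cross-section built from it need not consist of geodesics. (Local testability of the geodesic language is in fact a strong restriction on a group; see the Hermiller--Holt--Rees paper \cite{MR2440717} already cited in the bibliography.)

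The paper closes this step by Cannon's theorem instead: a word-hyperbolic group has only finitely many cone types \eqref{BS2:eq:cone}, and whether a geodesic word $w$ extends geodesically by a letter $s$ depends only on the cone type of $\pi(w)$. Hence the language of geodesics is recognized by an automaton whose states are the cone types --- a state records unboundedly much about the prefix read so far, not merely its last few letters. Finiteness of cone types is itself proved from $\delta$-thinness: the cone type of $g$ is determined by the restriction, to a ball of radius comparable to $\delta$ around $g$, of the function recording which nearby vertices are closer to the origin. An alternative repair in the spirit of your attempt is the falsification-by-fellow-traveller property, where the state is a set of word differences in a ball of radius $k$; either way the state space cannot consist of bounded suffixes. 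Once rationality of the geodesic language is secured by one of these means, the rest of your argument goes through.
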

\noindent Even better, the automatic structure is, in some precise sense,
unique~\cite{MR2151425}.

\begin{proof}[Sketch of proof]
  In a $\delta$-word-hyperbolic group $G$, geodesics
  $(2\delta+1)$-fellow-travel. On the other hand, $G$ has a finite
  number of cone types~\eqref{BS2:eq:cone}\index{cone~type}, so the
  language of geodesics is rational, recognized by an automaton with
  as many states as there are cone types.
\end{proof}

Hyperbolic spaces $X$ have a natural \emph{hyperbolic
  boundary}\index{hyperbolic!boundary}
$\partial X$: fix a point $x_0\in X$, and consider
\emph{quasi-geodesics at $x_0$}\index{quasi-geodesic}, namely
quasi-isometric embeddings
$\gamma:\N\to X$ starting at $x_0$. Declare two such quasi-geodesics
$\gamma,\delta$ to be equivalent if $d(\gamma(n),\delta(n))$ is
bounded. The set of equivalence classes, with its natural topology, is
the boundary $\partial X$ of $X$. The fundamental tool in studying
hyperbolic spaces is the following
\begin{lemma}[\Morse]
  Let $X$ be a hyperbolic space and let $C$ be a constant. There is
  then a constant $D$ such that all $C$-quasi-geodesics between two
  points $x,y\in X$ are at distance at most $D$ from one another.
\end{lemma}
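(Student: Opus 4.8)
The plan is to establish the stronger statement that every $C$-quasi-geodesic from $x$ to $y$ stays within a bounded Hausdorff distance of a fixed geodesic segment $[x,y]$; the lemma then follows by the triangle inequality. First I would reduce to a \emph{tame} (continuous, rectifiable) quasi-geodesic, by replacing $\gamma$ with the path $\bar\gamma$ that interpolates geodesically between $\gamma(i)$ and $\gamma(i+1)$: this moves the image by at most $C$, worsens the quasi-geodesy constant by a controlled factor to some $C'$, and gives each interpolating segment length $O(C)$, so that $\bar\gamma\colon[a,b]\to X$ has a well-behaved length function with $\bar\gamma(a)=x$, $\bar\gamma(b)=y$. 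The one substantive geometric input will be a \emph{logarithmic divergence estimate}: a rectifiable path $\sigma$ from $p$ to $q$ of length $\ell\ge1$ passes within $\delta\log_2\ell+1$ of every point of $[p,q]$. I would prove this by cutting $\sigma$ into $2^N$ sub-arcs of length $\le1$ with $N=\lceil\log_2\ell\rceil$, spanning the subdivision points by a dyadic tree of geodesics, and invoking thinness of triangles $N$ times to descend from $[p,q]$ to one of the short spanning geodesics.

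The core step is to show that $[x,y]$ lies in a bounded neighbourhood of the image of $\bar\gamma$. Let $D_0$ be the maximal distance from a point of $[x,y]$ to that image, attained at $p_0$; I want to bound $D_0$. Pick $y_1,y_2\in[x,y]$ on the two sides of $p_0$ at distance $2D_0$ (or equal to the relevant endpoint, when that endpoint is nearer), and points $z_i$ on $\bar\gamma$ with $d(y_i,z_i)\le D_0$ (with $z_i=y_i$ in the truncated case). Form the path $\sigma$ running from $y_1$ to $z_1$ along a geodesic, from $z_1$ to $z_2$ \emph{along} $\bar\gamma$, and from $z_2$ to $y_2$ along a geodesic. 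By design every point of $\sigma$ lies within $D_0$ of the image of $\bar\gamma$; and $d(z_1,z_2)\le6D_0$ forces the middle portion, hence all of $\sigma$, to have length $O(D_0)$, since $\bar\gamma$ is a $C'$-quasi-geodesic. Now $\sigma$ joins the endpoints of the sub-geodesic $[y_1,y_2]$ through $p_0$, so the logarithmic estimate gives $w\in\sigma$ with $d(p_0,w)\le\delta\log_2(\mathrm{length}\,\sigma)+1$. A point $w$ on the middle portion lies on $\bar\gamma$, whence $d(p_0,w)\ge D_0$; a point $w$ on a flanking geodesic satisfies $d(p_0,w)\ge d(p_0,y_i)-d(y_i,w)\ge2D_0-D_0=D_0$ (a flank has positive length only when $d(p_0,y_i)=2D_0$). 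Either way $D_0\le d(p_0,w)\le\delta\log_2(\mathrm{length}\,\sigma)+1$, and since $\mathrm{length}\,\sigma=O(D_0)$ this self-improving inequality pins $D_0$ to a constant $D_0^{*}=D_0^{*}(C,\delta)$.

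For the reverse inclusion — the image of $\bar\gamma$ lies in a bounded neighbourhood of $[x,y]$ — suppose $\bar\gamma(t)$ is at distance $>D_0^{*}$ from $[x,y]$, and let $(t_1,t_2)\ni t$ be the maximal parameter interval on which $\bar\gamma$ stays more than $D_0^{*}$ from $[x,y]$; by continuity $\bar\gamma(t_1),\bar\gamma(t_2)$ project to points $u,v\in[x,y]$ within $D_0^{*}$. By the first inclusion every point of the sub-geodesic $[u,v]$ is within $D_0^{*}$ of the image of $\bar\gamma$, and — since $\bar\gamma((t_1,t_2))$ is too far from $[x,y]$ to be responsible — in fact within $D_0^{*}$ of $\bar\gamma([a,t_1])$ or of $\bar\gamma([t_2,b])$; these two closed conditions cover the connected set $[u,v]$, so they meet, yielding parameters $s\le t_1<t_2\le s'$ with $d(\bar\gamma(s),\bar\gamma(s'))\le2D_0^{*}$. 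The lower quasi-geodesic bound then caps $s'-s$, hence $t_2-t_1$, hence the length of the excursion $\bar\gamma([t_1,t_2])$, hence $d(\bar\gamma(t),[x,y])$. Combining both inclusions with the initial reductions produces the required $D$.

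\textbf{Expected main obstacle.} The delicate point is the core step, and within it the construction of $\sigma$: it must be simultaneously \emph{everywhere close} to $\bar\gamma$ — so that its proximity to $p_0$ transfers back to $\bar\gamma$ — and \emph{short}, of length linear in $D_0$, so that the logarithmic estimate returns a bound $D_0\lesssim\log D_0$ that self-improves. Extracting both properties from a single path, by leaving $[x,y]$ at exactly the scale $2D_0$ and exploiting the quasi-geodesic inequality over a window of that size, is where the argument genuinely bites; the rest is bookkeeping with constants.
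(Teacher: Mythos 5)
The paper does not prove this lemma at all: it is a survey, and the Morse lemma is quoted as ``the fundamental tool in studying hyperbolic spaces'' with no argument attached, so there is no in-paper proof to compare yours against. Judged on its own, your proposal is a correct and essentially complete account of the standard proof (the one in Bridson--Haefliger, III.H.1.7, or Ghys--de la Harpe): the taming reduction, the logarithmic divergence estimate via dyadic subdivision and $N$-fold application of the thin-triangles condition, the self-improving inequality $D_0\le\delta\log_2(k_1D_0+k_2)+O(1)$ obtained from a path that is simultaneously short and avoids the ball $B(p_0,D_0)$, and the connectedness argument bounding excursions for the reverse inclusion are exactly the right ingredients, and you handle the genuinely delicate points (the degenerate flanks when $y_i$ is an endpoint, and the fact that $\bar\gamma((t_1,t_2))$ cannot witness the first inclusion on $[u,v]$) correctly. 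Two cosmetic remarks: the divergence estimate with $N=\lceil\log_2\ell\rceil$ subdivisions gives $\delta\lceil\log_2\ell\rceil+1\le\delta\log_2\ell+\delta+1$ rather than $\delta\log_2\ell+1$, which changes nothing; and one should note once that the sub-arc $[y_1,y_2]$ of $[x,y]$ is itself a geodesic and that the estimate applies to any chosen geodesic between its endpoints, since geodesics need not be unique in $X$.
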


The hyperbolic boundary $\partial X$ is compact, under appropriate
conditions satisfied e.g.\ by $X=\mathscr C(G,A)$, and $X\cup\partial
X$ is a compactification of $X$. Now, in that case, the automaton
$\mathcal L$ provides a symbolic coding of $\partial X$ as a finitely
presented shift space (where the shift action is the ``geodesic
flow'', following one step along infinite paths $\in\Ah^\infty$
representing quasi-geodesics).

We note that, for word-hyperbolic groups, the word and conjugacy
problem admit extremely efficient solutions: they are both solvable in
linear time by a Turing machine. The word problem is actually solvable
in real time, namely with a bounded amount of calculation allowed
between inputs~\cite{MR1758286}\index{word~problem}. The isomorphism
problem\index{isomorphism~problem} is decidable for word-hyperbolic
groups, say given by a finite presentation~\cite{1002.2590}
\nindex{Dahmani, Fran\c cois}\nindex{Guirardel, Vincent}.
Word-hyperbolic groups also satisfy a linear isoperimetric inequality,
in the sense that every $w\in F_A$ that evaluates to $\one$ in $G$ is
a product of $\mathcal O(|w|)$ conjugates of relators. Better:
\begin{proposition}
  A finitely presented group is word-hyperbolic if and only if it
  satisfies a linear isoperimetric
  inequality\index{isoperimetric~inequality}\index{word-hyperbolic~group!linear~isoperimetric~inequality},
  if and only if it
  satisfies a subquadratic isoperimetric inequality.
\end{proposition}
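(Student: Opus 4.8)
The plan is to run through the cycle of implications: word-hyperbolic $\Rightarrow$ linear isoperimetric inequality $\Rightarrow$ subquadratic isoperimetric inequality $\Rightarrow$ word-hyperbolic, from which the full triple equivalence follows. The middle arrow is free, since a linear bound is \emph{a fortiori} subquadratic; the substance lies in the other two arrows, and the last of them I would break further as subquadratic $\Rightarrow$ linear $\Rightarrow$ word-hyperbolic.

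For word-hyperbolic $\Rightarrow$ linear, I would first upgrade the thin-triangles hypothesis to a \emph{Dehn presentation}: in a $\delta$-hyperbolic group there is a finite set $\mathscr R_0$ of relators — concretely, all words of length $\le 8\delta+1$ that evaluate to $\one$ — such that every non-empty freely reduced word $w$ with $\pi(w)=\one$ contains as a subword more than half of some cyclic conjugate of an element of $\mathscr R_0^{\pm1}$. The mechanism is the standard one: viewing $w$ as a closed loop in $\mathscr C(G,A)$, join one of its vertices to the others by geodesics; consecutive geodesics $\delta$-fellow-travel, so as one walks along $w$ the loop is forced to ``double back'' close to one of them, producing a bounded-length subword of $w$ that is not geodesic, i.e.\ the long part of a Dehn relator. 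Granting the Dehn presentation, a word $w$ of length $n$ representing $\one$ is reduced to the empty word by repeatedly replacing the long half $v_1$ of a relator $v_1v_2$ by $v_2^{-1}$: each move strictly shortens $w$ and consumes exactly one conjugate of an element of $\mathscr R_0^{\pm1}$, so $w$ is a product of at most $n$ such conjugates, which is a linear isoperimetric inequality for $\langle A\mid\mathscr R_0\rangle$. Replacing $\mathscr R_0$ by an arbitrary finite presentation $\langle A\mid\mathscr R\rangle$ merely changes the constant, since each element of $\mathscr R_0$ is a product of a bounded number of $\mathscr R^{\pm1}$-conjugates.

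For subquadratic $\Rightarrow$ word-hyperbolic I would proceed in two stages. Stage one is the \emph{isoperimetric gap theorem}: if the Dehn function of $G$ is $o(n^2)$ — equivalently, $f(n)\le n^2/C$ for all large $n$ with $C$ as large as one wishes — then $f$ is bounded by a linear function. Here I would follow \Gromov~\cite{MR919829}, in the streamlined forms later worked out by Bowditch, Ol'shanskii and Papasoglu: in a minimal-area van Kampen diagram $D$ over a word of length $n$ one uses the subquadratic bound to find a simple diagonal arc of length $<n/2$ cutting $D$ into two sub-diagrams, and iterates this subdivision; the resulting recursion on areas would only be quadratic at the critical constant, and the strict subquadratic hypothesis collapses it to a linear estimate. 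Stage two is linear $\Rightarrow$ word-hyperbolic: from a linear isoperimetric inequality one recovers the thin-triangles condition on $\mathscr C(G,A)$. Given a geodesic triangle, fill the loop it bounds by a minimal van Kampen diagram; if a point on one side had distance $R$ from the union of the other two sides, one could isolate a subdiagram of perimeter $O(R)$ whose interior needs $\gg R$ cells unless $R$ is bounded, contradicting linearity once $R$ is large. This gives a uniform slimness constant $\delta$, hence hyperbolicity.

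The main obstacle is the gap theorem, subquadratic $\Rightarrow$ linear: unlike the other steps it is genuinely global — it admits no reduction to a local Dehn-algorithm statement — and it rests on a delicate quantitative control of how minimal van Kampen diagrams decompose, together with the correct choice of the threshold constant $C$ above; for its technical core I would cite the literature rather than reproduce the argument. The implication linear $\Rightarrow$ word-hyperbolic is the second most delicate point, since extracting the metric conclusion (thin triangles) from the combinatorial hypothesis (linear filling) again requires an iterative estimate on diagrams rather than a one-line argument.
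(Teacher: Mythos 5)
The paper states this proposition without any proof, so there is no in-house argument to measure you against; judged on its own terms, your outline is the standard one and is sound. The direction word-hyperbolic $\Rightarrow$ linear via a Dehn presentation (a finite set of relators of length bounded in terms of $\delta$, every nontrivial reduced word representing $\one$ containing more than half of one of them, each Dehn move strictly shortening the word at the cost of one conjugate of a relator) is the classical argument and dovetails with the paper's earlier discussion of Dehn's algorithm for the surface groups $J_g$; the middle implication is indeed free; and for the converse you correctly isolate the one genuinely hard ingredient, the subquadratic gap of Gromov, Ol'shanskii, Papasoglu and Bowditch, which cannot be reduced to a local statement and which it is reasonable to cite rather than reprove --- the paper itself cites nothing for it. Two minor remarks. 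First, most published proofs of the gap run the diagram-subdivision (or divider) argument so as to produce thin bigons or triangles directly from the subquadratic hypothesis, i.e.\ they prove subquadratic $\Rightarrow$ hyperbolic in one stroke and then recover linearity of the Dehn function from hyperbolicity; your ordering (subquadratic $\Rightarrow$ linear, then linear $\Rightarrow$ hyperbolic) is logically fine, but the ``stage one'' you describe is usually organized so that your ``stage two'' comes for free. Second, the specific bound $8\delta+1$ on the lengths of the Dehn relators is convention-dependent and immaterial; what matters, and what you do address, is that the linear bound obtained for the special presentation $\langle A\mid\mathscr R_0\rangle$ transfers to an arbitrary finite presentation because changing the finite presentation alters the isoperimetric function only up to the usual affine equivalence. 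I see no gap.
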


Note that the generalized word problem is known to be unsolvable
\cite{MR642423}, but the order problem is on the other hand solvable
in word-hyperbolic 
groups~\cite{MR1776048}.\index{generalized~word~problem}\index{order~problem}
\index{word-hyperbolic~group|)} It follows that the
generalized word problem is unsolvable for automatic groups as
well. \index{automatic~group|)} 

There are important weakenings of the definition of word-hyperbolic
groups; we mention two.  A \emph{bicombing}\index{bicombing} is a
choice, for every pair of vertices $g,h\in\mathscr C(G,A)$, of a path
$\ell_{g,h}$ from $g$ to $h$. Since $G$ acts by left-translation on
$\mathscr C(G,A)$, it also acts on bicombings. A bicombing satisfies
the \emph{$k$-fellow-traveller property} if for any neighbours $x'$ of
$x$ and $y'$ of $y$, the paths $\ell_{x,y}$ and $\ell_{x',y'}$
$k$-fellow-travel.

A \emph{semi-hyperbolic group}\index{group!semi-hyperbolic} is a group
admitting an invariant bicombing by fellow-travelling
words. See~\cite{MR1744486}, or the older paper~\cite{MR1300841}.  In
particular, biautomatic, and therefore word-hyperbolic, groups are
semi-hyperbolic.

Semi-hyperbolic groups are finitely presented and have solvable word
and conjugacy problems. In fact, they even have the ``monotone
conjugation property'', namely, if $g$ and $h$ are conjugate, then
there exists a word $w$ with $g^{\pi(w)}=h$ and
$|g^{\pi(w(i))}|\le\max\{|g|,|h|\}$ for all $i\in\{0,\dots,|w|\}$.

A group $G$ is \emph{relatively
  hyperbolic}\index{group!relatively~hyperbolic} \cite{MR1650094}
if it acts properly
discontinuously on a hyperbolic space $X$, preserving a family
$\mathcal H$ of separated horoballs, such that $(X\setminus\mathcal
H)/G$ is compact. All fundamental groups of finite-volume negatively
curved manifolds are relatively
hyperbolic.\index{fundamental~group!of~negatively~curved~manifold}

A non-closed manifold has ``cusps'', going off to infinity, whose
interpretation in the fundamental group are conjugacy classes of loops
that may be homotoped arbitrarily far into the
cusp. Farb~\cite{MR1650094} captures combinatorially the notion of
relative hyperbolicity as follows: let $\mathscr H$ be a family of
subgroups of a finitely generated group $G=\langle A\rangle$. Modify
the Cayley graph of $G$ as follows: for each coset $gH$ of a subgroup
$H\in\mathscr H$, add a vertex $gH$, and connect it by an edge to
every $gh\in\mathscr C(G,A)$, for all $h\in H$. In addition, require
that every edge in $\widehat{\mathscr C(G,A)}$ belong to only finitely
many simple loops of any given length.  The group $G$ is \emph{weakly
  relatively hyperbolic}, relative to the family $\mathscr H$, if that
modified Cayley graph $\widehat{\mathscr C(G,A)}$ is a hyperbolic
metric space.

By virtue of its geometric characterization, being word-hyperbolic is
a geometric property in the sense of Definition~\ref{BS2:def:geom}
(though beware that being hyperbolic is preserved under quasi-isometry
only if the metric spaces are geodesic). Being combable and being
bicombable are also geometric. 

We finally remark that a notion of word-hyperbolicity has been defined
for semigroups~\cites{MR2023798,MR2055042}; the definition uses
context-free languages. As for automatic (semi)groups, the theory does
not seem uniform enough to warrant a simultaneous treatment of groups
and semigroups; again, there is no clear geometric counterpart to the
definition of word-hyperbolic semigroups --- except in particular
cases, such as monoids defined through special confluent rewriting
systems~\cite{MR2536187}.
\index{word-hyperbolic~group|)}

\subsection{Non-automatic groups} All known examples of non-automatic
groups arise as groups violating some interesting consequence of
automaticity.

First, infinitely presented groups cannot be automatic. There are
uncountably many finitely generated groups, and only countably many
finitely presented groups; therefore automatic groups should be
thought of as the rationals among the real numbers.

Groups with unsolvable word problem cannot be automatic.

If a nilpotent group\index{group!nilpotent}\index{nilpotent~group} is
automatic, then it contains an abelian
subgroup of finite index~\cite{MR1703070}; therefore, for instance,
the discrete Heisenberg group\index{group!Heisenberg}\index{Heisenberg~group}
\begin{equation}\label{BS2:eq:heis}
  G=\begin{pmatrix}1&\Z&\Z\\0&1&\Z\\0&0&1\end{pmatrix}
  =\langle x,y\mid [x,[x,y]],[y,[x,y]]\rangle
\end{equation}
is not automatic. Note also that $G$ satisfies a cubic, but no
quadratic, isoperimetric inequality.

Many solvable groups have larger-than-quadratic isoperimetric
functions; they therefore cannot be automatic~\cite{MR1877218}. This
applies in particular to the Baumslag-Solitar
groups\index{group!Baumslag-Solitar}\index{Baumslag-Solitar~group}
\begin{equation}\label{BS2:eq:bs}
  BS_{1,n}=\langle a,t\mid a^n=a^t\rangle.
\end{equation}

Similarly, $\operatorname{SL}_n(\Z)$, for $n\ge 3$, or
$\operatorname{SL}_n(\mathcal O)$ for $n\ge2$, where $\mathcal O$ are
the integers in an imaginary number field, are not automatic.

Infinite, finitely generated torsion groups cannot be automatic: they
cannot admit a rational normal form, because of the pumping lemma. We
will see examples, due to \Grigorchuk\index{Grigorchuk~group} and
\Gupta-\Sidki\index{Gupta-Sidki~group}, in~\S\chapterref{BS2:sec:examples}.

There are combable groups that are not automatic~\cite{MR2016694}, for instance
\[G=\langle a_i,b_i,t_i,s\mid
t_1a_1=t_2a_2,[a_i,s]=[a_i,t_i]=[b_i,s]=[b_i,t_i]=\one\quad(i=1,2)\rangle,\]
which satisfies only a cubic isoperimetric inequality. Finitely presented
subgroups of automatic groups need not be automatic~\cite{MR1489138}.

The following group is asynchronously automatic, but is not automatic:
it does not satisfy a quadratic isoperimetric
inequality~\cite{MR1147304}*{\S11}:
\[G=\langle a,b,t,u\mid a^t=ab,b^t=a,a^u=ab,b^u=a\rangle.\]

\section{Groups generated by automata}\label{BS2:sec:2}\index{automata~group|(}
We now turn to another important class of groups related to
finite-state automata. These groups act by permutations on a set $A^*$
of words, and these permutations are represented by \emph{Mealy
  automata}\index{Mealy~automaton}.  These are deterministic, initial
finite-state transducers $\mealy$ with input and output alphabet $A$,
that are complete with respect to input; in other words,
\begin{equation}\label{BS2:eq:mealy}
  \text{At every state and for each $a\in A$,
    there is a unique outgoing edge with input $a$.}
\end{equation}

The automaton $\mealy$ defines a transformation of $A^*$, which
extends to a transformation of $A^\omega$, as follows. Given
$w=a_1a_2\cdots\in A^*\cup A^\omega$, there is by~\eqref{BS2:eq:mealy} a
unique path in $\mealy$ starting at the initial state and with
input labels $w$. The image of $w$ under the transformation is the
output label along that same path.
\begin{definition}
  A map $f:A^*\to A^*$ is
  \emph{automatic}\index{transformation!automatic}\index{mapping!automatic}
  if $f$ is produced by a
  finite-state automaton as above.
\end{definition}

One may forget the initial state of $\mealy$, and consider the set of
all transformations corresponding to all choices of initial state of
$\mealy$; the \emph{semigroup of the
  automaton}\index{semigroup!automaton@of~an~automaton} $S(\mealy)$ is
the semigroup generated by all these transformations. It is closely
connected to \Krohn-\Rhodes\ Theory~\cite{MR0175718}. Its relevance to
group theory was seen during \Glushkov's seminar on
automata~\cite{MR0138529}.

The automaton $\mealy$ is \emph{invertible} if furthermore it is
complete with respect to output; namely,
\begin{equation}\label{BS2:eq:invmealy}
  \text{At every state and for each $a\in A$,
    there is a unique outgoing edge with output $a$;}
\end{equation}
the corresponding transformation of $A^*\cup A^\omega$ is then
invertible; and the set of such permutations,
for all choices of initial state, generate the \emph{group of the
  automaton}\index{group!automaton@of~an~automaton} $G(\mealy)$. Note
that $S(\mealy)$ may be a proper subsemigroup of $G(\mealy)$, even if
$\mealy$ is \emph{invertible}.  General references on groups generated
by automata are~\cites{MR2162164,MR1841755,MR2035113}.

As our first, fundamental example, consider the automaton with
alphabet $A=\{0,1\}$
\begin{equation}\label{BS2:eq:adding}
  \mathcal T:\begin{fsa}[baseline]
    \node[state] (t)              {$t$};
    \node[state] (e) [right of=t] {$\one$};
    \path (t) edge[loop left] node {$1|0$} (t)
              edge node {$0|1$} (e)
          (e) edge[right,in=30,out=60,loop] node {$0|0$} (e)
              edge[right,in=300,out=330,loop] node {$1|1$} (e);
  \end{fsa}
\end{equation}
in which the input $i$ and output $o$ of an edge are represented as
`$i|o$'. The transformation associated with state $\one$ is clearly
the identity transformation, since any path starting from $\one$ is a
loop with same input and output. Consider now the transformation
$t$. One has, for instance, $t\cdot 111001=000101$, with the path
consisting of three loops at $t$, the edge to $\one$, and two loops at
$\one$. In particular, $G(\mathcal T)=\langle t\rangle$. We will see
in~\S\chapterref{BS2:sec:rev} that it is infinite cyclic.

\begin{lemma}\label{BS2:lem:prodautom}
  The product of two automatic transformations is automatic. The
  inverse of an invertible automatic transformation is automatic.
\end{lemma}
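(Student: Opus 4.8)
The plan is to prove both assertions by the standard constructions on Mealy automata; the only point requiring attention is the careful use of the completeness conditions~\eqref{BS2:eq:mealy} and~\eqref{BS2:eq:invmealy}. Throughout I fix the convention that transformations compose in the order in which they are applied to a word.

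\emph{Product.} Let $f_1,f_2$ be produced by finite Mealy automata $\mealy_1,\mealy_2$ with state sets $Q_1,Q_2$ and initial states $q_1,q_2$; by~\eqref{BS2:eq:mealy}, from each state $p$ of $\mealy_i$ and each $a\in A$ there is exactly one edge, which I write $p\xrightarrow{a|b}p'$. I would build $\mealy$ with state set $Q_1\times Q_2$, initial state $(q_1,q_2)$, input and output alphabet $A$, and an edge $(p,q)\xrightarrow{a|c}(p',q')$ precisely when $\mealy_1$ has $p\xrightarrow{a|b}p'$ and $\mealy_2$ has $q\xrightarrow{b|c}q'$. Completeness of $\mealy_1$ determines $b$ and $p'$ from $(p,a)$, and then completeness of $\mealy_2$ determines $c$ and $q'$ from $(q,b)$, so $\mealy$ again satisfies~\eqref{BS2:eq:mealy} and hence defines a transformation of $A^*$ (and of $A^\omega$). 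Running $w=a_1a_2\cdots$ along the unique path of $\mealy$ from $(q_1,q_2)$ threads $w$ through $\mealy_1$, producing $f_1(w)$, and threads $f_1(w)$ through $\mealy_2$, producing $f_2(f_1(w))$; thus $\mealy$ computes $f_2\circ f_1$, so the composite of two automatic transformations is automatic.

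\emph{Inverse.} Let $f$ be produced by an invertible Mealy automaton $\mealy$, so that~\eqref{BS2:eq:invmealy} also holds. I would define $\mealy'$ by exchanging the input and output labels of every edge: $p\xrightarrow{a|b}p'$ in $\mealy$ becomes $p\xrightarrow{b|a}p'$ in $\mealy'$, with the same states and the same initial state. Condition~\eqref{BS2:eq:invmealy} for $\mealy$ is literally condition~\eqref{BS2:eq:mealy} for $\mealy'$, so $\mealy'$ is a bona fide deterministic, input-complete Mealy automaton. For $v\in A^*\cup A^\omega$, the unique path of $\mealy'$ from the initial state carrying input $v$ is, edge by edge, the unique path of $\mealy$ from the initial state carrying \emph{output} $v$ (uniqueness again from~\eqref{BS2:eq:invmealy}); the input word of that path read in $\mealy$ is the word $w$ with $f(w)=v$, i.e.\ $w=f^{-1}(v)$. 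Hence $\mealy'$ outputs $f^{-1}(v)$ on input $v$, so $\mealy'$ computes $f^{-1}$.

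I do not expect a genuine obstacle here: the real content is just that a product of input-complete automata is input-complete, and that invertibility of $\mealy$ is exactly the hypothesis making the label-swap yield an input-complete automaton. The only things to keep straight are the composition order (so that it agrees with the acting convention fixed in \S\ref{BS2:sec:2}) and the fact that Mealy automata preserve word length, which guarantees that the transformations in question restrict to bijections of $A^*$ whenever that is needed.
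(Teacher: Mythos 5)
Your proposal is correct and follows essentially the same route as the paper: the product is handled by the product automaton on $Q_1\times Q_2$ with the first machine's output fed as input to the second, and the inverse by swapping input and output labels, with condition~\eqref{BS2:eq:invmealy} for $\mealy$ becoming condition~\eqref{BS2:eq:mealy} for the swapped automaton. The paper's proof is just a terser version of this, phrased in the $q\cdot i=o\cdot r$ notation.
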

The proof becomes transparent once we introduce a good notation. If in
an automaton $\mealy$ there is a transition from state $q$ to
state $r$, with input $i$ and output $o$, we write
\begin{equation}\label{BS2:eq:biset}
  q\cdot i=o\cdot r.
\end{equation}
In effect, if the state set of $\mealy$ is $Q$, we are encoding $\mealy$
by a function $\tau:Q\times A\to A\times Q$. It then follows
from~\eqref{BS2:eq:mealy} that, given $q\in Q$ and $v=a_1\cdots a_n\in
A^*$, there are unique $w=b_1\cdots b_n\in A^*,r\in Q$ such that
$q\cdot a_1\cdots a_n=b_1\cdots b_n\cdot r$. The image of $v$ under the
transformation $q$ is $w$. We have in fact extended naturally the
function $\tau$ to a function $\tau:Q\times A^*\to A^*\times Q$.

\begin{proof}[Proof of Lemma~\ref{BS2:lem:prodautom}]
  Given $\mealy,\mathcal N$ initial automata with state sets $Q,R$
  respectively, consider the automaton $\mathcal{MN}$ with state set
  $Q\times R$ and transitions defined by $(q,r)\cdot i=q\cdot(r\cdot
  i)=o\cdot(q',r')$. If $q_0,r_0$ \ifnazi are\else be\fi\
  the initial states of $\mealy,\mathcal N$, then the transformation
  $q_0\circ r_0$ is the transformation corresponding to state
  $(q_0,r_0)$ in $\mathcal{MN}$.

  Similarly, if $q_0$ induces an invertible transformation, consider
  the automaton $\mealy^{-1}$ with state set $\{q^{-1}\mid q\in
  Q\}$ and transitions defined by $q^{-1}\cdot o=i\cdot r^{-1}$
  whenever~\eqref{BS2:eq:biset} holds. The transformation induced by
  $q_0^{-1}$ is the inverse of $q_0$.
\end{proof}
This construction applies naturally to any composition of finitely
many automatic transformations. In case they all arise from the same
machine $\mealy$, we \emph{de facto} extend the function $\tau$ describing 
$\mealy$ to a function $\tau:Q^*\times A^*\to A^*\times Q^*$,
and (if $\mealy$ is invertible) to a function $\tau:F_Q\times A^*\to
A^*\times F_Q$. It projects to a function $\tau:S(\mealy)\times A^*\to
A^*\times S(\mealy)$, and, if $\mealy$ is invertible, to a function
$\tau:G(\mealy)\times A^*\to A^*\times G(\mealy)$.

Note that a function $G(\mealy)\times A\to A\times G(\mealy)$
naturally gives a function, still written $\tau:G(\mealy)\to
G(\mealy)^A\rtimes\operatorname{Sym}(A)$; this is the semidirect
product of functions $A\to G(\mealy)$ by the symmetric group of $A$
(acting by permutation of \ifnazi coordinates\else co\"ordinates\fi),
and is commonly called the \emph{wreath product}\index{wreath~product}
$G(\mealy)\wr\operatorname{Sym}(A)$, see also Chapter~16.

This wreath product decomposition also inspires a convenient
description of the function $\tau$ by a
\emph{matrix embedding}\label{BS2:matrix}\index{matrix~embedding};
the size and shape of the matrix is determined by the
permutation of $A$, and the nonzero entries by the elements in
$G(\mealy)^A$; more precisely, assume $A=\{1,\dots,d\}$, and, for
$\tau(q)=((s_1,\dots,s_d),\pi) \in
G(\mealy)^A\rtimes\operatorname{Sym}(A)$, write $\tau'(q)=$ the
permutation 
matrix with $s_i$ at position $(i,i\pi)$.  Then these matrices
multiply as wreath product elements. More algebraically, we have
defined a homomomorphism $\tau':\Bbbk G\to M_d(\Bbbk G)$, where $\Bbbk
G$ is the group ring of $G$ over the field $\Bbbk$. Such an embedding
defines an algebra acting on the linear span of $A^*$; this algebra
has important properties, studied in~\cite{MR1423285} for
Gupta-Sidki's\index{Gupta-Sidki~group} example and in~\cite{MR2254535}
for Grigorchuk's\index{Grigorchuk~group} example.

The action of $g\in G(\mealy)$ may be described as follows: given a
sequence $u=a_1\cdots a_n$, compute $\tau(g,u)=(w,h)$. Then $g\cdot
u=w$; and the image of $g\cdot(uv)=w(h\cdot v)$; that is, the action
of $g$ on sequences starting by $u$ is defined by an element $h\in
G(\mealy)$ acting on the tail of the sequence. More geometrically, we
can picture $A^*$ as an infinite tree. The action of $g$ carries the
subtree $uA^*$ to $wA^*$, and, within $uA^*$ naturally identified with
$A^*$, acts by the element $h$. For that reason, $G(\mealy)$ is called
a \emph{self-similar group}\index{group!self-similar}.

The formalism expressing a Mealy machine as a map $\tau:Q\times A\to
A\times Q$ is completely symmetric with respect to $A$ and $Q$; the
\emph{dual}\index{Mealy~automaton!dual} of the automaton $\mealy$ is
the automaton $\mealy^\vee$
with state set $A$, alphabet $Q$, and
transitions given by $i\cdot q=r\cdot o$ whenever~\eqref{BS2:eq:biset}
holds. For example, the dual of~\eqref{BS2:eq:adding} is
\begin{equation}\label{BS2:eq:adding*}
  \mathcal T^\vee:\begin{fsa}[baseline]
    \node[state] (0)              {$0$};
    \node[state] (1) [right of=0] {$1$};
    \path (0) edge[loop left] node {$\one|\one$} (0)
              edge[bend left] node {$t|\one$} (1)
          (1) edge[bend left] node {$t|t$} (0)
              edge[loop right] node {$\one|\one$} (1);
  \end{fsa}
\end{equation}

In case the dual $\mealy^\vee$ of the automaton $\mealy$ is itself
an invertible automaton, $\mealy$ is called
\emph{reversible}\index{Mealy~automaton!reversible}\index{reversible~Mealy~automaton}. If
$\mealy$, $\mealy^\vee$ and $(\mealy^{-1})^\vee$ are all invertible, then
$\mealy$ is \emph{bireversible}\index{Mealy~automaton!bireversible}\index{bireversible~Mealy~automaton}; it
then has eight associated
automata, obtained through all combinations of $()^{-1}$ and $()^\vee$.

Note that $\mealy^\vee$ naturally encodes the action of $S(\mealy)$ on
$A$: it is a graph with vertex set $A$, and an edge, with (input)
label $q$, from $a$ to $q\cdot a$. More generally, $(\mealy^n)^\vee$
encodes the action of $S(\mealy)$ on the set $A^n$ of words of length
$n$.

More generally, we will consider subgroups of $G(\mealy)$, namely
subgroups generated by a subset of the states of an automaton; we call
these groups \emph{automata
  groups}\index{automata~group}\index{group!automata}. This is a large
class of groups, which contains in particular finitely generated
linear groups, see Theorem~\ref{BS2:thm:linear} below
or~\cite{MR1492064}. The elements of automata groups are, strictly
speaking, automatic permutations of $A^*$. It is often convenient to
identify them with a corresponding automaton, for instance constructed
as a power of the original Mealy automaton (keeping in mind the
construction for the composition of automatic transformations), with
appropriate initial state.

\begin{theorem}[\Brunner-\Sidki]\label{BS2:thm:linear}
  The affine group\index{group!affine} $\Z^n\rtimes\GL_n(\Z)$ is an automata
  group for each $n$.
\end{theorem}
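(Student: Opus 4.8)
The plan is to exhibit an explicit invertible Mealy automaton whose group of states, together with their inverses, generates a group isomorphic to $\Z^n\rtimes\GL_n(\Z)$. The natural model is suggested by the ``adding machine'' $\mathcal T$ of~\eqref{BS2:eq:adding}: over the alphabet $A=\{0,1\}$, the single nontrivial state $t$ realizes the map ``add $1$ with carry'' on binary strings, i.e. the generator of $\Z$ acting on its profinite completion $\Z_2$. The key observation is that this works verbatim for $\Z^n$: take $A=\{0,1\}^n$, so that $A^*$ is naturally a subset of $(\Z_2^n)$ read digit-vector by digit-vector; then translation by any vector $v\in\Z^n$ is computed by an automaton with state set indexed by the possible ``carry vectors'' in $\Z^n$, and each such transformation is automatic because the carry is bounded while $v$ is fixed. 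This realizes the translation subgroup $\Z^n$.

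The harder ingredient is the linear part $\GL_n(\Z)$. Here I would use the fact (referenced in the excerpt via~\cite{MR1492064} and Theorem~\ref{BS2:thm:linear} itself) that multiplication by an integer matrix $M$ acting on $\Z_2^n=\varprojlim(\Z/2^k)^n$ is again ``carry-bounded'': writing an input string as $a_1a_2\cdots$ with $a_i\in\{0,1\}^n$, the product $M\cdot(a_1+2a_2+\cdots)$ is produced digit-vector by digit-vector with a carry lying in a finite set (of size controlled by $\|M\|_1$), so the map is automatic. If $M\in\GL_n(\Z)$ then $M$ is invertible over $\Z_2^n$ (indeed $\det M=\pm1$), and by Lemma~\ref{BS2:lem:prodautom} the inverse transformation is automatic as well; hence each such $M$ is an element of $G(\mealy)$ for a suitable Mealy automaton. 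One then checks that these automatic permutations, for $M$ ranging over a finite generating set of $\GL_n(\Z)$ (e.g. elementary matrices together with a permutation-and-sign matrix), together with the $n$ unit-translation automata, can all be packaged as states of one single big invertible Mealy automaton over $A=\{0,1\}^n$ — take the disjoint union of the individual automata and close under~\eqref{BS2:eq:mealy}. The subgroup of $G(\mealy)$ they generate is then visibly a quotient of $\Z^n\rtimes\GL_n(\Z)$.

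**The main obstacle** is to verify that this quotient is faithful, i.e. that the automata group we built is not a proper quotient of the affine group. For this I would argue that the action of $\Z^n\rtimes\GL_n(\Z)$ on $A^*$ extends to a faithful action on $A^\omega=(\Z_2)^n$ (the profinite version), since an affine transformation $x\mapsto Mx+v$ with $(M,v)\ne(\one,0)$ cannot fix all of $\Z_2^n$: if $v\ne0$ it moves $0$ (after clearing denominators — but here $v\in\Z^n$ so $v\ne 0$ already suffices since the map fixes $0$ only when $v=0$), and if $v=0$ but $M\ne\one$ then $M$ moves some standard basis vector, all of which lie in $\Z^n\subset\Z_2^n$. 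Because the transformations coming from distinct affine maps already differ on $A^\omega$, the homomorphism from $\Z^n\rtimes\GL_n(\Z)$ into the group of automatic permutations is injective, so it is an isomorphism onto the automata group generated by the chosen states. Everything else — the automaticity of each individual generator, the wreath-product bookkeeping for carries, and assembling the generators into one machine — is routine given Lemma~\ref{BS2:lem:prodautom} and the carry-boundedness estimate, so the crux is precisely this faithfulness check and a clean description of the carry state set.
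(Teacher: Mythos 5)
Your construction is correct and is essentially the paper's own proof: the paper deduces the theorem from the general Theorem~\ref{BS2:thm:affine} on affine groups of modules, instantiated with $R=\Z$, $M=\Z^n$, $N=2M$, $\varphi(2m)=m$ and transversal $A=\{0,1\}^n$, which is exactly your digit-vector-with-bounded-carry automaton (the norm contraction $\|\varphi(n)\|\le\tfrac12\|n\|$ is your carry-boundedness, and faithfulness via $\bigcap_k 2^k\Z^n=0$ is your faithful action on $\Z_2^n$). The only difference is packaging: the paper proves the statement ``in more generality'' and specializes, whereas you carry out the special case directly.
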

This will be proven in more generality in~\S\chapterref{BS2:sec:rev}.

We mention some closure properties of automata groups. Clearly a
direct product of automata groups is an automata group (take the
direct product of the alphabets). A more subtle operation, called
\emph{tree-wreathing} in~\cites{MR1942271,MR2197828}, gives wreath
products with $\Z$.

A more general class of groups has also been considered, and is
relevant to~\S\chapterref{BS2:sec:img}: \emph{functionally recursive
  groups}\index{group!functionally~recursive}. Let $A$ denote a finite
alphabet, $Q$ a finite set, and
$F=F_Q$ the free group on $Q$. The ``automaton'' now is given by a set
of rules of the form
\[q\cdot a=b\cdot r\] for all $q\in Q,a\in A$, where $b\in A$ and
$r\in F$. In effect, in the dual $\mealy^\vee$ we are allowing
arbitrary words over $Q$ as output symbols.

\subsection{Main examples}\label{BS2:sec:examples}
Automata groups gained significance when simple examples of finitely
generated, infinite torsion groups, and groups of intermediate
word-growth, were discovered. \Aleshin\ \cite{MR0301107} studied
the
automaton\index{Al\"eshin~group}\index{group!Al\"eshin}~\eqref{BS2:eq:aleshin},
and showed that $\langle A,B\rangle$ is an infinite torsion
group. Another of his examples is described
in~\S\chapterref{BS2:sec:bireversible}.

\Grigorchuk~\cites{MR565099,MR784354,MR781246,MR764305,MR712546}
simplified Al\"eshin's example as follows: let $\mathcal A$ be
obtained from the Al\"eshin automaton by removing the gray states; the
state set of $\mathcal A$ is $\{\one,a,b,c,d\}$. He showed that
$G(\mathcal A)$, which is known as the \emph{Grigorchuk
  group},\index{Grigorchuk~group}\index{group!Grigorchuk} is an
infinite torsion group; see Theorem~\ref{BS2:thm:burnside}. In fact,
$G(\mathcal A)$ and $\langle A,B\rangle$ have isomorphic finite-index
subgroups.

\Gupta\ and \Sidki~\cites{MR759409,MR696534} constructed for all prime
$p$ an infinite, $p$-torsion group; their construction, for $p=3$, is
the automata group $G(\mathcal
G)$\index{Gupta-Sidki~group}\index{group!Gupta-Sidki}\index{group!p-@$p$-}\index{p-group@$p$-group}
generated by the automaton~\eqref{BS2:eq:GS}.

All invertible automata with at most three states and two alphabet
letters have been listed in~\cite{MR2432182}; here are some important
examples.

The 
affine group $BS_{1,3}=\{z\mapsto
3^pz+q/3^r\mid p,q,r\in\Z\}$\index{group!Baumslag-Solitar}\index{Baumslag-Solitar~group},
see~\eqref{BS2:eq:bs} is a linear group, and 
an automata
group by Theorem~\ref{BS2:thm:affine}; see also~\cite{MR2216703}. It
is generated by the automaton~\eqref{BS2:eq:BS13}.

As another important example, consider the lamplighter
group\index{group!lamplighter}\index{lamplighter~group}\index{wreath~product}
\begin{equation}
  G=(\Z/2)^{(\Z)}\rtimes\Z=\langle a,t\mid a^2,[a,a^{t^n}]\text{ for
    all }n\in\Z\rangle.
\end{equation}
It is an automata group~\cite{MR1866850}, embedded as the set of maps
\[\{z\mapsto (t+1)^pz+q\mid p\in\Z,q\in\mathbb F_2[t+1,(t+1)^{-1}]\}\]
in the affine group of $\mathbb F_2[[t]]$. It is generated by the
automaton $\mathcal L$ in~\eqref{BS2:eq:lamplighter}.

\begin{equation}\label{BS2:eq:aleshin}
  \begin{fsa}[baseline]
    \node at (-0.6,1.1) {$\mathcal A$};
    \node[state] (b) at (-4,2) {$b$};
    \node[state] (c) at (-1.6,2) {$c$};
    \node[state] (d) at (0.8,2) {$d$};
    \node[state] (a) at (-2.8,0) {$a$};
    \node[state] (e) at (0,0) {$\one$};
    \node[state,gray] (A) at (-3.7,-2.1) {$A$};
    \node[state,gray] (1d) at (1.1,-1.9) {};
    \node[state,gray] (B) at (-1.4,-1.8) {$B$};
    \path (a) edge[-implies,double] node {$0|1,1|0$} (e)
          (b) edge node {$1|1$} (c) edge node {$0|0$} (a)
          (c) edge node {$1|1$} (d) edge node[near start] {$0|0$} (a)
          (d) edge[bend right] node[above] {$1|1$} (b) edge node {$0|0$} (e)
          (e) edge[loop right] node {$i|i$} (e)
          (A) edge[bend left,gray] node {$0|0$} (b) edge[bend right,gray] node[below] {$1|1$} (1d)
          (B) edge[gray] node[near start] {$0|1$} (e) edge[gray] node {$1|0$} (a)
          (1d) edge[gray] node {$0|0$} (e) edge[bend right=40,gray] node[right] {$1|1$} (d);
  \end{fsa}
\end{equation}

\begin{equation}\label{BS2:eq:GS}
  \mathcal G:\begin{fsa}[baseline]
    \node[state] (t) at (-2.8,0) {$t$};
    \node[state] (T) at (2.8,0) {$t^{-1}$};
    \node[state] (a) at (0,1.6) {$a$};
    \node[state] (A) at (0,-1.6) {$a^{-1}$};
    \node[state] (e) at (0,0) {$\one$};
    \path (t) edge[bend left=10] node {$0|0$} (a)
              edge[bend right=10] node[below] {$1|1$} (A)
              edge[loop left] node {$2|2$} (t)
          (T) edge[bend left=10] node {$0|0$} (A)
              edge[bend right=10] node[above] {$1|1$} (a)
              edge[loop right] node {$2|2$} (T)
          (a) edge[-implies,double] node {$i|i+1$} (e)
          (A) edge[-implies,double] node[right] {$i|i-1$} (e)
          (e) edge[loop right] node {$i|i$} (e);
\end{fsa}
\end{equation}

\begin{equation}\label{BS2:eq:BS13}
  \begin{fsa}[baseline]
  \node[state,ellipse,inner sep=0mm] (x) {$3z$};
  \node[state,ellipse,inner sep=0mm] (y) [right of=x] {$3z+1$};
  \node[state,ellipse,inner sep=0mm] (z) [right of=y] {$3z+2$};
  \path (x) edge[loop left] node {$0|0$} (x) edge[bend left] node {$1|1$} (y)
  (y) edge[bend left] node {$0|1$} (x) edge[bend left] node {$1|0$} (z)
  (z) edge[loop right] node {$1|1$} (z) edge[bend left] node {$0|0$} (y);
\end{fsa}
\end{equation}

\begin{equation}\label{BS2:eq:lamplighter}
  \mathcal L:\begin{fsa}[baseline]
    \node[state,ellipse,inner sep=-1mm] (t)              {$(t+1)z$};
    \node[state,ellipse,inner sep=-1mm] (u) [right of=t] {$(t+1)z+1$};
    \path (t) edge[bend left] node {$1|0$} (u)
              edge[loop left] node {$0|0$} (t)
          (u) edge[loop right] node {$1|1$} (u)
              edge[bend left] node {$0|1$} (t);
  \end{fsa}
\end{equation}

The Basilica group\index{Basilica~group}\index{group!Basilica},
see~\cites{MR1902367,MR2176547}, will appear
again in~\S\chapterref{BS2:sec:img}. It is generated by the
automaton~\eqref{BS2:eq:bas}.

\begin{equation}\label{BS2:eq:bas}
  \mathcal B:\begin{fsa}[baseline]
    \node[state] (a) at (-2.7,1.3) {$a$};
    \node[state] (b) at (-2.7,-1.3) {$b$};
    \node[state] (e) at (0,0) {$\one$};
    \path (a) edge node {$0|1$} (e) edge[bend left] node {$1|0$} (b)
          (b) edge node[below] {$0|0$} (e) edge[bend left] node {$1|1$} (a)
          (e) edge[loop right] node {$0|0,1|1$} (e);
  \end{fsa}
\end{equation}

There are (unpublished) lists by \Sushchansky\ \emph{et al.} of all
(not necessarily invertible) automata with $\le3$ states, on a binary
alphabet; there are more than 2000 such automata; the invertible ones
are listed in~\cite{MR2432182}.

How about groups that are \emph{not} automata groups? Groups with
unsolvable word problem (or more generally whose word problem cannot
be solved in exponential time, see~\S\chapterref{BS2:sec:automprob}), and
groups that are not residually finite (or more generally that are not
residually (finite with composition factors of bounded order)) among
the simplest examples. In fact, it is difficult to come up with any
other ones.

\subsection{Decision problems}\label{BS2:sec:automprob}
One virtue of automata groups is that elements may easily be compared,
since (Mealy) automata admit a unique minimized form, which
furthermore may efficiently be computed in time $\mathcal O(\Card
A\Card Q\log\Card Q)$, see~\cites{MR0403320,MR1795249}.
\begin{proposition}
  Let $G$ be an automata group. Then the word problem is solvable in
  $G$\index{word~problem!in~automata~groups}\index{automata~group!word~problem},
  in at worst exponential time.
\end{proposition}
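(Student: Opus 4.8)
Write $G=G(\mealy)$, where $\mealy$ is an invertible Mealy automaton with state set $Q$ and alphabet $A$; replacing $\mealy$ by $\mealy\sqcup\mealy^{-1}$, we may assume that the generators of $G$ lie in $Q$ and that $Q$ is closed under $q\mapsto q^{-1}$. An element of $G$ is then presented as a word $w=p_1\cdots p_n$ over $Q$, and, since $G(\mealy)$ is \emph{by definition} a group of permutations of $A^*$, the assertion ``$w=\one$ in $G$'' literally means that $w$ acts as the identity on $A^*$. The plan is to exploit the product-automaton construction: iterating Lemma~\ref{BS2:lem:prodautom} extends $\tau$ to a map $Q^*\times A^*\to A^*\times Q^*$, and the state $s_0=(p_1,\dots,p_n)$ of the corresponding product automaton induces precisely the transformation $w$ --- reading $v\in A^*$ from $s_0$ outputs $w\cdot v$ and arrives at a state that again lies in $Q^n$.

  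First I would run a breadth-first search from $s_0$, keeping the set $S\subseteq Q^n$ of states encountered; at a state $s$ and a letter $a$ one computes $\tau(s,a)=(b,s')$ from $n$ transitions of $\mealy$, exactly as in the proof of Lemma~\ref{BS2:lem:prodautom}. This halts because $\Card S\le(\Card Q)^n$, and it records every transition $s\cdot a=b\cdot s'$ with $s\in S$. Then I would test whether \emph{all} of these satisfy $b=a$. The point is that $w=\one$ if and only if this test succeeds: if it does, then reading any $v$ from $s_0$ reproduces $v$, so $w$ is the identity; conversely, if $s_0$ induces the identity and $s'$ is reached from $s_0$ by reading $u$, then $s_0\cdot(uv)=u\,(s'\cdot v)=uv$ forces $s'$ to induce the identity too, hence every reachable transition is label-preserving. (One could equally well feed the reachable product automaton to the minimization algorithm of~\cites{MR0403320,MR1795249}, which runs in time $\mathcal O(\Card A\,m\log m)$ on $m$ states, and test whether the canonical minimized machine is the one-state automaton carrying a loop $a|a$ for each $a\in A$; because the minimized form is unique, this variant in fact decides equality of \emph{any} two given elements of $G$.)

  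For the running time: $\mealy$ is fixed, so $\Card Q$ and $\Card A$ are constants, and the search visits at most $(\Card Q)^n$ states, spending $\mathcal O(\Card A)$ time per state together with polynomial overhead to store and compare length-$n$ tuples (the optional minimization contributes only a further factor polynomial in $(\Card Q)^n$). Hence the whole procedure terminates in time $2^{\mathcal O(|w|)}$.

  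There is no real obstacle in the argument itself; the only point worth flagging is that this method does not improve on the exponential estimate of the statement in general, the reachable part of the product automaton being unavoidably large for suitable $\mealy$. Better (e.g.\ polynomial-time) solutions of the word problem are known under additional hypotheses --- for instance when $\mealy$ is a bounded automaton, or $G$ is contracting --- but these require genuinely different arguments.
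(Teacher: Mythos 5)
Your proof is correct and follows essentially the same route as the paper's: form the product automaton for $w=p_1\cdots p_n$ (so at most $(\Card Q)^n$ reachable states), and test whether every transition reachable from the initial state has matching input and output letters, which takes exponential time. The only cosmetic differences are that the paper first minimizes an automaton $\mealy_q$ for each generator before multiplying, and leaves implicit the correctness argument for the reachability test that you spell out.
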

\begin{proof}
  Let $Q$ be a generating set for $G$, and for each $q\in Q$ compute
  the minimal automaton $\mealy_q$ representing $q$. Let $C$ be an
  upper bound for the number of states of any $\mealy_q$.

  Now given a word $w=q_1\cdots q_n\in(Q\sqcup Q^{-1})^*$, multiply the
  automata $\mealy_{q_1},\dots,\mealy_{q_n}$. The result is an
  automaton with $\le C^n$ states. Then $w$ is trivial if and only if
  all states to which the initial state leads have identical input and
  output symbols.
\end{proof}

It is unknown if the conjugacy or generalized word
problem\index{Grigorchuk~group!generalized~word~problem} are
solvable in general; though this is known in particular cases, such as
the Grigorchuk group $G(\mathcal A)$,
see~\cites{MR1687204,MR1687212,MR2009443}. The conjugacy problem is
solvable\index{Grigorchuk~group!conjugacy~problem} as soon as
$G(\mathcal A)$ is \emph{conjugacy separable}\index{group!conjugacy~separable},
namely, for $g,h$ non-conjugate in $G(\mathcal A)$ there exists a
finite quotient of $G(\mathcal A)$ in which their images are
non-conjugate. Indeed automata groups are recursively presented and
residually finite.

It is also unknown whether the order problem is solvable in arbitrary
automata groups; but this is known for particular cases, such as
bounded automata groups, see~\S\chapterref{BS2:sec:bounded}.

\Nekrashevych's limit space (see Theorem~\ref{BS2:thm:limit}) may
sometimes be used to prove that two contracting, self-similar groups
are non-isomorphic: By~\cite{MR2011117}, some groups admit essentially
only one weakly branch self-similar action\index{weakly~branch~group};
if the group is also contracting\index{contracting~automaton/group},
then its limit space is an isomorphism invariant.

On the other hand, in the more general class of
functionally recursive groups, the very solvability of the word
problem remains so far an open problem.

\subsection{Bounded and contracting automata}\label{BS2:sec:bounded}
As we saw in~\S\chapterref{BS2:sec:automprob}, it may be useful to note, and
use, additional properties of automata groups.
\begin{definition}
  An automaton $\mealy$ is \emph{bounded}\index{automaton!bounded} if
  the function which to
  $n\in\N$ associates the number of paths of length $n$ in $\mealy$
  that do not end at the identity state is a bounded function. A group
  is \emph{bounded}\index{group!bounded}\index{bounded~group}
if its
  elements are bounded automata; or,
  equivalently, if it is generated by bounded automata.
\end{definition}
More generally, \Sidki\ considered automata for which that
function is bounded by a polynomial; see~\cite{MR1774362}. He showed
in~\cite{MR2112674} that such groups cannot contain non-abelian free
subgroups.

\begin{definition}
  An automaton $\mealy$ is
  \emph{nuclear}\index{Mealy~automaton!nuclear}\index{automata~group!nuclear}
  if the set of
  recurrent
  states of $\mealy\mealy$ spans an automaton isomorphic to $\mealy$;
  and, for invertible $\mealy$, if additionally
  $\mealy=\mealy^{-1}$. Recall that a state is \emph{recurrent} if it
  is the endpoint of arbitrarily long paths.

  An invertible automaton $\mealy$ is
  \emph{contracting}\index{Mealy~automaton!contracting}\index{automata~group!contracting}\index{contracting~automaton/group}
  if $G(\mealy)=G(\mathcal N)$ for a (necessarily unique) nuclear
  automaton $\mathcal N$. The
  \emph{nucleus}\index{nucleus~of~a~Mealy~automaton} of $G(\mealy)$ is
  then $\mathcal N$.
\end{definition}
For example, the automata~(\ref{BS2:eq:aleshin},\ref{BS2:eq:GS})
are nuclear; the automata~(\ref{BS2:eq:adding},\ref{BS2:eq:bas}) are
contracting, with nucleus $\{1,t,t^{-1}\}$ and
$\{1,a^{\pm1},b^{\pm1},b^{-1}a,a^{-1}b\}$; the
automaton~\eqref{BS2:eq:lamplighter} is not contracting.

If $\mealy$ is contracting, then for every $g\in G(\mealy)$ there is a
constant $K$ such that (in the automaton describing $g$) all paths of
length $\ge K$ end at a state in $\mealy$. It also implies that there
are constants $L,m$ and $\lambda<1$ such that, for the word metric
$\|\cdot\|$ on $G(\mealy)$, whenever one has $g\cdot a_1\cdots
a_m=b_1\cdots b_m\cdot h$ with $h,g\in G(\mealy)$, one has
$\|h\|\le\lambda\|g\|+L$.

\begin{proposition}[\cite{MR2162164}*{Theorem~3.9.12}]
  Finitely generated bounded groups are
  contracting\index{bounded~group!is~contracting}.
\end{proposition}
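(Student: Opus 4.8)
The plan is to establish the quantitative contraction estimate — the existence of $m\in\N$, $L\ge0$ and $\lambda<1$ with $\|g|_v\|\le\lambda\|g\|+L$ for all $g\in G$ and all $v\in A^m$ — and then read a finite nucleus off it. First I would reduce to $G=G(\mathcal M)$ for a single bounded invertible Mealy automaton $\mathcal M$ with alphabet $A$, state set $Q\ni\one$, and $\mathcal M=\mathcal M^{-1}$: merging the $\one$-states of a finite family of bounded automata yields a bounded automaton, since a long path avoiding $\one$ stays inside one member; the inverse of a bounded automaton is bounded, as at each level it has the same number of paths avoiding $\one$; and $\mathcal M$ is automatically closed under taking sections, each section of a state being again a state of $\mathcal M$. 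Granting the estimate, iterating it along $v\in A^{km}$ gives $\|g|_v\|\le\lambda^k\|g\|+L/(1-\lambda)$, so every $g\in G$ has all of its sufficiently deep sections inside the fixed ball $B_G(2L/(1-\lambda))$; the set $\mathcal N$ of elements occurring as $g|_v$ for a single $g$ and words $v$ of unboundedly growing length is therefore finite, closed under sections and inversion, and a direct check that the recurrent states of $\mathcal N\mathcal N$ span $\mathcal N$ — together with the standard fact that a self-similar group satisfying such an estimate is generated by its nucleus — exhibits $\mathcal N$ as a nuclear automaton with $G(\mathcal N)=G(\mathcal M)$, i.e.\ $\mathcal M$ contracting.

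The engine for the estimate is Sidki's structure theory of bounded automata~\cite{MR1774362}. The non-identity states of $\mathcal M$ split into \emph{finitary} states, whose iterated sections all vanish by a fixed depth $d$, and states lying on, or on a directed path in $\mathcal M\setminus\{\one\}$ into, one of finitely many disjoint cycles; boundedness is precisely the requirement that no directed path in $\mathcal M\setminus\{\one\}$ joins two distinct cycles — so a path leaving a cycle can only run into a finitary region — and it gives a uniform bound $\#\{v\in A^k:q|_v\ne\one\}\le C$ for all $q\in Q$, $k\ge0$. Choosing $m$ to be a large common multiple of the cycle lengths, exceeding both $d$ and the length of the longest simple path in $\mathcal M\setminus\{\one\}$, makes each set $S_q:=\{v\in A^m:q|_v\ne\one\}$ (for non-finitary $q$) small: it consists of a word along which $q$ runs into and then around a cycle, together with a few short ``exit'' variants. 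Now write a geodesic word $g=q_1\cdots q_n$, so $n=\|g\|$, and a word $v\in A^m$; then $g|_v=\prod_{i=1}^n q_i|_{u_i}$ with $u_i=(q_{i+1}\cdots q_n)(v)$ and $u_{i-1}=q_i(u_i)$, every factor a generator or $\one$, so $\|g|_v\|$ is at most the number of indices $i$ with $u_i\in S_{q_i}$; since $m>d$, only non-finitary $q_i$ can contribute. It remains to bound how often the trajectory $u_n,u_{n-1},\dots$ meets these small sets: each meeting corresponds to the automaton being carried along one of the isolated cycles, and the choice of $m$ forces the relevant coordinate to travel a full way around that cycle — a definite number of word-positions — before it can occupy an active prefix again, yielding the estimate with $\lambda$ the reciprocal of that minimal gap.

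The delicate point, where the full strength of Sidki's classification is really needed, is exactly this last claim: one must rule out long ``synchronized'' runs in which successive generators $q_i,q_{i-1},\dots$ are all circuit states and the output word of each happens to be the cycle-tracing word of the next, which would let activations pile up with word-gap $1$ and destroy the contraction. Controlling such runs — using the isolatedness of the cycles, the fact that leaving a cycle drops into a finitary region of depth at most $d$, and the geodesic choice of the word $q_1\cdots q_n$ — is where I expect the real work to lie; the reduction, the passage from the estimate to a finite $\mathcal N$, and the verification that $\mathcal N$ is a nuclear automaton generating $G$ are all routine by comparison.
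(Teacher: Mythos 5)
The paper does not prove this proposition at all: it is stated with a citation to Nekrashevych's book (\cite{MR2162164}, Theorem~3.9.12) and no argument is given, so there is no in-paper proof to compare your route against. Judged on its own terms, your proposal has the right skeleton --- reduce to a single bounded automaton closed under sections and inverses, prove the quantitative estimate $\|g|_v\|\le\lambda\|g\|+L$ for $v\in A^m$, and then extract a finite nucleus; the reduction and the passage from the estimate to the nucleus are indeed routine and correctly sketched.

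However, there is a genuine gap, and you have located it yourself: the contraction estimate is never actually established. Bounding $\|g|_v\|$ for a \emph{fixed} $v$ by the number of indices $i$ with $u_i\in S_{q_i}$ is correct, but the claim that ``the choice of $m$ forces the relevant coordinate to travel a full way around that cycle before it can occupy an active prefix again'' is asserted, not proved, and it is precisely the content of the theorem. Per-generator activity bounds only control the \emph{sum} $\sum_{v\in A^m}\|g|_v\|\le C\|g\|$, i.e.\ the average section length; passing from the average to every individual $v$ is where synchronization between the active rays of distinct circuit states must be analyzed (in Nekrashevych's treatment this is done via the decomposition of bounded automorphisms into finitary and directed ones and a study of products of two directed automorphisms whose active paths eventually coincide or diverge). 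Your paragraph flagging ``synchronized runs'' as ``where I expect the real work to lie'' is an accurate self-assessment: without that lemma the argument does not close, and the remaining steps are scaffolding around the missing core.
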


Consider the following graph $\mathscr X(\mealy)$: its vertex set is
$A^*$. It has two kinds of edges, \emph{vertical} and
\emph{horizontal}. There is a vertical edge $(u,ua)$ for all $u\in
A^*,a\in A$, and a horizontal edge $(u,q\cdot u)$ for every $u\in
A^*,q\in Q$. Note that the horizontal and vertical edges form squares
labeled as in~\eqref{BS2:eq:biset}, and that the horizontal edges form
the Schreier graphs\index{graph!Schreier}\index{Schreier~graph} of the
action of $G(\mealy)$ on $A^n$.
\begin{proposition}[\cite{MR2162164}*{Theorem~3.8.6}]
  If $G(\mealy)$ is contracting then $\mathscr X(\mealy)$ is a
  hyperbolic graph\index{hyperbolic~graph} in the sense of
  Definition~\ref{BS2:def:hyp}.
\end{proposition}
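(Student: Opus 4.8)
The plan is to equip $\mathscr X(\mealy)$ with the path metric in which every edge has length $1$ --- it is connected, since its vertical edges already join every $u=u_1\cdots u_n\in A^*$ to the root $\epsilon$ along $\epsilon,u_1,u_1u_2,\dots,u$ --- and to verify Gromov's four-point inequality, which for a geodesic space is equivalent to thinness of triangles: there should be $\delta$ with $(x\mid y)_o\ge\min\{(x\mid z)_o,(y\mid z)_o\}-\delta$, where $(x\mid y)_o=\frac12\bigl(d_{\mathscr X}(o,x)+d_{\mathscr X}(o,y)-d_{\mathscr X}(x,y)\bigr)$ and $o=\epsilon$. Note that $d_{\mathscr X}(o,x)=|x|$, the level of $x$, since each edge changes the level by at most one and the vertical path realises that difference. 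It is convenient first to replace $\mealy$ by its nucleus $\mathcal N$: this leaves $G(\mealy)$ unchanged, and the two systems of horizontal edges are Schreier graphs of one and the same $G$-action on each $A^n$ relative to two finite generating sets, hence uniformly in $n$ quasi-isometric; so $\mathscr X(\mealy)$ and $\mathscr X(\mathcal N)$ are quasi-isometric, and hyperbolicity is a quasi-isometry invariant of geodesic spaces. We may thus assume that sections of states are states, and we record the iterated form $\|g|_w\|\le\lambda^{\lfloor|w|/m\rfloor}\|g\|+L'$ of the contraction estimate.

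The first real step is a distance formula. Write $\Gamma_k$ for the Schreier graph of the action of $G(\mealy)$ on $A^k$ (its edges are exactly the horizontal edges of $\mathscr X$ at level $k$), $d_{\Gamma_k}$ for its metric, so that $d_{\Gamma_k}(x,y)=\min\{\|g\|:g\cdot x=y\}$, and $u^{(j)}\in A^{n-j}$ for the word got from $u\in A^n$ by deleting its last $j$ letters. I claim
\[d_{\mathscr X}(u,v)\ \asymp\ \min_{0\le j\le n}\max\bigl(2j,\ d_{\Gamma_{n-j}}(u^{(j)},v^{(j)})\bigr).\]
The bound ``$\le$'' is witnessed, for each $j$, by the path that climbs $j$ levels from $u$ and from $v$ to their common-level ancestors, crosses between them along a geodesic of $\Gamma_{n-j}$, and climbs back down. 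For ``$\ge$'': a geodesic of length $\ell$ descends to some lowest level $n-j_0$, so it uses $\ge j_0$ steps towards the root and $\ge j_0$ away from it, whence $\ell\ge2j_0$; and, by the cocycle identity $q\cdot(wa)=(q\cdot w)(q|_w\cdot a)$, replacing each of its vertices by the level-$(n-j_0)$ ancestor turns it into a walk of length $\le\ell$ in $\Gamma_{n-j_0}$ from $u^{(j_0)}$ to $v^{(j_0)}$. Vertices of unequal levels reduce to the equal-level case, since a vertex is at distance exactly the level difference from any ancestor, and that difference is a lower bound anyway. Hence $(u\mid v)_o=n-\frac12 d_{\mathscr X}(u,v)$ is, up to a bounded multiplicative distortion of $d_{\mathscr X}$, governed by the \emph{nested family} of Schreier metrics $d_{\Gamma_k}$ evaluated on prefixes.

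The four-point inequality is then a statement about this nested family, and it is here that contraction enters. The function $k\mapsto d_{\Gamma_k}(u^{(n-k)},v^{(n-k)})$ is non-decreasing (deleting a letter cannot increase Schreier distance, again by the cocycle identity) and climbs from $0$ to $d_{\Gamma_n}(u,v)$; the contraction estimate bounds \emph{how fast}. If $g$ realises $d_{\Gamma_n}(u,v)$ then $g$ takes every prefix of $u$ to the corresponding prefix of $v$, while its section $g|_{u^{(n-k)}}$, which takes the suffix of $u$ past position $k$ to that of $v$, has word length $\le\lambda^{\lfloor k/m\rfloor}\|g\|+L'$ and so lies in the nucleus once $k$ exceeds an $O(1)$ multiple of $\log d_{\Gamma_n}(u,v)$. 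Using self-similarity of $G(\mealy)$ to trade this section for a genuine short element, and the triangle inequality in the Schreier graphs, one upgrades this to a tree-type comparison: there is a constant $C$ such that $d_{\Gamma_k}(u^{(n-k)},v^{(n-k)})$ is within $C$ of its final value $d_{\Gamma_n}(u,v)$ as soon as $k$ exceeds the length of the common prefix of $u$ and $v$ by at least $C\log d_{\Gamma_n}(u,v)$. This gives the functions $k\mapsto k-\frac12 d_{\Gamma_k}(\cdot,\cdot)$ attached to the three pairs among $\{u,v,w\}$ a common ``rise, then a single controlled plunge'' shape; comparing the locations and heights of their maxima, using the triangle inequality $d_{\Gamma_k}(u^{(n-k)},v^{(n-k)})\le d_{\Gamma_k}(u^{(n-k)},w^{(n-k)})+d_{\Gamma_k}(w^{(n-k)},v^{(n-k)})$ and monotonicity, yields the four-point inequality with $\delta$ depending only on $C$ and $\Card A$.

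The main obstacle is the tree-type comparison and the case analysis it powers. The subtrees $uA^{n-k}$ are \emph{not} convex in $\Gamma_n$ --- a short $\Gamma_n$-path between two words sharing a length-$k$ prefix need not remain under that prefix --- so one cannot restrict the argument to a single subtree; the contraction estimate must be applied ``globally'', trading sections for group elements through self-similarity. Furthermore every error accumulated along the way has to be bounded \emph{uniformly in $n$}, which is precisely where the quantitative, uniform contraction data ($\lambda<1$, $m$, $L$ fixed once and for all) is indispensable, and where a merely qualitative ``sections eventually lie in a finite set'' would not suffice. This is also the precise point at which non-contracting examples drop out: a product-type self-similar action of $\Z^2$, for instance, produces quasi-flats in $\mathscr X$ and violates the comparison.
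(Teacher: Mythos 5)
The paper itself offers no proof of this proposition --- it is quoted verbatim from \cite{MR2162164}*{Theorem~3.8.6} --- so there is no in-text argument to compare yours with, and I judge the proposal on its own terms. Your opening moves are sound: connectivity, $d_{\mathscr X}(o,x)=|x|$, the reduction to the nucleus, and the distance formula obtained by projecting a path to the lowest level it visits are all correct; the formula even holds exactly in the additive form $d_{\mathscr X}(u,v)=\min_j\bigl(2j+d_{\Gamma_{n-j}}(u^{(j)},v^{(j)})\bigr)$, and some such formula is indeed the right first step.

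The argument breaks at the ``tree-type comparison''. The contraction hypothesis controls sections $g|_w$, which act on the letters the automaton reads \emph{after} $w$; it gives no control over $g$ itself, and it is $g$ itself, not its sections, that relates the truncations $u^{(j)},v^{(j)}$ in your distance formula. Your step ``trade this section for a genuine short element'' is exactly where this surfaces, and it fails: in the adding machine~\eqref{BS2:eq:adding} the words $0^k10^{n-k-1}$ and $0^n$ share the prefix $0^k$ and have tails interchanged by the nucleus element $t$, yet lie at distance $2^k$ in $\Gamma_n$. For the same reason $d_{\Gamma_{n-j}}(u^{(j)},v^{(j)})$ need not decrease at all as $j$ grows: for $u=0^n$ and $v=1^k0^{n-k}$ it equals $2^k-1$ for every $j\le n-k-1$, so for $n\ge k+2^k$ the graph with vertical edges $(u,ua)$ contains purely horizontal geodesics of length $2^k-1$ all of whose points are at distance $n$ from the root while their endpoints have Gromov product $n-(2^k-1)/2$; taking $w$ to be the midpoint of such a geodesic makes the four-point inequality fail by about $2^{k-2}$, so no uniform $\delta$ exists for this graph. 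The cited theorem concerns Nekrashevych's self-similarity complex, whose vertical edges are $\{v,xv\}$: with that convention, moving one step toward the root replaces the conjugating element by its section, contraction then makes the horizontal distance between ancestors decay geometrically, and the Gromov product is computed up to an \emph{additive} constant by the confluence level $\max\{j: d_{\Gamma_j}\le\Delta\}$ for a threshold $\Delta$ tied to the nucleus. Anchoring instead to the common-prefix length, which can undershoot that level by $\log D$, and claiming control only within a window of width $C\log D$, would at best determine the Gromov product up to $O(\log D)$ --- not enough to extract a uniform $\delta$ even if the comparison were established.
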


Discrete groups may be broadly separated in two classes:
\emph{amenable} and \emph{non-amenable} groups. A group $G$ is
\emph{amenable}\index{amenable~group}\index{group!amenable} if it
admits a normalized, invariant mean, that is, a map $\mu:\mathcal
P(G)\to[0,1]$ with $\mu(A\sqcup B)=\mu(A)+\mu(B)$, $\mu(G)=1$ and
$\mu(gA)=\mu(A)$ for all $g\in G$ and $A,B\subseteq G$. All finite and
abelian groups are amenable; so are groups of subexponential
word-growth (see~\S\chapterref{BS2:sec:growth}). Extensions, quotients,
subgroups, and directed unions of amenable groups are amenable. On the
other hand, non-abelian free groups are non-amenable.

In understanding the frontier between amenable and non-amenable
groups, the Basilica group\index{Basilica~group} $G(\mathcal B)$
stands out as an important example: \Bartholdi\ and \Virag\ proved
that it is
amenable~\cite{MR2176547}\index{Basilica~group!is~amenable}, but its
amenability cannot be decided by the criteria of the previous
paragraph. We now briefly indicate the core of the argument.

The matrix embedding $\tau':\Bbbk G\to M_d(\Bbbk G)$ associated with a
self-similar group (see page~\pageref{BS2:matrix}) extends to a map
$\tau':\ell^1(G)\to M_d(\ell^1(G))$ on measures on $G$. A measure
$\mu$ gives rise to a random walk on $G$, with one-step transition
probability $p_1(x,y)=\mu(xy^{-1})$. On the other hand, $\tau'(\mu)$
naturally defines a random walk on $G\times X$; treating the second
variable as an ``internal degree of freedom'', one may sample the
random walk\index{random~walk} on $G\times X$ each time it hits
$G\times\{x_0\}$ for a fixed
$x_0\in X$. In favourable cases, the corresponding random walk on $G$
is \emph{self-similar}\index{random~walk!self-similar}: it is a convex
combination of $\one$ and
$\mu$. One may then deduce that its ``asymptotic entropy'' vanishes,
and therefore that $G$ is amenable. This strategy works in the
following cases:
\begin{theorem}[\Bartholdi-\Kaimanovich-\Nekrashevych~\cite{0802.2837}]
  Bounded groups are ame\-na\-ble.\index{bounded~group!is~amenable}
\end{theorem}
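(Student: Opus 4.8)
The plan is to reduce amenability to the vanishing of the \emph{asymptotic entropy} of a cleverly chosen random walk, and then to extract that vanishing from the self-similar matrix recursion $\tau'\colon\ell^1(G)\to M_d(\ell^1(G))$ together with the boundedness hypothesis. The starting point is the entropy criterion of Kaimanovich--Vershik (and Derriennic): a countable group $G$ is amenable as soon as it carries a single nondegenerate probability measure $\mu$ with finite entropy $H(\mu)<\infty$ whose asymptotic entropy $h(\mu)=\lim_n\tfrac1n H(\mu^{*n})$ is zero --- equivalently, whose Poisson boundary is trivial. So the entire task is to manufacture one such measure on $G=G(\mealy)$ out of nothing but the recursive structure.

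First I would build the \emph{self-similar random walk} alluded to in the text. Fix a letter $x_0\in A$. Given a probability measure $\mu$ on $G$, run the $\mu$-walk $X_n=h_1\cdots h_n$ with $h_i$ i.i.d.\ $\sim\mu$, and follow its action on the first letter: since the Schreier graph of $G$ on $A$ has only $\Card A$ vertices, this projected process is a finite Markov chain, so its successive return times to $x_0$ have finite mean. Each time the chain returns to $x_0$, record the section $(X_n)|_{x_0}\in G$. Here the hypothesis bites: for a \emph{bounded} automaton the number of non-trivial sections of $X_n$ stays bounded and those sections stay ``small'' (this is Sidki's activity-$0$ estimate, and it is exactly what fails for higher activity, which is why amenability is open there). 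Using this control, one produces a nondegenerate measure $\mu_*$ on $G$, of finite entropy, that is \emph{self-similar}: the $\mu_*$-walk observed at its returns to $G\times\{x_0\}$ is again a $\mu_*$-walk up to laziness, i.e.\ its one-step law is a convex combination $p\,\delta_{\one}+(1-p)\mu_*$. Concretely, $\mu_*$ arises as a fixed point (or a Ces\`aro limit) of the ``take the section at $x_0$'' operator on measures; along the way one must check that the tails, hence $H(\mu_*)$, stay controlled --- one should be able to keep $\mu_*$ finitely supported, or at worst with exponentially decaying tails.

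Next I would do the entropy bookkeeping. The element $X_n$, viewed as an automorphism of $A^*$, is determined by its action on the ray $x_0x_0x_0\cdots$ together with the sections met along that ray; the first contribution is that of a finite-state automaton (bounded per level), while by self-similarity each section is itself governed by $\mu_*$, only deeper by one level and only after a renewal time of finite mean. Combining the renewal theorem (the effective number of $\mu_*$-steps by time $n$ is asymptotically $cn$) with this one-level descent, one arrives at a functional inequality of the shape $h(\mu_*)\le\lambda\,h(\mu_*)$ with $\lambda<1$, which forces $h(\mu_*)=0$; the Kaimanovich--Vershik criterion then gives amenability of $G$.

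The hard part will be the two places where boundedness is genuinely used: (i) constructing the self-similar measure $\mu_*$ and proving it nondegenerate with finite entropy --- this rests on the bounded-activity control of sections and is the real content --- and (ii) upgrading the renewal/self-similarity estimate from an inequality $h(\mu_*)\le h(\mu_*)$ to a \emph{strict} contraction $\lambda<1$, which requires tracking the laziness parameter $p$ and the finiteness of the mean return time to $x_0$. Everything else --- the entropy criterion itself, the existence of the limit defining $h(\mu_*)$, and the finite-state bookkeeping along the ray --- is standard and routine.
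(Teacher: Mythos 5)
Your proposal follows essentially the same route the paper sketches in the paragraph preceding the theorem: extend the matrix recursion to measures, sample the walk on $G\times X$ at its returns to $G\times\{x_0\}$ to produce a self-similar measure that is a convex combination of $\one$ and $\mu$, and conclude amenability from the vanishing of the asymptotic entropy via the Kaimanovich--Vershik criterion. You also correctly locate where boundedness enters (controlling the sections so that the self-similar, finite-entropy, nondegenerate measure exists and the entropy inequality is strict), which is exactly the content the paper defers to the cited reference.
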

\begin{theorem}[\Amir, \Angel, \Virag\cite{0905.2007}]
  Automata of linear growth generate amenable groups.
\end{theorem}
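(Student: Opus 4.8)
Let $G=G(\mealy)$ be the group generated by a finite automaton $\mealy$ of linear activity. The plan is to run the same strategy as for bounded groups: produce on $G$ a symmetric, finitely supported, generating probability measure $\mu$ whose asymptotic entropy vanishes, which, as recalled above, forces $G$ to be amenable. The new difficulty is that in linear activity the ``M\"unchhausen'' recursion no longer closes up on the same pair $(G,\mu)$, so it must be organised as an induction on the degree of activity growth, the bounded case (the preceding theorem) being the base.

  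First one adapts the random walk to the self-similar structure. With $A=\{1,\dots,d\}$, let $\tau':\ell^1G\to M_d(\ell^1G)$ be the matrix embedding introduced above, extended to measures. Starting from a suitable symmetric finitely supported $\mu$ supported on the states of $\mealy$, consider the $\mu$-random walk $(g_n)$ on $G$ and record only the times at which its trajectory, seen through the action on $A^*$, re-enters the subtree below a fixed first-level vertex; the induced step distribution is read from $\tau'(\mu)$. For bounded automata this induced walk is, up to an explicit convex combination with $\one$, governed again by $\mu$, and iterating that identity down the levels of the tree shows $\mu$ to be Liouville.

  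For linear activity the induced object is instead a random walk on a subgroup of $G$ whose activity has strictly smaller degree. One tracks the trajectory of a generic boundary ray $\xi\in A^\omega$ under the walk: after passing to sections along $\xi$, the part of $G$ that stabilises $\xi$ and genuinely moves its coordinates is generated by an automaton of \emph{bounded} activity, hence is amenable and Liouville by the preceding theorem. A boundary-theoretic bookkeeping argument --- the Poisson boundary of $(G,\mu)$ is built from the itinerary of $\xi$ together with the boundary of the induced walk on the ray-stabiliser --- then propagates triviality of the boundary through the inductive step and yields $h(\mu)=0$, hence amenability.

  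The technical heart, and the main obstacle, is precisely this control of the induced walk. One must show that in passing from one level of the tree to the next the step distribution does not spread out uncontrollably: using that only $O(n)$ of the $d^n$ vertices at level $n$ carry a non-trivial section, one bounds the ``energy'' added at each level and deduces that the escape rate of $\xi$ is zero, which is what makes the return times almost surely finite and the induced walk well defined with finite entropy. Everything else --- the implication from vanishing entropy to amenability, and the closure of amenability under the extensions used along the way --- is standard.
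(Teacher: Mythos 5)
The paper does not actually prove this theorem: it is quoted from Amir--Angel--Vir\'ag with only the preceding heuristic paragraph (the matrix embedding $\tau':\ell^1(G)\to M_d(\ell^1(G))$, sampling the lifted walk at its returns to $G\times\{x_0\}$, self-similarity of the resulting measure, vanishing asymptotic entropy) offered as an indication of method. Your sketch is a faithful elaboration of exactly that strategy, organised as an induction on activity degree with the bounded case as base, and this matches the architecture of the cited proof; so there is no divergence of approach to report.

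One substantive caution. The step you label the ``technical heart'' --- finiteness of the return times, i.e.\ recurrence of the walk induced on the orbital Schreier graph of a generic ray $\xi$ --- is the \emph{only} place where the hypothesis of \emph{linear} activity is really used, and your justification (``only $O(n)$ of the $d^n$ vertices at level $n$ carry a nontrivial section, so the energy added per level is bounded'') does not by itself yield recurrence. In the actual argument this comes down to recurrence of a symmetric finite-second-moment random walk on a graph that is essentially two-dimensional; by P\'olya/Chung--Fuchs this holds in dimension $\le 2$ and fails in dimension $\ge 3$, which is precisely why the same scheme does not extend to polynomial activity of degree $\ge 2$ (those cases required genuinely different techniques later). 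So this step cannot be absorbed into ``boundary-theoretic bookkeeping''; if you were to expand the sketch into a proof, this is the estimate you would have to carry out explicitly. A second, smaller point: your claim that the sections along $\xi$ eventually have bounded activity holds only for $\xi$ outside an exceptional (harmonic-measure-null) set of rays tracking the top-level cycles of the automaton, and that exclusion should be stated.
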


\Nekrashevych\ conjectures that contracting automata
always generate amenable groups, and proves:
\begin{proposition}[\Nekrashevych,~\cite{0802.2554}]
  A contracting self-similar group cannot contain a non-abelian free subgroup.\index{free~group}
\end{proposition}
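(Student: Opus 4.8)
The plan is to exploit the contracting property to run a ``ping-pong by descent'' argument: a non-abelian free subgroup would contain elements whose normal forms are ``long'', but contraction forces lengths to shrink when we pass to sections, and this shrinking is incompatible with the combinatorics of a free group. Concretely, suppose for contradiction that $F = \langle f, g \rangle \le G(\mealy)$ is free of rank $2$. Fix the nucleus $\mathcal N$ and the constants $L, m, \lambda < 1$ from the contraction inequality $\|h\| \le \lambda\|g\| + L$ that holds whenever $g \cdot a_1\cdots a_m = b_1\cdots b_m \cdot h$ in $G(\mealy)$. Iterating this inequality, every element $w \in G(\mealy)$ has the property that all of its level-$n$ sections $w|_v$ (for $v \in A^{mn}$) satisfy $\|w|_v\| \le \lambda^n \|w\| + L/(1-\lambda)$; in particular, for $n$ large enough (depending only on $\|w\|$), every deep section of $w$ lies in the finite ``bounded part'' $B := \{x \in G(\mealy) : \|x\| \le L/(1-\lambda) + 1\}$, which is a fixed finite set.

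First I would make precise that sections respect the group operation: $(uv)|_x = (u|_x)(v|_{u\cdot x})$, so the assignment $w \mapsto (w|_v)_{v \in A^n}$ together with the permutation action on $A^n$ is the iterated wreath-product embedding $\tau^{(n)}: G(\mealy) \hookrightarrow G(\mealy) \wr \operatorname{Sym}(A^n)$. Now restrict to $F$. Pick $n$ so large that \emph{every} section at level $mn$ of each of the finitely many elements $f^{\pm1}, g^{\pm1}$ lies in $B$. Then for an arbitrary reduced word $w = x_1 \cdots x_k$ in $f, g$ of length $k$, a section $w|_v$ at level $mn$ is a product $x_1|_{v_1} \cdots x_k|_{v_k}$ of $k$ elements of $B$ — but here is the catch: this only bounds $\|w|_v\|$ by $k \cdot \max_{b\in B}\|b\|$, which grows with $k$. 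To get a genuine contradiction I would instead iterate: apply contraction again to each section. Since $B$ is finite, there is a \emph{single} level $N$ such that every section at level $N$ of every element of $B$ (hence of $f^{\pm1}, g^{\pm1}$) lies in $B$, and moreover — crucially — at level $N$ the sections of a product $b_1 b_2$ of two elements of $B$ are again \emph{products of at most two elements of $B$} living in $B$... which still does not close. The honest fix is Nekrashevych's: one shows the self-similar closure of $\{f,g\}$ generates a contracting group with nucleus $\mathcal N' \supseteq \{f^{\pm1},g^{\pm1}\}$, and then that a contracting group has the property that the map $w \mapsto (w|_v)$ is eventually ``Lipschitz with constant $< 1$'' on the word metric of $F$, forcing $\|w|_v\|$ to stabilize inside a fixed finite set independently of $|w|$.

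The key step — and the main obstacle — is turning ``all deep sections lie in the fixed finite set $B$'' into a contradiction with freeness. Here I would argue as follows. In a free group $F_2$, for each $R$ the number of distinct elements $w$ with the property that $w$ and all its subwords-as-sections map into a \emph{fixed} finite set under a \emph{fixed} finite collection of ``section maps'' is itself finite: the section operation on $F$ induced by $\tau^{(N)}$ is given by finitely many endomorphism-like rules $w|_v \in B \cdot (\text{section of shorter word})$, so the set $\{w|_v : w \in F, v \in A^N\}$ is finite; but since $F$ is free and infinite, and since $w$ is recovered from its full collection of level-$N$ data $\big((w|_v)_{v\in A^N}, \pi_w \in \operatorname{Sym}(A^N)\big)$ — the embedding $\tau^{(N)}$ is injective — the image $\tau^{(N)}(F)$ must be infinite, yet it sits inside $(\text{finite set})^{A^N} \rtimes \operatorname{Sym}(A^N)$, a finite group. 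Contradiction. The delicate point I expect to wrestle with is making the ``all deep sections lie in $B$ independently of word length'' claim rigorous: naively the length bound degrades linearly in the word length, so one genuinely needs the self-similarity of the generating set (replacing $\{f,g\}$ by the nucleus of the group they generate, which exists precisely because $G(\mealy)$ is contracting and subgroups generated by finite subsets of a contracting group are again contracting) so that sections of generators are \emph{again} generators, not longer words — and then the argument above goes through with $B = \mathcal N'$ a genuinely fixed finite set.
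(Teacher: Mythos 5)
The paper states this proposition without proof, giving only the citation to Nekrashevych's article, so there is no in-paper argument to compare yours against; judged on its own terms, your proposal has a fatal gap at exactly the point you flag as delicate. The contraction inequality $\|w|_v\|\le\lambda^n\|w\|+L/(1-\lambda)$ forces the level-$n$ sections of $w$ into the finite set $B$ only once $n$ is of order $\log_{1/\lambda}\|w\|$; there is \emph{no} fixed level $N$ at which all sections of all elements of an infinite subgroup lie in $B$. Concretely, for the adding machine~\eqref{BS2:eq:adding} one has $t^{2^k}|_v=t^{2^{k-N}}$ for every $v\in A^N$ with $N\le k$, which leaves any finite set as $k\to\infty$. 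Your closing contradiction --- that $\tau^{(N)}(F)$ sits inside the finite set $B^{A^N}\rtimes\operatorname{Sym}(A^N)$ --- is therefore unavailable, and is in fact incompatible with the injectivity of $\tau^{(N)}$ that you yourself invoke: if the argument worked it would show that \emph{every} contracting group is finite, which the adding machine or the Grigorchuk group already refute. Passing to the self-similar closure of $\{f,g\}$ and replacing the generators by the nucleus does not repair this: sections of generators are again generators, but a section of a reduced word of length $k$ is still a product of $k$ nucleus elements, so at any fixed depth $\|w|_v\|$ may grow linearly in $k$. You correctly sensed this twice (``which grows with $k$'', ``which still does not close''), but the ``honest fix'' you then assert --- that sections eventually stabilize in a fixed finite set \emph{independently of} $|w|$ --- is precisely the false statement.

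What is missing is a genuinely different idea, not a sharper form of the same estimate. Nekrashevych's actual argument does not try to make all sections small at a uniform depth; it analyses where a free non-abelian subgroup of a group acting on a rooted tree could be located (roughly, a dichotomy between a free subgroup with a free orbit on the boundary and a free subgroup concentrated in the germs of the action at a single boundary point) and then uses the finiteness of the nucleus to exclude both alternatives for a contracting action. Some such global analysis of the boundary action, rather than a depth-$N$ counting argument, appears to be unavoidable here.
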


\noindent We turn to the original claim to fame of automata groups:
\begin{theorem}[\Aleshin-\Grigorchuk~\cites{MR0301107,MR565099}, \Gupta-\Sidki~\cite{MR759409}]
\index{group!p-@$p$-}\index{p-group@$p$-group}\index{Al\"eshin~group}\index{Grigorchuk~group!is~torsion}\index{Gupta-Sidki~group!is~torsion}
 \label{BS2:thm:burnside}
  The Grigorchuk group $G(\mathcal A)$ and the Gupta-Sidki group
  $G(\mathcal G)$ are infinite, finitely generated torsion groups.
\end{theorem}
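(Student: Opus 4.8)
# Proof Proposal

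The plan is to prove the two assertions --- that $G(\mathcal A)$ and $G(\mathcal G)$ are infinite, finitely generated, and torsion --- in three separate moves, handling the two groups in parallel wherever the arguments coincide. That both groups are finitely generated is immediate: by construction they are generated by the (finitely many) states of the respective automaton. The substance is in showing they are infinite and that every element has finite order.

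For \emph{infiniteness}, I would use the self-similar structure directly. Each group acts faithfully on the rooted tree $A^*$, and the wreath-recursion map $\tau:G\to G\wr\operatorname{Sym}(A)$ of page~\pageref{BS2:matrix} lets one push elements down to the first-level subtrees. For the Grigorchuk group one computes $\tau(b),\tau(c),\tau(d)$ and sees that the projection to a first-level coordinate is onto a copy of $G(\mathcal A)$ itself; since $G(\mathcal A)$ surjects onto a proper quotient that is again infinite (or: the stabilizer of level~$n$ has index $\to\infty$ because the action on $A^n$ is nontrivial for all $n$), the group cannot be finite. Cleanest is the observation that the level stabilizers $\operatorname{Stab}(n)$ form a strictly decreasing chain: $a$ acts nontrivially on $A^1$, and conjugating by elements that recursively contain $a$ produces elements acting nontrivially arbitrarily deep, so $[G:\operatorname{Stab}(n)]\to\infty$. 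The same recursion argument works verbatim for $G(\mathcal G)$ on the ternary alphabet. Alternatively one can invoke the earlier remark that a finitely generated infinite torsion group cannot be finite trivially --- but infiniteness really needs the tree action, so I would spell out the level-stabilizer descent.

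The \textbf{main obstacle} is the \emph{torsion} claim, and here the two groups need genuinely different (though structurally similar) inductions. The strategy in both cases is induction on word length $\|g\|$ in the generators, using the contracting property: for the Grigorchuk group, if $g\in\operatorname{Stab}(1)$ then $\tau(g)=(g_0,g_1)$ with $\|g_0\|+\|g_1\|\le \|g\|$ plus a small additive error, and one shows that $g_0,g_1$ (or $g_0g_1$, after a parity/portrait bookkeeping) are strictly shorter, so by induction they have finite order, whence $g$ does; if $g\notin\operatorname{Stab}(1)$ then $g^2\in\operatorname{Stab}(1)$ and one applies the same to $g^2$. The delicate point is that naive length contraction can fail by an additive constant, so one must argue by the number of occurrences of a fixed generator (the generator $a$ for Grigorchuk, whose square is trivial) and verify that each ``syllable'' count strictly drops when passing to the sections. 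For $G(\mathcal G)$ one first checks the defining generators have order $3$ (directly from the automaton $\mathcal G$), then runs the analogous descent: $g^3\in\operatorname{Stab}(1)$, and $\tau(g^3)=(g_0,g_1,g_2)$ with the total section length strictly smaller, so induction closes. I expect the bookkeeping for the additive error --- choosing the right length function and the right power ($g^2$ versus $g^4$ for Grigorchuk, $g^3$ for Gupta--Sidki) so that the induction is strictly decreasing --- to be the one place where care is essential; the contracting estimate $\|h\|\le\lambda\|g\|+L$ recorded just before Proposition on bounded groups is exactly the tool that makes this go through.

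Finally I would note that no step requires anything beyond the self-similar/contracting formalism already set up: faithful action on $A^*$, the recursion $\tau$, and length contraction. So the write-up reduces to (i) the one-line finite-generation remark, (ii) the level-stabilizer descent for infiniteness, and (iii) the two parallel length inductions for torsion, with the Gupta--Sidki case preceded by the trivial verification that the generators have order $3$.
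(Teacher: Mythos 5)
Your proposal takes essentially the same route as the paper's sketch of proof: finite generation is immediate, and torsion is proved by induction on word length using the base case that the generators have finite order, the contraction of first-level sections when $g$ fixes $A$, and the passage to $g^{\Card A}$ (with exactly the bookkeeping on the induction order that you flag as delicate, and that the paper also skips) when it does not. The only divergence is the infiniteness argument, where the paper shows that the stabilizer of $0\in A$ is a proper subgroup whose restriction to $0A^*$ maps \emph{onto} $G$, which is impossible for a finite group --- note that your parenthetical ``$G(\mathcal A)$ surjects onto a proper quotient'' states this backwards --- but the strictly decreasing chain of level stabilizers that you settle on instead is a correct equivalent.
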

\begin{proof}[Sketch of proof]
  To see that these groups $G$ are infinite, consider their action on
  $A^*$, the stabilizer $H$ of $0\in A\subset A^*$, and the
  restriction $\theta$ of the action of $H$ to $0A^*$. This defines a
  homomorphism
  $\theta:H\to\operatorname{Sym}(0A^*)\cong\operatorname{Sym}(A^*)$, which is
  in fact onto $G$. Therefore $G$ possesses a proper subgroup mapping
  onto $G$, so is infinite.

  To see that these groups are torsion, proceed by induction on the
  word-length of an element $g\in G$. The initial cases
  $a^2=b^2=c^2=d^2=\one$, respectively $a^3=t^3=\one$, are easily
  checked. Now consider again the action of $g$ on $A\subset A^*$. If
  $g$ fixes $A$, then its actions on the subsets $iA^*$ are again
  defined by elements of $G$, which are shorter by the contraction
  property; so have finite order. It follows that $g$ itself has
  finite order.

  If, on the other hand, $g$ does not fix $A$, then $g^{\Card A}$ fixes
  $A$; the action of $g^{\Card A}$ on $iA^*$ is defined by an element of
  $G$, of length at most the length of $g$; and (by an argument that
  we skip) smaller in the induction order than $g$; so $g^{\Card A}$ is
  torsion and so is $g$.
\end{proof}

Contracting groups have recursive presentations (meaning the relators
$\mathscr R$ of the presentation form a recursive subset of $F_Q$); in
favourable cases, such as branch
groups~\cite{MR2009317}\index{branch~group}\index{group!branch}, the
set of relators is the set of iterates, under an endomorphism of
$F_Q$, of a finite subset of $F_Q$. For example~\cite{MR819415},
Grigorchuk's\index{Grigorchuk~group!presentation} group satisfies
\[G(\mathcal A)=\langle a,b,c,d\mid
\sigma^n(bcd),\sigma^n(a^2),\sigma^n([d,d^a]),\sigma^n([d,d^{[a,c]a}])\text{ for
  all }n\in\N\rangle,\]
where $\sigma$ is the endomorphism of $F_{\{a,b,c,d\}}$
\begin{equation}\label{BS2:eq:sigma}
  \sigma:a\mapsto aca, b\mapsto d\mapsto c\mapsto b.
\end{equation}

\noindent A similar statement holds for the Basilica
group~\eqref{BS2:eq:bas}:\index{Basilica~group!presentation}
\[G(\mathcal B)=\langle a,b\mid
[a^p,(a^p)^{b^p}],[b^p,(b^p)^{a^{2p}}]\text{ for all }p=2^n\rangle;\]
here the endomorphism is $\sigma:a\mapsto b\mapsto a^2$.

\subsection{Branch groups}\index{weakly~branch~group|(}\index{branch~group|(}
Some of the most-studied examples of automata groups are \emph{branch
  groups}, see~\cite{MR1765119} or the survey~\cite{MR2035113}. We
will define a strictly smaller class:
\begin{definition}\label{BS2:def:branch}
  An automata group $G(\mealy)$ is \emph{regular weakly
    branch}\index{automata~group![regular]~[weakly]~branch}
  if it acts transitively on $A^n$ for all $n$, and if there exists a
  nontrivial subgroup $K$ of $G(\mealy)$ such that, for all $u\in A^*$
  and all $k\in K$, the permutation
  \[w\mapsto\begin{cases}u\,k(v) & \text{ if }w=uv,\\
    w & \text{ otherwise}\end{cases}\]
  belongs to $G(\mealy)$.

  The group $G(\mealy)$ is \emph{regular branch}\index{group!branch}
  if furthermore $K$ has finite index in $G(\mealy)$.
\end{definition}

If we view $A^*$ as an infinite tree, a regular branch group $G$
contains a rich supply of tree automorphisms in two manners: enough
automorphisms to permute any two vertices of the same depth; and, for
any disjoint subtrees of $A^*$, and for (up to finite index) any
elements of $G$ acting on these subtrees, an automorphism acting in
that manner on $A^*$.

In particular, if $G$ is a regular branch group, then $G$ and
$G\times\cdots\times G$, with $\Card A$ factors, have isomorphic
finite-index subgroups (they are \emph{commensurable},
see~\eqref{BS1:eq:commensurator}).

\begin{proposition}
  The  Grigorchuk group $G(\mathcal A)$ and the Gupta-Sidki group
  $G(\mathcal G)$ are regular branch\index{Grigorchuk~group!is~regular~branch}\index{Gupta-Sidki~group!is~regular~branch}; 
  the Basilica 
  group $G(\mathcal B)$ is regular weakly
  branch\index{Basilica~group!is~regular~weakly~branch}.
\end{proposition}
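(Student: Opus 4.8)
The plan is to verify, for each of the three groups $G=G(\mathcal A),G(\mathcal G),G(\mathcal B)$, the two conditions of Definition~\ref{BS2:def:branch} separately: level-transitivity, and the existence of a nontrivial normal subgroup $K$ that ``reproduces itself'' at the first level. Write $\mathrm{St}(1)\trianglelefteq G$ for the stabiliser of the length-one words and $\psi\colon\mathrm{St}(1)\hookrightarrow G^{A}$ for the injective homomorphism recording the $|A|$ first-level sections (the restriction of the map $\tau$ above to elements with trivial activity on $A$). The claim is that if $G$ is level-transitive and $K\trianglelefteq G$ is nontrivial with $K\le\mathrm{St}(1)$ and $\psi(K)\supseteq K^{A}$ (the full direct power), then $G$ is regular weakly branch over $K$, and regular branch if moreover $[G:K]<\infty$. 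Indeed, for $k\in K$ and $a\in A$ the automorphism $\delta_{a}(k)$ acting as $k$ on $aA^{*}$ and trivially elsewhere is $\psi^{-1}$ of a coordinate copy of $k$ inside $K^{A}\subseteq\psi(K)$, hence lies in $K$; and since $\delta_{av}(k)=\delta_{a}(\delta_{v}(k))$ with $\delta_{v}(k)\in K$ by induction on $|v|$, we get $\delta_{u}(k)\in K\le G$ for all $u\in A^{*}$, which is exactly the required condition.

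Level-transitivity is the easy ingredient. From the wreath recursions read off the automata --- $a=\sigma$, $b=(a,c)$, $c=(a,d)$, $d=(\one,b)$ for $\mathcal A$; $t=(a,a^{-1},t)$ with $a$ the rooted $3$-cycle for $\mathcal G$; $a=(\one,b)\sigma$, $b=(\one,a)$ for $\mathcal B$ --- one checks on generators that each first-level section map $\mathrm{St}(1)\to G$ is surjective (e.g.\ for $\mathcal A$ the elements $b,aba,aca,ada$ have sections at $0$ equal to $a,c,d,b$; for $\mathcal B$, $a^{-1}ba=(a,\one)$ and $a^{2}=(b,b)$). Together with the fact that $a$ acts transitively on the (two or three) first-level subtrees, an induction on $n$ gives transitivity on $A^{n}$.

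The heart of the matter is the self-reproduction. For $\mathcal B$ take $K=[G,G]$ (contained in $\mathrm{St}(1)$ since $G/\mathrm{St}(1)$ is cyclic); direct computation gives $\psi([a,b])=(a^{-1},a)$ and $\psi({}^{a}[a,b])=(a,\,b^{-1}a^{-1}b)$, whence $\psi([a,b]\cdot{}^{a}[a,b])=(\one,\,[a^{-1},b])$, and $[a^{-1},b]$ is a conjugate of $[a,b]^{-1}$, hence a normal generator of $G'$. Conjugating $(\one,[a^{-1},b])\in\psi(G')$ inside $\psi(\mathrm{St}(1))$ --- using surjectivity of the second section map and $G'\trianglelefteq\mathrm{St}(1)$ --- yields $\{\one\}\times G'\subseteq\psi(G')$; conjugating by $a$ (using $G'\trianglelefteq G$) transports this to $G'\times\{\one\}$, and multiplying gives $G'\times G'\subseteq\psi(G')$, as wanted. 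For $\mathcal A$ the parallel computation is $\psi((ab)^{2})=(ca,ac)$ and $\psi({}^{d}(ab)^{2})=(ca,\,b(ac)b^{-1})$, so $\psi({}^{d}(ab)^{2}\cdot(ab)^{-2})=(\one,\,(ba)^{2})$ with $(ba)^{2}={}^{a}(ab)^{2}$ a normal generator of $K:=\langle(ab)^{2}\rangle^{G}$; the same spreading argument gives $K\times K\subseteq\psi(K)$. For $\mathcal G$ one takes $K$ a suitable finite-index term of the lower central series of $G$ --- one expects $\gamma_{3}(G)$ --- and, starting from $\psi([a,t])=(a^{-1},(at)^{-1},a^{-1}t)$ together with its conjugates by powers of $a$ and by $t$, isolates an element of $\psi(K)$ supported in one coordinate, whose entry is a conjugate of $[a,t]^{\pm1}$, then spreads it using that $a$ cyclically permutes the three subtrees, obtaining $K^{3}\subseteq\psi(K)$. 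Finally, $[G:K]<\infty$: for $\mathcal G$ this is automatic once $K$ is a lower-central term; for $\mathcal A$ a short computation modulo $K$ shows $\gamma_{3}(G)\le K$ and $[G:K]=16$; for $\mathcal B$ one has instead $[G:G']=\infty$ (abelianisation $\Z^{2}$), in accordance with the group being only \emph{weakly} branch.

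I expect the genuine obstacle to be the self-reproduction step, and specifically the demand that the coordinate-supported element one lands in $\psi(K)$ have as its nonzero entry a \emph{normal} generator of $K$: only then does conjugation inside $\psi(\mathrm{St}(1))$ recover a full copy of $K$ rather than of a proper normal closure. This is precisely what dictates the explicit commutator-and-conjugate bookkeeping above, and for $\mathcal G$ it is also what may force passage from $[G,G]$ to $\gamma_{3}(G)$. The finite-index verifications for $\mathcal A$ and $\mathcal G$ are routine, but do require pinning $K$ down sharply enough to compute the finite quotient $G/K$.
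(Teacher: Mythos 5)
Your overall strategy --- level-transitivity plus a nontrivial normal subgroup $K$ that reproduces itself in the first-level sections --- is the same as the paper's, and for $G(\mathcal A)$ and $G(\mathcal B)$ your computations check out: your $K=\langle(ab)^2\rangle^G$ for the Grigorchuk group is the paper's $K=\llangle[a,b]\rrangle$ (since $a,b$ are involutions), and the identity $bacbca=(ba)^2$ that your $\mathcal A$-computation needs does hold because $cbc=b$. Where you diverge is in how self-reproduction is certified. You prove the full first-level containment $K^A\subseteq\psi(K)$ --- exhibit an element of $\psi(K)$ supported in one coordinate whose entry normally generates $K$, spread it over that coordinate by conjugating inside $\psi(\mathrm{St}(1))$ using surjectivity of the section maps, move between coordinates with $a$ --- and then push copies of $K$ down the tree via $\delta_{av}(k)=\delta_a(\delta_v(k))$. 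The paper instead uses the substitution endomorphism $\sigma:a\mapsto aca$, $b\mapsto d\mapsto c\mapsto b$: it checks $\sigma(x)=[aca,d]\in K$ using $(ad)^4=\one$, so $\sigma$ restricts to $K\to K$ with $\sigma(k)$ acting as $k$ on $1A^*$ and trivially on $0A^*$; iterating realizes $k$ at the vertex $1^n$, and level-transitivity reaches every other vertex. The two arguments are close cousins --- your element $(\one,(ba)^2)\in\psi(K)$ is exactly what $\sigma$ produces --- but yours yields the stronger, more quotable statement $K^A\subseteq\psi(K)$ at the cost of the extra ``spreading'' step, while the paper's is shorter because it never needs a full coordinate copy of $K$ at once.

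The one genuine gap is the Gupta--Sidki case: you do not carry out the computation and you hedge on the choice of $K$ (``one expects $\gamma_3(G)$''). The paper takes $K=[G,G]$ for both $G(\mathcal G)$ and $G(\mathcal B)$, and $G(\mathcal G)$ is indeed regular branch over its commutator subgroup (of index $9$), so the retreat to $\gamma_3$ you anticipate is unnecessary. Choosing a smaller finite-index $K$ would not be wrong if the containment $K^3\subseteq\psi(K)$ held for it, but you verify neither that nor the $[G,G]$ version, so as written the $\mathcal G$ case is a plan rather than a proof. Everything else --- the index-$16$ assertion for the Grigorchuk group, and the observation that $[G(\mathcal B):G(\mathcal B)']=\infty$ is why the Basilica group is only \emph{weakly} branch --- matches the paper.
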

\begin{proof}[Sketch of proof]
  For $G=G(\mathcal A)$, note first that $G$ acts transitively on $A$;
  since the stabilizer of $0$ acts as $G$ on $0A^*$, by induction $G$
  acts transitively on $A^n$ for all $n\in\N$.

  Define then $x=[a,b]$ and $K=\llangle x\rrangle$. Consider the
  endomorphism~\eqref{BS2:eq:sigma}, and note that
  $\sigma(x)=[aca,d]=[x^{-1},d]\in K$ using the relation $(ad)^4=\one$, so
  $\sigma$ restricts to an endomorphism $K\to K$, such that
  $\sigma(k)$ acts as $k$ on $1A^*$ and fixes $0A^*$. Similarly,
  $\sigma^n(k)$ acts as $k$ on $1^nA^*$, so
  Definition~\ref{BS2:def:branch} is fulfilled for $u=1^n$. Since $G$
  acts transitively on $A^n$, the definition is also fulfilled for
  other $u\in A^n$.

  Finally, a direct computation shows that $K$ has index $16$ in $G$.

  The other groups $G(\mathcal G)$ and $G(\mathcal B)$ are handled
  similarly; for them, one takes $K=[G,G]$.
\end{proof}

Various consequences may be derived from a group being a branch
group; in particular,
\begin{theorem}[\Abert,~\cite{MR2143732}]
  A weakly branch group satisfies no
  identity\index{weakly~branch~group!satisfies~no~identity}; that is,
  if $G$ is a
  weakly branch group, then for every nontrivial word
  $w=w(x_1,\dots,x_k)\in F_k$, there are $a_1,\dots,a_k\in G$ such
  that $w(a_1,\dots,a_k)\neq\one$.
\end{theorem}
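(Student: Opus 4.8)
The plan is to exploit the abundance of independently-acting subtrees that a weakly branch group provides, and to run an induction on the number of variables $k$ appearing in the word $w$. Fix a weakly branch group $G$ acting on $A^*$, with an associated nontrivial subgroup $K$ as in Definition~\ref{BS2:def:branch}: for every $u\in A^*$ and every $k\in K$ there is an element of $G$, call it $k^{(u)}$, acting as $k$ on the subtree $uA^*$ and trivially elsewhere. The essential feature I want to extract is that if $u_1,\dots,u_k$ are pairwise incomparable words (no one a prefix of another), then the subgroups $K^{(u_1)},\dots,K^{(u_k)}$ act on pairwise disjoint subtrees, hence commute elementwise, and moreover each $K^{(u_i)}$ is isomorphic to $K$ and in particular nontrivial. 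So inside $G$ I have found a direct product $K^{(u_1)}\times\cdots\times K^{(u_k)}$ of nontrivial groups.

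Given the nontrivial word $w=w(x_1,\dots,x_k)\in F_k$, I first reduce to the case where $w$ actually involves all $k$ variables (otherwise discard the unused ones), and I freely reduce $w$. Now I want to substitute $x_i\mapsto g_i\in K^{(u_i)}$ for suitable incomparable $u_i$ and suitable $g_i$, and check the value is nontrivial. The point of putting the $i$-th variable into the $i$-th factor is that the $K^{(u_i)}$ commute, so $w(g_1,\dots,g_k)$ collapses: writing $w$ after collecting each letter, $w(g_1,\dots,g_k)$ equals $\prod_{i=1}^k v_i(g_i)$ where $v_i$ is the (nontrivial, since $x_i$ occurs) word obtained by deleting all letters $\neq x_i^{\pm1}$ from $w$ — i.e. $v_i(x_i)=x_i^{e_i}$ with $e_i\in\Z$ the exponent sum of $x_i$ in $w$. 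Because the factors are disjoint and commute, $\prod v_i(g_i)\neq\one$ as soon as some $v_i(g_i)=g_i^{e_i}\neq\one$. Thus it suffices to find $g\in K$ of infinite order, or at least of order not dividing $e_i$ for at least one index $i$. But wait: if every $e_i=0$ (for instance $w=[x_1,x_2]$), this reduction is vacuous, so I need a genuinely different argument in that case.

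To handle an arbitrary $w$ I iterate the construction to a finer scale. Rather than one subtree per variable, I let each variable ``live'' on a block of subtrees and use the fact that $K$, being a subgroup of $G$ acting transitively on each level with nontrivial point stabilizers of subtrees, itself contains nontrivial elements that move a chosen vertex; by a ping-pong / support argument one builds, for each occurrence of each letter in the reduced word $w$, an element supported on its own subtree, chosen so that consecutive occurrences fail to cancel. Concretely: pick a nontrivial $\kappa\in K$ and a vertex $a\in A^*$ with $\kappa(a)\neq a$; for the $j$-th letter $x_{i_j}^{\pm1}$ of $w=x_{i_1}^{\epsilon_1}\cdots x_{i_n}^{\epsilon_n}$ choose incomparable prefixes $u_1,\dots,u_n$ and set $x_i\mapsto \prod_{j:\,i_j=i}\kappa^{(u_j)}$ (a product of commuting copies of $\kappa$ on disjoint subtrees, which is well-defined since those copies commute). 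Then $w$ of these substitutions is $\prod_{j=1}^n (\kappa^{\epsilon_j})^{(u_{\sigma(j)})}$ for an appropriate indexing $\sigma$, and because the $u_j$ are pairwise incomparable this product is nontrivial: evaluate it at the vertex $u_{1}a$ (say), where exactly one factor acts nontrivially, moving it to $u_1\kappa^{\epsilon_1}(a)\neq u_1 a$. The main obstacle is precisely this bookkeeping — ensuring that when the same variable is substituted by a product of several $\kappa$-copies, the letters of $w$ in different positions genuinely land on distinct, incomparable subtrees so that nothing cancels; once the supports are separated, nontriviality is witnessed pointwise and the verification is immediate. Transitivity of $G$ on levels of the tree (hence the freedom to place the $u_j$ wherever we like at a deep enough level) is what makes the separation possible.
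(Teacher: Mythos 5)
First, note that the paper offers no proof of this theorem: it is quoted directly from Ab\'ert~\cite{MR2143732}, so your argument has to stand on its own --- and its main construction contains a fatal error. The preliminary reduction is sound as far as it goes: sending each $x_i$ into a rigid stabilizer $K^{(u_i)}$ with the $u_i$ pairwise incomparable makes the images commute, so $w(g_1,\dots,g_k)=\prod_i g_i^{e_i}$, and this handles words with some nonzero exponent sum (modulo the point, which you gloss over, that $K$ must contain an element whose order does not divide that $e_i$ --- not obvious when $G$ is torsion). But your treatment of the remaining case $w\in[F_k,F_k]$ fails. When you substitute $x_i\mapsto\prod_{j:\,i_j=i}\kappa^{(u_j)}$ with \emph{all} of $u_1,\dots,u_n$ pairwise incomparable, every factor $\kappa^{(u_j)}$, across all variables, is supported on its own subtree; hence all these factors commute, hence the substituted values $g_1,\dots,g_k$ still commute pairwise, and $w(g_1,\dots,g_k)=\prod_i g_i^{e_i}=\prod_{j}(\kappa^{e_{i_j}})^{(u_j)}$. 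Your claimed value $\prod_j(\kappa^{\epsilon_j})^{(u_{\sigma(j)})}$ is incorrect: the component acting on $u_1A^*$ is $\kappa^{e_{i_1}}$, since \emph{every} occurrence of the variable $x_{i_1}$ acts on $u_1A^*$, not just the first one. For $w=[x_1,x_2]$ all exponent sums vanish, your witness point $u_1a$ is fixed, and the whole element is $\one$. No choice of disjointly supported substitutions can ever defeat a commutator word: the subgroup generated by pairwise commuting elements is abelian, so the evaluation homomorphism factors through $\Z^k$ and kills $[F_k,F_k]$.

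The missing idea is that the elements $a_1,\dots,a_k$ must be built to \emph{interact}, and the standard route (Ab\'ert's) is an induction on the length of $w$ rather than a single substitution with separated supports. One works with the action on the boundary $A^\omega$ and proves the stronger claim that for every point $p$ and every nontrivial reduced word $w=y_n\cdots y_1$ one can choose the $a_i$ so that the successive images $p_0=p$, $p_j=\overline{y_j}(p_{j-1})$ (where $\overline{y_j}$ is the substituted value of the letter $y_j$) are pairwise distinct; in particular $w(a_1,\dots,a_k)$ moves $p$. At each step only finitely many constraints of the form $a_i(q)=q'$ have been imposed, so the remaining freedom for $a_{i_j}$ is a coset of the pointwise stabilizer of a finite set of boundary points; weak branchness provides nontrivial rigid vertex stabilizers inside that pointwise stabilizer (at vertices deep enough along $p_{j-1}$ and off the other constrained rays), and these are used to push $p_{j-1}$ away from the finitely many points already visited. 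That perturbation step is where the real work lies, and it has no counterpart in your proposal.
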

\index{branch~group|)}\index{weakly~branch~group|)}

\subsection{Growth of groups}\label{BS2:sec:growth}\index{group!growth|(}\index{growth~of~groups|(}
An important geometric invariant of a finitely generated group is the
asymptotic behaviour of its growth function $\gamma_{G,A}(n)$. Finite
groups, of course, have a bounded growth function. If $G$ has a
finite-index nilpotent subgroup, then $\gamma_{G,A}(n)$ is bounded by
a polynomial, and one says $G$ has \emph{polynomial
  growth}\index{growth~of~groups!polynomial}; the
converse is true~\cite{MR623534}.

On the other hand, if $G$ contains a free subgroup, for example if $G$
is word-hyperbolic and is not a finite extension of $\Z$, then
$\gamma_{G,A}$ is bounded from above and below by exponential
functions, and one says that $G$ has \emph{exponential
  growth}\index{growth~of~groups!exponential}.

By a result of \Milnor\ and \Wolf~\cites{MR0248688,MR0244899}, if $G$
has a solvable subgroup of finite index then $G$ has either polynomial
or exponential growth. The same conclusion holds, by \Tits'
alternative~\cite{MR0286898}, if $G$ is linear. \Milnor\
\cite{Milnor5603} asked whether there exist groups with growth strictly
between polynomial and exponential.

\begin{theorem}[\Grigorchuk~\cite{MR712546}]
  The Grigorchuk group $G(\mathcal A)$ has intermediate
  growth\index{growth~of~groups!intermediate}\index{Grigorchuk~group!has~intermediate~growth}. More
  precisely, its growth function satisfies the following estimates:
  \[e^{n^\alpha}\precsim\gamma_{G,S}(n)\precsim e^{n^\beta},\] with
  $\alpha=0.515$ and $\beta=\log(2)/\log(2/\eta)\approx0.767$, for
  $\eta\approx0.811$ the real root of the polynomial
  $X^3+X^2+X-2$.
\end{theorem}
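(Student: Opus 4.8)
The plan is to establish the two bounds separately, since they require quite different techniques.

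\medskip

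\noindent\textbf{Upper bound.} I would exploit the self-similar (contracting) structure of $G=G(\mathcal A)$. Recall that $G$ acts on $A^*$ with $A=\{0,1\}$, and that the stabilizer of the first level has index $2$ and embeds, via the restriction maps to $0A^*$ and $1A^*$, into $G\times G$. The generators $\{a,b,c,d\}$ decompose under this embedding in a way that \emph{shrinks} word-length: one checks directly on the five nucleus elements (together with $\one$) that if $g$ has length $n$ then its two first-level sections have lengths summing to at most $\eta n + O(1)$, where $\eta$ is precisely the contraction factor. Iterating to level $k$, an element of length $n$ produces $2^k$ sections of total length $\lesssim \eta^k n$, which become bounded once $\eta^k n \asymp 1$, i.e.\ $k \asymp \log n$. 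An element of $G$ is determined by its level-$k$ portrait (the permutation data at each of the $2^k$ vertices of depth $k$, which is a bounded amount of information per vertex) together with the $2^k$ nucleus-valued sections once those have stabilized. Counting: the number of elements of length $\le n$ is at most (bounded)$^{2^k}$ times the number of ways of distributing length $\lesssim\eta^k n$ among $2^k$ sections — a careful bookkeeping of this recursion gives $\gamma(n)\precsim e^{n^\beta}$ with $\beta=\log 2/\log(2/\eta)$, the exponent arising exactly as the solution of $2\cdot(\text{something})^\beta$-type balancing, which is why $\eta$ is the root of $X^3+X^2+X-2$ (the length-contraction estimate for the worst generator forces this cubic).

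\medskip

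\noindent\textbf{Lower bound.} Here I would show $\gamma(n)$ grows faster than $e^{n^\alpha}$ by producing many distinct elements. The standard approach: use the branch structure and the fact that $G$ is commensurable with $G\times G$ (it is regular branch over $K=\llangle[a,b]\rrangle$ of index $16$). If $G$ contained ``too few'' elements of length $\le n$ — say $\gamma(n)\le e^{n^\alpha}$ for $\alpha$ too small — then the embedding $G\hookrightarrow G\times G$, which roughly doubles a length-$n$ element's data while only expanding length by a bounded factor (in the reverse direction from the upper bound: reconstructing $g$ from two sections of length $n$ costs length $\le 2\eta^{-1}n + O(1)$ or so), would yield a functional inequality $\gamma(n)^2 \precsim \gamma(Cn)$, forcing a lower bound on the growth exponent. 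Making this precise requires a short-exact-sequence / subgroup-counting argument: one counts elements of $K\times K$ inside $G$ realized by the branching homomorphism and checks they are pairwise distinct and of controlled length. Optimizing the resulting self-similar inequality pins down $\alpha=0.515$ (the numerics come from iterating the length bound for a specific family of ``portrait'' words built from $[a,b]$ and its $\sigma$-images).

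\medskip

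\noindent\textbf{Main obstacle.} The upper bound is essentially a clean recursion once the per-generator length-contraction lemma is in hand. The genuinely delicate part is the lower bound: showing that the images of $K\times K$ under branching are \emph{distinct} as elements of $G$ and have the claimed length, and then solving the resulting recursive inequality with the right constants. Getting $\alpha=0.515$ rather than a cruder bound like $n^{1/2}$ requires the sharp form of the length estimate and a nontrivial optimization over which words to branch — that is where the real work lies.
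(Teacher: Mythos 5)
Your overall strategy (contraction for the upper bound, a doubling/branching inequality for the lower bound) is the same as the paper's, and your lower-bound sketch matches the paper's: there one takes $g=a\sigma(g_0)a\sigma(g_1)$, of length $\le 4N$ when $|g_i|\le N$, recovers $g_0,g_1$ from $g$ up to $8$ choices, and deduces $\gamma(4N)\ge(\gamma(N)/8)^2$; like your version this only yields $e^{\sqrt n}$ directly, and the improvement to $\alpha=0.515$ is delegated to the cited literature in both cases, so I will not press you on that. But your upper bound has two genuine gaps. First, the inequality $\|g_0\|+\|g_1\|\le\eta(\|g\|+1)$ with $\eta\approx0.811$ is \emph{not} something you can ``check directly on the five nucleus elements'' for the standard word length: at the first level the naive estimate only gives $|g_0|+|g_1|\le|g|+1$. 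The coefficient $\eta$, the root of $X^3+X^2+X-2$, arises only after replacing the word metric by a well-chosen \emph{weighted} metric (equivalent to it), with the weights of $a,b,c,d$ optimized against the substitution $\sigma$; this is where the cubic comes from, not from a worst-generator check.

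Second, and more seriously, your bookkeeping fails. If you descend to a \emph{uniform} level $k$ with $\eta^k n\asymp 1$, then $k\asymp\log n/\log(1/\eta)$ and the number of level-$k$ vertices is $2^k\asymp n^{\log 2/\log(1/\eta)}\approx n^{3.3}$, so ``$(\text{bounded})^{2^k}$'' is $e^{Cn^{3.3}}$ --- far worse than $e^{n^\beta}$ and worse even than the trivial exponential bound. The exponent $\beta=\log 2/\log(2/\eta)\approx0.767$ cannot be reached by balancing a uniform depth against the total section length. The paper's construction is \emph{adaptive}: one builds a finite binary tree $\iota(g)$ by subdividing a vertex only while its section has length exceeding the fixed threshold $1/(1-\eta)$, and stopping otherwise. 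The number of leaves $\Lambda(g)$ then satisfies $\Lambda(g)\le\Lambda(g_0)+\Lambda(g_1)$ with $\|g_0\|+\|g_1\|\le\eta(\|g\|+1)$, and since $x^\beta+y^\beta$ subject to $x+y\le\eta s$ is maximized at $x=y=\eta s/2$, the induction $\Lambda(g)\le\|g\|^\beta$ closes exactly when $2(\eta/2)^\beta\le 1$, i.e.\ $\beta=\log 2/\log(2/\eta)$. Since $\iota(g)$ determines $g$ and there are only exponentially many labeled trees with a given number of leaves, one gets $\gamma(n)\precsim e^{n^\beta}$. Your closing parenthetical about a ``$2\cdot(\cdot)^\beta$-type balancing'' gestures at this recursion, but the procedure you actually describe does not implement it and does not prove the bound.
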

\begin{proof}[Sketch of proof; see~\cites{MR1656258,MR1818662}]
  Recall that $G$ admits an endomorphism $\sigma$,
  see~\eqref{BS2:eq:sigma}, such that $\sigma(g)$ acts as $g$ on
  $1A^*$ and as an element of the finite dihedral group $D_8=\langle
  a,d\rangle$ on $0A^*$.

  Given $g_0,g_1\in G$ of length $\le N$, the element
  $g=a\sigma(g_0)a\sigma(g_1)$ has length $\le4N$, and acts (up to an
  element of $D_8$) as $g_i$ on $iA^*$ for $i=0,1$. It follows that
  $g$ essentially (i.e., up to $8$ choices) determines $g_0,g_1$,
  and therefore that $\gamma_{G,S}(4N)\ge(\gamma_{G,S}(N)/8)^2$. The
  lower bound follows easily.

  On the other hand, the Grigorchuk\index{Grigorchuk~group}
  group $G$ satisfies a stronger
  property than contraction; namely, for a well-chosen metric (which
  is equivalent to the word metric), one has that if $g\in G$ acts as
  $g_i\in G$ on $iA^*$, then
  \begin{equation}\label{BS2:eq:grigcontr}
    \|g_0\|+\|g_1\|\le\eta(\|g\|+1),
  \end{equation}
  with $\eta$ the constant above.

  Then, to every $g\in G$ one associates a description by a finite,
  labeled binary tree $\iota(g)$. If $\|g\|\le1/(1-\eta)$, its
  description is a one-vertex tree with $g$ at its unique
  leaf. Otherwise, let $i\in\{0,1\}$ be such that $ga^i$ fixes $A$,
  and write $g_0,g_1$ the elements of $G$ defined by the actions of
  $ga^i$ on $0A^*,1A^*$ respectively. Construct recursively the
  descriptions $\iota(g_0),\iota(g_1)$. Then the description of $g$ is
  a tree with $i$ at its root, and two descendants
  $\iota(g_0),\iota(g_1)$.

  By~\eqref{BS2:eq:grigcontr}, the tree $\iota(g)$ has at most
  $\|g\|^\beta$ leaves; and $\iota(g)$ determines $g$. There are
  exponentially many trees with a given number of leaves, and the upper
  bound follows.
\end{proof}

Among groups of exponential growth, \Gromov\ asked the following
question~\cite{MR682063}: is there a group $G$ of exponential growth,
namely such that $\lim\gamma_{G,Q}(n)^{1/n}>1$ for all (finite) $Q$, but such
that $\inf_{Q\subset G}\lim\gamma_{G,Q}(n)^{1/n}=1$?

Such examples, called \emph{groups of non-uniform exponential
  growth},\index{growth~of~groups!non-uniformly~exponential} were
first found by \Wilson~\cite{MR2031429}; see~\cite{MR1981466} for a
simplification. Both constructions are heavily based on groups
generated by automata.

It is known that essentially any function growing faster than $n^2$
may be, asymptotically, the growth function of a semigroup. It is
however notable that very small automata generate semigroups of growth
$\sim e^{\sqrt n}$, and of polynomial growth of irrational
degree~\cites{MR2394721,MR2194959}. However, it is not known whether
there exist groups whose growth function is strictly between
polynomial and $e^{\sqrt n}$.
\index{group!growth|)}\index{growth~of~groups|)}

\subsection{Dynamics and subdivision rules}\label{BS2:sec:img}
We show, in this subsection, how automata naturally arise from geometric or
topological situations. As a first step, we will obtain a functionally
recursive action; in favourable cases it will be encoded by an
automaton.  We must first adopt a slightly more abstract point of view
on functionally recursive groups:
\begin{definition}
  A group $G$ is \emph{self-similar}\index{group!self-similar} if it
  is endowed with a
  \emph{self-similarity biset}\index{self-similarity~biset}, that is,
  a set $\mathfrak B$ with
  commuting left and right actions, that is free qua right $G$-set.
\end{definition}
The fundamental example is $G=G(\mealy)$ and $\mathfrak B=A\times G$,
with actions
\[g\cdot(a,h)=(b,kh)\text{ if }\tau(g,a)=(b,k),\qquad (a,g)\cdot
h=(a,gh).\] Conversely, given a self-similar group $G$, choose a
\emph{basis} $A$ of its biset, i.e., express $\mathfrak B=A\times G$;
then define $\tau(g,a)=(b,k)$ whenever $g\cdot(a,1)=(b,k)$ in
$\mathfrak B$. This vindicates the notation~(\ref{BS2:eq:biset}).

Two bisets $\mathfrak B,\mathfrak B'$ are \emph{isomorphic} if there
is a map $\varphi:\mathfrak B\to\mathfrak B'$ with $g\varphi(b)h=\varphi(gbh)$
for all $g,h\in G,b\in\mathfrak B$. They are \emph{equivalent} if
there is a map $\varphi:\mathfrak B\to\mathfrak B'$ and an automorphism
$\theta:G\to G$ with $\theta(g)\varphi(b)\theta(h)=\varphi(gbh)$.

Consider now $X$ a topological space, and $f:X\to X$ a \emph{branched
  covering}\index{branched~covering}; this means that there is an open
dense subspace
$X_0\subseteq X$ such that $f:f^{-1}(X_0)\to X_0$ is a covering. The
subset $\mathscr C=X\setminus f^{-1}(X_0)$ is the \emph{branch locus},
and $\mathscr P=\bigcup_{n\ge1}f^n(\mathscr C)$ is the
\emph{post-critical locus}. Write $\Omega=X\setminus\mathscr P$, and
choose a basepoint $*\in\Omega$.

Two coverings $(f,\mathscr P_f)$ and $(g,\mathscr P_g)$ are
\emph{combinatorially
  equivalent}\index{branched~covering!combinatorially~equivalent} if
there exists a path $g_t$ through branched coverings, with
$g_0=f,g_1=g$, such that the post-critical set of $g_t$ varies
continuously along the path.

We define a self-similarity biset for $G=\pi_1(\Omega,*)$: set
\[\mathfrak B_f=\{\text{homotopy classes of paths }\gamma:[0,1]\to\Omega\mid\gamma(0)=f(\gamma(1))=*\}.\]
The right action of $G$ prepends a loop at $*$ to $\gamma$; the left
action appends the unique $f$-lift of the loop that starts at
$\gamma(1)$ to $\gamma$.

A choice of basis for $\mathfrak B$ amounts to choosing, for each
$x\in f^{-1}(*)$, a path $a_x\subset\Omega$ from $*$ to $x$. Set
$A=\{a_x\mid x\in f^{-1}(*)\}$. Now, for $g\in G$, and $a_x\in A$,
consider a path $\gamma$ starting at $x$ such that $f\circ\gamma=g$;
such a path is unique up to homotopy, by the covering property of
$f$. The path $\gamma$ ends at some $y\in f^{-1}(*)$. Set then
\[\tau(g,a_x)=(a_y,a_y^{-1}\gamma a_x),\]
where we write concatenation of paths in reverse order, that is,
$\gamma\delta$ is first $\delta$, then $\gamma$.

For example, consider the sphere $X=\widehat\C$, with branched
covering $f(z)=z^2-1$. Its post-critical locus is $\mathscr
P=\{0,-1,\infty\}$. A direct calculation (see e.g.~\cite{MR2091700})
gives that its biset is the automaton~\eqref{BS2:eq:bas}; the relevant
paths are shown here:
\[\begin{tikzpicture}[>=stealth]
\useasboundingbox (-6,-2) rectangle (6,2);
\node[gray] at (0,0) {\includegraphics[width=12cm,height=4cm]{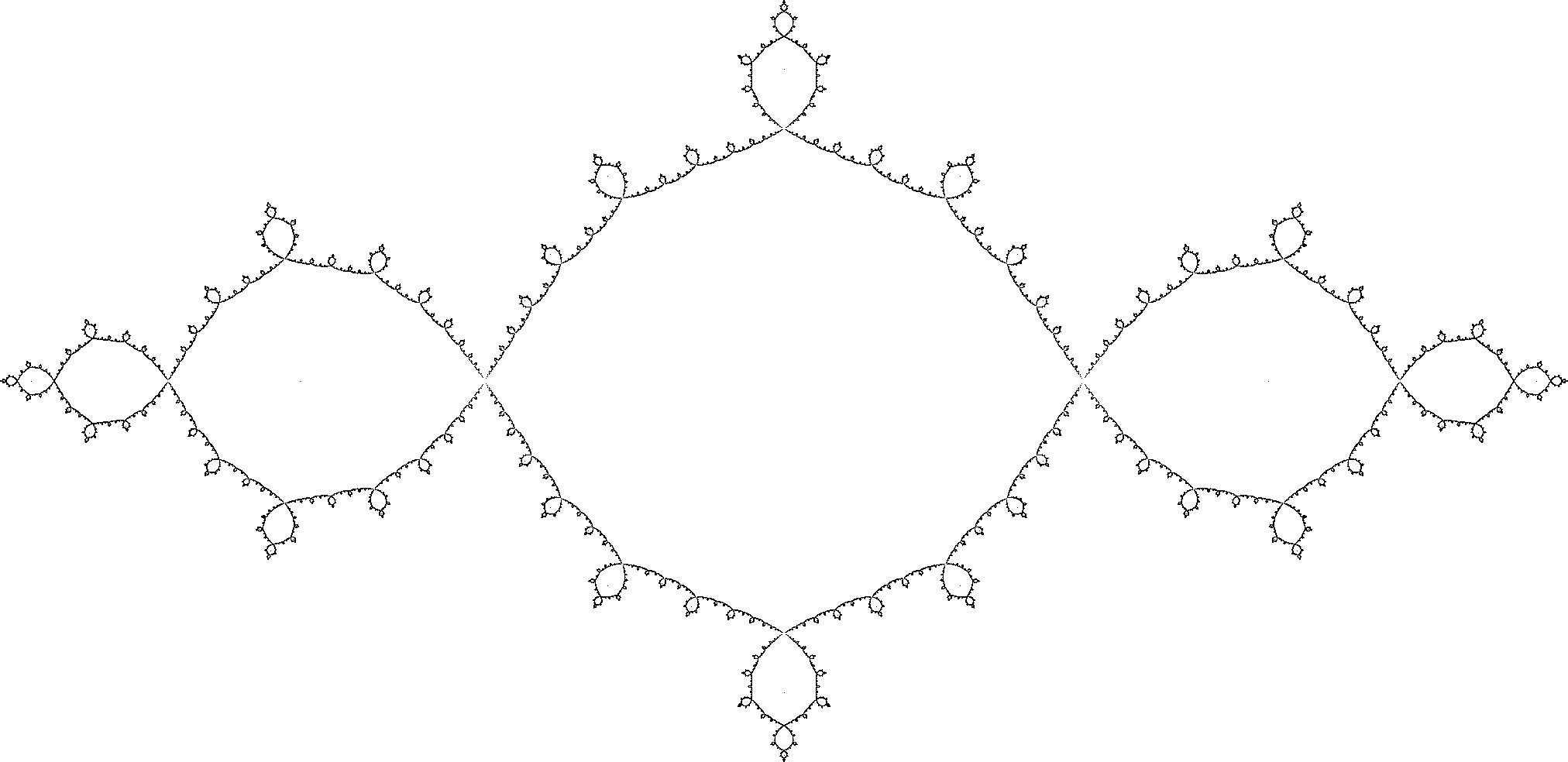}};
\node at (0,0) {$0$};
\node at (-3.6,0) {$-1$};
\node[inner sep=0pt] (*) at (-2.1,0) {$*$};
\node[inner sep=0pt] (x1) at (-2.6,0) {$x_1$};
\node[inner sep=0pt] (x0) at (2.6,0) {$x_0$};
\draw[thick,->] (*) .. controls (2.5,-4) and (2.5,4) .. node[left] {$a$} (*);
\draw[thick,->] (*) .. controls (-5,2.3) and (-5,-2.3) .. node [very near end,below] {$b$} (*);
\draw[thick,dashed,->] (x1) .. controls (-5.3,2) and (-5.3,-2) .. node [near start,above left=-1mm] {$f^{-1}(a)$} (x1);
\draw[thick,dashed,->] (x0) .. controls (5.3,-2) and (5.3,2) .. node [near start,below right=-1mm] {$f^{-1}(a)$} (x0);
\draw[thick,dashed,->] (x0) .. controls (1.4,2) and (-1.4,2) .. node [near start,above right=-1mm] {$f^{-1}(b)$} (x1);
\draw[thick,dashed,->] (x1) .. controls (-1.4,-2) and (1.4,-2) .. node [near start,below left=-1mm] {$f^{-1}(b)$} (x0);
\draw[thick,dotted,->] (*) -- (x1);
\draw[thick,dotted,->] (*) .. controls (-1.3,0.3) and  (1,1.6) .. node [below] {$a_{x_0}$} (x0);
\end{tikzpicture}\]

Branched self-coverings are encoded by self-similar groups in the
following sense:
\begin{theorem}[\Nekrashevych]
  Let $f,g$ be branched coverings. Then $f,g$ are combinatorially
  equivalent\Thurston[] if and only if the bisets $\mathfrak
  B_f,\mathfrak B_g$ are equivalent.
\end{theorem}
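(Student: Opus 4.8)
The statement is an equivalence between a dynamical condition (combinatorial equivalence of branched coverings) and an algebraic one (equivalence of the associated self-similarity bisets). I would prove the two implications separately, and in each the heart of the matter is keeping track of how the base point and the chosen paths are allowed to move.

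\medskip

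\noindent\emph{From dynamics to algebra.} Suppose $g_t$ is a path through branched coverings with $g_0=f$, $g_1=g$, along which the post-critical set $\mathscr P_{g_t}$ varies continuously. First I would use an isotopy-extension / Alexander-type argument to produce an ambient isotopy $h_t$ of $X$ carrying $\mathscr P_f$ to $\mathscr P_g$ and intertwining $g_0$ with $g_1$ up to that isotopy; here one must be slightly careful because the covering is only defined away from the branch locus, but the continuity of $\mathscr P_{g_t}$ is exactly what makes the restriction to $\Omega_t=X\setminus\mathscr P_{g_t}$ behave well. The map $h_1$ then induces an isomorphism $\theta:\pi_1(\Omega_f,*)\to\pi_1(\Omega_g,h_1(*))$, and after conjugating by a path from $h_1(*)$ back to $*$ we get an automorphism $\theta$ of $G=\pi_1(\Omega,*)$. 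Tracking the preimages $f^{-1}(*)$ under the isotopy and following the lifting construction that defines $\mathfrak B_f$, one checks that $h_1$ sends $f$-lifts of loops to $g$-lifts of loops; this yields a bijection $\varphi:\mathfrak B_f\to\mathfrak B_g$ satisfying $\theta(x)\varphi(b)\theta(y)=\varphi(xby)$, i.e. the bisets are equivalent in the sense defined just above the theorem.

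\medskip

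\noindent\emph{From algebra to dynamics.} Conversely, suppose $\varphi:\mathfrak B_f\to\mathfrak B_g$ and $\theta:G\to G$ witness equivalence of bisets. Choosing bases $A_f,A_g$ as in the construction preceding the theorem, the biset equivalence translates into an explicit combinatorial datum: a bijection $f^{-1}(*)\to g^{-1}(*)$ and a prescription of how the chosen connecting paths $a_x$ correspond, compatibly with the recursion $\tau$. The plan is to realize this datum geometrically by first building a homeomorphism on a neighbourhood of $\mathscr P_f\cup f^{-1}(*)$ sending it to the corresponding set for $g$, then extending it over $\Omega$ using that $\theta$ is realized by a homeomorphism of the punctured surface (this is where one invokes, implicitly, that a base-point-preserving automorphism of a surface group is induced by a homeomorphism — the Dehn–Nielsen–Baer circle of ideas — in the relevant setting), and finally interpolating to get a path of branched coverings. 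One would check that along this path $\mathscr P$ moves continuously, because it is cut out at each stage by the (moving) images of the branch points, which we have arranged to vary continuously.

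\medskip

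\noindent\emph{Main obstacle.} The genuinely hard direction is algebra-to-dynamics: the issue is not the bookkeeping with paths but the \emph{rigidity} step — upgrading a purely combinatorial/group-theoretic isomorphism of bisets to an honest isotopy through branched coverings, rather than merely a homotopy equivalence. One must ensure that the homeomorphism realizing $\theta$ can be chosen to interact correctly with the covering structure of $f$ and $g$ simultaneously, so that the interpolating family consists of branched \emph{coverings} and not arbitrary branched maps. I expect this to require a careful use of the covering homotopy property together with the fact, used throughout this section, that $\mathfrak B$ is free as a right $G$-set (this freeness is precisely what lets the lifting of loops be unambiguous and hence transported along the isotopy). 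Once that rigidity point is settled, the two implications close up into the stated equivalence.
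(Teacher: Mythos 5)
First, a point of order: the paper does not actually prove this theorem. It is stated as a cited result of Nekrashevych (the proof lives in \cite{MR2162164} and \cite{MR2285317}), and the text moves on immediately, so there is no in-paper argument to compare yours against. Judged on its own terms, your dynamics-to-algebra direction is essentially sound: transporting the base point, the punctures and the connecting paths along the path $g_t$ by an ambient isotopy does intertwine the left (lifting) and right (prepending) actions and produces a pair $(\varphi,\theta)$ witnessing equivalence of $\mathfrak B_f$ and $\mathfrak B_g$.

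The algebra-to-dynamics direction, however, has a genuine gap at exactly the ``rigidity step'' you flag, and the tool you propose does not close it. For $X=\widehat\C$ the group $G=\pi_1(\Omega,*)$ is free, and $\operatorname{Out}(G)$ is vastly larger than the mapping class group of the punctured sphere: an automorphism of a free surface group is induced by a homeomorphism only if it preserves the \emph{peripheral structure}, i.e.\ permutes the conjugacy classes of small loops around the points of $\mathscr P$ (compatibly with orientation). The Dehn--Nielsen--Baer realization you invoke therefore cannot be applied to the $\theta$ coming from a biset equivalence until one has \emph{proved} that such a $\theta$ preserves peripheral structure --- and that is where the real content of the hard implication lies: one must extract from the biset (which records how peripheral loops lift, hence the local degrees and the induced dynamics on $\mathscr P$) the fact that $\theta$ carries loops around punctures to conjugates of loops around punctures, in a way compatible with $f$ and $g$; alternatively one must read the definition of ``equivalent'' as requiring $\theta$ to be geometric, in which case that hypothesis still has to be used explicitly to build the interpolating family. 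The freeness of $\mathfrak B$ as a right $G$-set, on which you lean for the rigidity, does not help here: it only encodes that $f$ is a covering of degree $\Card A$. As written, then, the backward implication is not established; the forward one is fine modulo routine isotopy-extension bookkeeping.
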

This result has been used to answer a long-standing open problem in
complex dynamics~\cite{MR2285317}.

If furthermore $G$ is finitely generated and the map $f$ expands a
length metric, then the associated biset may be defined by a
contracting automaton\index{contracting~automaton/group}. This is, in
particular, the case for all rational maps acting on the sphere
$\widehat\C$.

\begin{definition}
  Let $f:X\to X$ be a branched self-covering. The \emph{iterated
    monodromy group}\index{group!iterated~monodromy} of $f$ is the
  automata group $G(f)=G(\mealy)$,
  where $\mealy$ is an automaton describing the biset $\mathfrak B_f$.
\end{definition}

If $G=G(\mealy)$ is a contracting self-similar group, consider the
hyperbolic boundary $\mathscr J=\partial\mathscr X(\mealy)$, called
the \emph{limit space}\index{limit space} of $G$. It admits an
expanding self-covering map $s:\mathscr J\to\mathscr J$, induced on
vertices by the shift map $s(au)=u$.
\begin{theorem}[\cite{MR2162164}*{Theorems~5.2.6 and~5.4.3}]\label{BS2:thm:limit}
  The groups $G(s)$ and $G(\mealy)$ are isomorphic.

  Conversely, suppose $f$ is an analytic map, with \emph{Julia
    set}\index{Julia~set}
  $J$, the points near which $\{f^{\circ n}\mid n\in\N\}$ does not
  form a normal family. Then $(J,f)$ and $(\mathscr J,s)$ are
  homeomorphic and topologically conjugate.
\end{theorem}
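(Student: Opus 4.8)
The plan is to establish the two claims of Theorem~\ref{BS2:thm:limit} separately, using the combinatorial model $\mathscr X(\mealy)$ and its boundary. For the first claim, that $G(s)\cong G(\mealy)$, the idea is to show that the self-similarity biset of $G(s)$ associated to the expanding covering $s:\mathscr J\to\mathscr J$ is isomorphic (or equivalent) to the self-similarity biset $A\times G(\mealy)$ of $G(\mealy)$; the isomorphism of groups then follows because a self-similar group is recovered from its biset, as explained in~\S\ref{BS2:sec:img}. Concretely, first I would check that $\mathscr J=\partial\mathscr X(\mealy)$ is connected and that $s$, being induced by the shift $s(au)=u$ on the vertex set $A^*$, is a well-defined degree-$|A|$ branched covering expanding the visual metric on $\mathscr J$ coming from the hyperbolic structure (Proposition~\cite{MR2162164}*{Theorem~3.8.6} guarantees hyperbolicity). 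The sheets of $s$ over a basepoint $*$ are indexed by $A$ (the first letter that was deleted), so a basis of the biset $\mathfrak B_s$ is naturally $A$-indexed; tracing an element $g\in G(s)=\pi_1(\mathscr J)$-monodromy through its unique $s$-lift reproduces exactly the transition rule $\tau(g,a)=(b,k)$ that defines $G(\mealy)$ as a self-similar group. Thus $\mathfrak B_s\cong A\times G(\mealy)$ as bisets, whence $G(s)\cong G(\mealy)$.

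For the second claim, suppose $f$ is analytic with Julia set $J$. The strategy is to produce a conjugacy between $(\mathscr J,s)$ and $(J,f)$ by comparing symbolic codings. One codes points of $J$ by backward orbits, i.e.\ by infinite sequences in $A^\omega$ where $A$ indexes the preimages under $f$ and two sequences code the same point when the corresponding backward orbits have bounded distance; this is precisely the equivalence relation defining $\partial\mathscr X(\mealy)$ when $\mealy$ is the (contracting) automaton describing the biset $\mathfrak B_f$ of $f$, available because $f$ expands a length metric near $J$ and $G(f)$ is finitely generated, so the associated automaton is contracting (as remarked just before Definition of iterated monodromy groups). The contraction property is what makes the asymptotic equivalence relation on $A^\omega$ a \emph{finitely presented} (sofic) shift equivalence, matching the quotient description of the hyperbolic boundary. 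Under this identification the shift map $s$ corresponds to deleting the first symbol of a backward orbit, which is exactly applying $f$; hence the coding map $A^\omega\to J$ descends to a homeomorphism $\mathscr J\to J$ intertwining $s$ and $f$.

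The main obstacle, and the step requiring the most care, is verifying that the coding map is a well-defined \emph{homeomorphism}, not merely a continuous surjection: this is where contraction is essential. One must show that two backward orbits stay a bounded distance apart if and only if the corresponding rays in $\mathscr X(\mealy)$ converge to the same boundary point, and that distinct boundary points give distinct points of $J$. Injectivity uses that $f$ is expanding (so that distinct points separate under iteration of $f^{-1}$ only in a controlled, eventually periodic way governed by the nucleus), together with the Morse-type stability of quasi-geodesics in the hyperbolic graph $\mathscr X(\mealy)$ (the Morse Lemma quoted earlier). Surjectivity and continuity are comparatively routine from compactness of $J$ and of $\mathscr J$. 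Once the homeomorphism is in place, equivariance with respect to $s$ and $f$ is immediate from the definitions, and the topological conjugacy follows. The remaining bookkeeping --- that the visual metric on $\partial\mathscr X(\mealy)$ and any metric on $J$ for which $f$ is expanding induce the same topology --- is standard for contracting automata and may be invoked from~\cite{MR2162164}.
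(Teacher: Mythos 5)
The paper offers no proof of this theorem: it is quoted verbatim from Nekrashevych's book \cite{MR2162164} (Theorems~5.2.6 and~5.4.3), so there is no in-paper argument to compare yours against. Your sketch does follow the strategy of the cited source --- for the first claim, matching the self-similarity biset of the shift $s$ on $\mathscr J$ with $A\times G(\mealy)$; for the second, coding points of $J$ by backward orbits and showing that the asymptotic equivalence relation controlled by the nucleus is exactly the relation defining $\partial\mathscr X(\mealy)$ --- and at the level of detail of this survey that is the right outline.

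Two points, however, are genuine gaps rather than omitted routine. First, you propose to ``check that $\mathscr J$ is connected'' and then work with $\pi_1(\mathscr J,*)$ and its monodromy as if $\mathscr J$ were a reasonable covering-space-theoretic object. Connectivity of the limit space is not automatic (it requires the action to be recurrent/self-replicating), and, worse, $\mathscr J$ is typically not semi-locally simply connected (the Basilica Julia set already has loops accumulating at every scale), so the naive fundamental group is the wrong object and your biset $\mathfrak B_s$ does not literally typecheck as a $G(\mealy)$-biset. The proof in \cite{MR2162164} circumvents this by working with the limit orbispace and with the tree of preimages directly, defining the iterated monodromy group of $s$ through approximating complexes rather than through $\pi_1$ of the quotient space; some substitute for that device is needed for your first paragraph to close. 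Second, in the converse direction the statement needs $f$ to be expanding on a neighbourhood of $J$ (e.g.\ sub-hyperbolic); you do invoke expansion, but the injectivity argument should rest on the combinatorial fact that two sequences in $A^\omega$ are identified if and only if they are connected by a left-infinite path in the nucleus, rather than on the Morse lemma, which controls quasi-geodesics in $\mathscr X(\mealy)$ but does not by itself separate distinct boundary points mapping to the same point of $J$.
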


For instance, the Julia set of the Basilica map $f(z)=z^2-1$ is
depicted above. Appropriately scaled and metrized, the Schreier graphs
of the action of $G(\mealy)$ on $X^n$ converge to $\mathscr J$.

The first appearance of encodings of branched coverings by automata
seems to be the ``finite subdivision rules'' by \Cannon, \Floyd\
and \Parry~\cite{MR1875951}; they wished to know when a branched
covering of the sphere may be realized as a conformal map. In their
work, a finite subdivision rule is given by a finite subdivision of
the sphere, a refinement of it, and a covering map from the refinement
to the original subdivision; by iteration, one obtains finer and finer
subdivisions of the sphere. The combinatorial information involved is
essentially equivalent to a self-similarity biset.  Contraction of
$G(\mealy)$ and combinatorial versions of expansion have been related
in~\cite{MR2520071}.

\subsection{Reversible actions}\label{BS2:sec:rev}
Recall that an automaton $\mealy$ is
\emph{reversible}\index{Mealy~automaton!reversible}\index{reversible~Mealy~automaton} if its dual
$\mealy^\vee$ is invertible. In other words, if $g\in G(\mealy)$, the
action of $g$ is determined by the action on any subset $uA^*$, for
$u\in A^*$.

We have already seen some examples of reversible automata,
notably~(\ref{BS2:eq:BS13},\ref{BS2:eq:lamplighter}). That last
example generalizes as follows: consider a finite group $G$, and set
$A=Q=G$. Define an automaton $\mathcal C_G$, the ``Cayley
automaton''\index{Mealy~automaton!Cayley~automaton}\index{Cayley~automaton}
of $G$, by $\tau(q,a)=(qa,qa)$. This automaton seems to have first
been considered in~\cite{MR0175718}*{page~358}\Krohn[]\Rhodes[]. The
automaton $\mathcal L$ in~\eqref{BS2:eq:lamplighter} is the special
case $G=\Z/2\Z$. The inverse of the automaton $\mathcal C_G$ is a
\emph{reset machine}\index{Mealy~automaton!reset~machine}, in that the
target of a transition depends only on the input, not on the source
state. \Silva\ and \Steinberg\ \cite{MR2197829} prove that, if $G$
is abelian, then $G(\mathcal C_G)=G\wr\Z$.

A large class of reversible automata is covered by the following
construction. Let $R$ be a ring, let $M$ be an $R$-module, and let $N$
be a submodule of $M$, with $M/N$ finite. Let $\varphi:N\to M$ be an
$R$-module homomorphism. Define a decreasing sequence of submodules
$M_i$ of $M$ by $M_0=M$ and $M_{n+1}=\varphi^{-1}(M_n)$, and denote by
$\End_R(M,\varphi)$ the algebra of $R$-endomorphisms of $M$ that map
$M_n$ into $M_n$ for all $n$. Assume finally that there is an algebra
homomorphism $\widehat\varphi:\End_R(M,\varphi)\to\End_R(M,\varphi)$ such that
$\varphi(an)=\widehat\varphi(a)\varphi(n)$ for all $a\in\End_R(M,\varphi),n\in
N$. Consider
\[T_M=\{z\mapsto az+m\mid a\in\End_R(M,\varphi),m\in M\}\]
the affine semigroup\index{group!affine} of $M$.

\begin{theorem}\label{BS2:thm:affine}
  Let $A$ be a transversal of $N$ in $M$. Then
  the semigroup $T_M$ acts self-similarly on $A^*$, by
  \[\tau(az+b,x)=(y,\widehat\varphi(a)z+\varphi(ax+b-y))\text{ for the
    unique $y\in A$ with }ax+b-y\in N.\]
  This action is
  \begin{enumerate}
  \item faithful if and only if $\bigcap_nM_n=0$;\label{BS2:enum:faithful}
  \item reversible if and only if $\varphi$ is injective;\label{BS2:enum:rev}
  \item defined by a finite-state automaton if $\widehat\varphi$ is an
    automorphism of finite order, and there exists a norm
    $\|\cdot\|:M\to\N$ such that $\|a+b\|\le\|a\|+\|b\|$, for all
    $K\in\N$ the ball $\{m\in M\mid K\ge\|m\|\}$ is finite, and a
    constant $\lambda<1$ satisfies $\|\varphi(n)\|\le\lambda\|n\|$ for
    all $n\in N$.\label{BS2:enum:autom}
  \end{enumerate}
\end{theorem}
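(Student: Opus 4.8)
The plan is to verify the three assertions in sequence, after first checking that the map $\tau$ in the statement is well defined and indeed describes a self-similar action of the affine semigroup $T_M$. For the latter, I would note that for $az+b\in T_M$ and $x\in A$, the choice of $y\in A$ with $ax+b-y\in N$ is unique because $A$ is a transversal of $N$ in $M$, and that $\widehat\varphi(a)\in\End_R(M,\varphi)$ by hypothesis, while $\varphi(ax+b-y)\in M$, so the ``return element'' $\widehat\varphi(a)z+\varphi(ax+b-y)$ is again an element of $T_M$. Functoriality $\tau(gh,x)=\tau(g,h\cdot x)$ composed appropriately then follows from the relation $\varphi(an)=\widehat\varphi(a)\varphi(n)$, which is precisely what makes $\widehat\varphi$ compatible with the twisting; this is a routine bookkeeping computation that I would carry out once carefully to pin down conventions, writing the action on $A^*$ by iterating $\tau$ in the usual way.

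For part~(\ref{BS2:enum:faithful}), I would unwind what it means for $az+b$ to act trivially on $A^*$: tracking a word $x_1x_2\cdots x_n\in A^*$, the element fixes it for all $n$ iff at every level the ``displacement'' lands back in the transversal representative, which forces $ax+b\in A$ (the representative of itself) for all relevant $x$, and then recursively forces the return elements to act trivially. Chasing this recursion, triviality of the action is equivalent to $b\in M_n$ and $a M\subseteq$ (something landing in $M_n$) for all $n$; more precisely I expect it to boil down to the statement that $az+b$ acts trivially iff $b\in\bigcap_n M_n$ and $a$ acts as the identity modulo $\bigcap_n M_n$, so the action is faithful exactly when $\bigcap_n M_n=0$. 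The key point is that $M_{n+1}=\varphi^{-1}(M_n)$ measures exactly how deep into $A^*$ one must descend before a residual error in $M$ becomes visible. For part~(\ref{BS2:enum:rev}), reversibility means the dual automaton is invertible, i.e.\ for fixed input letter the transition $x\mapsto(y,\text{state})$ together with the ``output state'' determines the source; concretely, given $y\in A$ and the return element $\widehat\varphi(a)z+\varphi(ax+b-y)$, one must recover $az+b$ and $x$. Since $\widehat\varphi$ is injective (it is an endomorphism, and for reversibility we need to solve for $a$), and since recovering $ax+b-y$ from its image under $\varphi$ requires $\varphi$ injective, I would show reversibility $\iff\varphi$ injective: injectivity of $\varphi$ lets us solve $\varphi(ax+b-y)=$ given element for $ax+b-y$, hence for $x$ (mod $N$, hence exactly in $A$) and for $b$; conversely if $\varphi$ kills some nonzero $n\in N$ one builds two distinct affine maps with identical dual transitions.

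Part~(\ref{BS2:enum:autom}) is the one I expect to be the real obstacle, since ``finite-state'' is a finiteness statement about the set of all return elements reachable from a given generator, and this requires a genuine contraction estimate. The plan is: given $az+b\in T_M$, consider the set $S$ of all elements of $T_M$ arising as iterated return elements $\tau(\cdots\tau(\tau(az+b,x_1),x_2)\cdots,x_k)$ over all words $x_1\cdots x_k\in A^*$. Write such a return element as $a'z+b'$ with $a'=\widehat\varphi^{\,k}(a)$ (using that $\widehat\varphi$ is an automorphism of finite order, so the $a'$ range over a finite set $\{\widehat\varphi^{\,j}(a):0\le j<\operatorname{ord}\widehat\varphi\}$), and track $\|b'\|$. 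At each descent step $b'$ is replaced by $\varphi(a'x+b'-y)$ with $x,y\in A$; using $\|\varphi(n)\|\le\lambda\|n\|$ on $N$ (note $a'x+b'-y\in N$ by construction) together with subadditivity of $\|\cdot\|$ and the fact that $\{a'x-y:x,y\in A, 0\le j<\operatorname{ord}\widehat\varphi\}$ is finite hence norm-bounded by some constant $C$, one gets $\|b'_{\text{new}}\|\le\lambda(\|b'\|+C)$. Iterating, $\limsup\|b'\|\le\lambda C/(1-\lambda)$, so all but finitely many descent steps land in the finite ball $\{m:\|m\|\le\lambda C/(1-\lambda)+1\}$; combined with the finiteness of the possible $a'$, the set $S$ of distinct return elements is finite, which is exactly the statement that $az+b$ is a finite-state transformation, and hence $T_M$ (being generated by such) acts by a finite-state automaton on $A^*$. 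The delicate points I would be careful about are (i) that the ``error'' $a'x+b'-y$ genuinely lies in $N$ so that the contraction hypothesis applies, and (ii) handling the finitely many initial steps where $\|b'\|$ may be large before the estimate kicks in — these contribute only finitely many extra states.
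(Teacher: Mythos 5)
Your proposal matches the paper's own (sketched) proof in all three parts: faithfulness is reduced to deciding when a translation $z\mapsto z+m$ acts trivially, which happens exactly when $m\in\bigcap_n M_n$; reversibility is reduced to injectivity of the map sending $g$ to its restriction to $xA^*$; and finite-stateness follows from the contraction estimate $\|b'_{\text{new}}\|\le\lambda(\|b'\|+C)$ confining all but finitely many reachable states to a finite ball (you track the finite orbit of $\widehat\varphi$ explicitly where the paper simply reduces to $\widehat\varphi=1$, which is if anything more careful). The one blemish is your parenthetical claim that $\widehat\varphi$ is injective ``because it is an endomorphism'' --- that is not a justification --- but it does not alter the architecture of the argument, which is exactly the paper's.
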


We already saw some examples of this construction: the lamplighter
automaton $\mathcal L$ is obtained by taking $R=M=\mathbb F_2[t],
N=tM$, $\varphi(tm)=m$, $\widehat\varphi=1$, and $\|f\|=2^{\deg f}$ with
$\lambda=\frac12$. The semigroup $S(\mathcal L)$ is contained in
$T_M$, and the group $G(\mathcal L)$ is contained in the affine group
of $\mathbb F_2[[t]]$. More generally, the Cayley
automaton\index{Mealy~automaton!Cayley~automaton} of a finite group
$G$ is obtained by taking $R=G[[t]]$ with $G$ viewed as a ring with
product $xy=0$ unless $x=1$ or $y=1$.

The adding machine~\eqref{BS2:eq:adding} generates the subgroup of
translations in the affine group of $M$ with $R=M=\Z, N=2M$,
$\varphi(2m)=m$, and $\|m\|=|m|$. The same ring-theoretic data produce
the Baumslag-Solitar
group~\eqref{BS2:eq:BS13}\index{group!Baumslag-Solitar}; as above, we
use $R=\Z$ to obtain a semigroup, and $R=\Z_2$ (or any ring in which
$3$ is invertible) to obtain a group.

Consider, more generally, $R=\Z,M=\Z^n, N=2M$, and $\varphi(2m)=m$. These
data produce the affine group\index{group!affine} $\Z^n\rtimes\GL_n(\Z)$,
proving Theorem~\ref{BS2:thm:linear}.

A finer construction, giving an action on the binary tree, is to take
again $M=\Z^n$ and $N=\varphi^{-1}(M)$ with
$\varphi^{-1}(x_1,\dots,x_n)=(2x_n,x_1,\dots,x_{n-1})$; here
$\widehat\varphi(a)=\varphi\circ a\circ\varphi^{-1}$. This gives a faithful
action, on the binary tree, of
\[\Z^n\rtimes\{a\in\GL_n(\Z)\mid a\bmod 2\text{ is lower triangular}\}.\]

\begin{proof}[Sketch of proof]
  \eqref{BS2:enum:faithful} The action is faithful if and only if the
  translation part $\{z\mapsto z+m\}$ acts faithfully; and $z\mapsto
  z+m$ acts trivially on $A^*$ if and only if $m\in M_n$ for all
  $n\in\N$.

  \eqref{BS2:enum:rev} For any $x\in A$, the map (not a homomorphism!)
  $T_M\to T_M$ which to $g\in T_M$ associates the permutation of $A^*$
  given by $A^*\to xA^*\overset g\to g(x)A^*\to A^*$ is injective precisely when
  $\varphi$ is injective.

  \eqref{BS2:enum:autom} Without loss of generality, suppose
  $\widehat\varphi=1$. Consider $g=z\mapsto az+m\in T_M$. Let $K$ be
  larger than the norms of $ax+y$ for all $x,y\in A$. Then the states
  of an automaton describing $g$ are all of the form $z\mapsto az+m'$,
  with $\|m'\|\le(\|m\|+K)/(1-\lambda)$; there are finitely many
  possibilities for such $m'$.
\end{proof}

Note that the transversal $A$ amounts to a choice of ``digits'': the
analogy is clear in the case of the adding
machine~\eqref{BS2:eq:adding}, which has digits $\{0,1\}$ and
``counts'' in base $2$. For more general radix representations and
their association with automata, see~e.g.~\cite{MR1267355}.

\subsection{Bireversible actions}\label{BS2:sec:bireversible}
\index{Mealy~automaton!bireversible|(}\index{bireversible~Mealy~automaton|(}
Recall that an automaton $\mealy$ is bireversible if
$\mealy,\mealy^\vee,(\mealy^{-1})^\vee,((\mealy^\vee)^{-1})^\vee$
etc.\ are all invertible; equivalently, the map $\tau:Q\times A\to
A\times Q$ is a bijection for $Q$ the state set of
$\mealy\sqcup\mealy^{-1}$.

Bireversible automata are interpreted in~\cite{MR1841119} in terms of
\emph{commensurators} of free groups, defined
in~\eqref{BS1:eq:commensurator} of Chapter~\ref{chapterBS1}. Consider
a free group $F_A$ on a set $A$. Its Cayley graph $\mathscr C$ is a
tree, and $F_A$ acts by isometries on $\mathscr C$, so we have
$F_A\le\Isom(\mathscr C)$. Furthermore, $\mathscr C$ is oriented: its
edges are labeled by $A\sqcup A^{-1}$, and we choose as orientation the
edges labeled $A$. In this way, $F_A$ is contained in the
orientation-preserving subgroup of $\Isom(\mathscr C)$, denoted
$\overrightarrow{\Isom(\mathscr C)}$.
\begin{proposition}
  The stabilizer of $\one$ in
  $\operatorname{Comm}_{\overrightarrow{\Isom(\mathscr C)}}(F_A)$ is
  the set of bireversible automata with alphabet $A$.
\end{proposition}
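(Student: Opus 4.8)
The plan is to make the identification explicit in both directions and to isolate the one genuinely delicate verification. Fix $g\in\overrightarrow{\Isom(\mathscr C)}$ with $g(\one)=\one$. Because $g$ preserves the orientation, it sends each oriented edge $w\to wa$ ($a\in A$) to an oriented edge $g(w)\to g(w)b$; hence $g$ preserves the positive subtree $A^*\subseteq F_A$ and restricts to an automorphism $g|_{A^*}$ of the rooted tree on $A^*$. Encode it by the initial Mealy automaton $\mealy_g$ whose states are the sections $g|_w:=\ell_{g(w)^{-1}}\circ g\circ\ell_w$ for $w\in A^*$ (here $\ell_x$ is left translation by $x\in F_A$; each $g|_w$ again fixes $\one$ and preserves orientation), with initial state $g|_\one=g$ and transitions $g|_w\cdot a=g|_w(a)\cdot g|_{wa}$. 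Note $\mealy_g$ is automatically invertible, since $a\mapsto g|_w(a)$ is the permutation $g|_w$ induces on the outgoing edges at $\one$.

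The first half is to show that $g\in\operatorname{Comm}(F_A)$ forces $\mealy_g$ to be bireversible, and that $g$ is determined by it. Put $H=F_A\cap g^{-1}F_Ag$; as $\operatorname{Comm}_{\overrightarrow{\Isom(\mathscr C)}}(F_A)$ is a subgroup (so $g^{-1}$ lies in it as well), $H$ has finite index. For $h\in H$ one has $g\ell_h=\ell_{ghg^{-1}}g$ with $g(h)=ghg^{-1}$, whence $g(hv)=g(h)g(v)$ for all $v$; feeding this into the definition of sections gives $g|_{hw}=g|_w$, and in fact $g|_u=g|_v\iff Hu=Hv$. So the state set of $\mealy_g$ is (in bijection with) $H\backslash F_A$ — in particular finite — and the state transition $g|_w\mapsto g|_{wa}$ is right multiplication by $a$ on $H\backslash F_A$, a bijection; thus $\mealy_g^\vee$ is invertible, i.e. $\mealy_g$ is reversible. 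The same argument applied to $g^{-1}$ (again in $\operatorname{Comm}(F_A)$, fixing $\one$, with $\mealy_g^{-1}$ computing $g^{-1}|_{A^*}$) shows $\mealy_g^{-1}$ reversible, i.e. $(\mealy_g^{-1})^\vee$ invertible; hence $\mealy_g$ is bireversible. For the injectivity of $g\mapsto\mealy_g$: the Schreier graph $\hat H\backslash F_A$ of a finite-index normal $\hat H\le H$ is balanced, hence strongly connected, so every coset of $\hat H$ contains a positive word; since $g(hv)=g(h)g(v)$ for $h\in\hat H$ and $g$ fixes positive words in $\hat H$ (using $\langle\hat H\cap A^*\rangle=\hat H$), one propagates $g|_{A^*}$ to all of $\mathscr C$.

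The second half starts from a bireversible automaton $\mealy$ over $A$ with state set $Q$ and initial state $q_0$, and reconstructs $g$. Reversibility makes each $a\in A$ act on $Q$ by a bijection $\beta_a$, so $\mealy$ extends to a transducer $\mealy^\pm$ over $A\sqcup A^{-1}$: reading $a^{-1}$ from a state $q$ sends it to $\beta_a^{-1}(q)$ and outputs $b^{-1}$, where $b$ is the output of $\beta_a^{-1}(q)\cdot a$. One checks that reading $aa^{-1}$ or $a^{-1}a$ returns to the starting state with cancelling output, so $\mealy^\pm$ descends to an orientation-preserving map $g$ of the reduced-word tree $\mathscr C$, fixing $\one$ and extending the transformation $\mealy_{q_0}$ of $A^*$. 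That $g$ is a \emph{bijection} of $\mathscr C$ (equivalently, that $\mealy^\pm$ is complete with respect to output over $A\sqcup A^{-1}$), so that $g\in\overrightarrow{\Isom(\mathscr C)}$, is exactly where invertibility of $\mealy$ (outputs in $A$), of $\mealy^\vee$ (so $\beta_a$ and $\beta_a^{-1}$ make sense), and of $(\mealy^{-1})^\vee$ (outputs in $A^{-1}$) are all used. Finally, let $\rho\colon F_A\to\operatorname{Sym}(Q)$ be the action generated by the $\beta_a$; then $H:=\operatorname{Stab}_\rho(q_0)$ has index $\le\Card Q$, and a short computation identifies it with $F_A\cap g^{-1}F_Ag$ and shows $gHg^{-1}=F_A\cap gF_Ag^{-1}$, so $g^{-1}$, hence $g$, lies in $\operatorname{Comm}(F_A)$. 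The two assignments are mutually inverse once automata are identified with the transformations they define (equivalently, after minimization), which gives the stated equality.

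The step I expect to be the main obstacle is the bijectivity of $g$ on $\mathscr C$ in the reverse construction: one must check that the invertibility conditions packaged into ``bireversible'' are precisely what is needed for $\mealy^\pm$ to define an honest isometry of the whole $2\Card A$-regular tree, not merely a transducer on positive words. Conceptually this is the statement that a bireversible automaton over $A$ with state set $Q$ is the same datum as a one-vertex square complex whose universal cover is $\mathscr C$ times the $2\Card Q$-regular tree, with $\pi_1$ a lattice in $\overrightarrow{\Isom(\mathscr C)}\times\overrightarrow{\Isom(T_{2\Card Q})}$ whose ``horizontal'' projection contains $F_A$, and the commensurator elements of $F_A$ fixing $\one$ being read off from the ``vertical'' coordinate; translating this picture into the explicit verifications above is where the bulk of the work lies.
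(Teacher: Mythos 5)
Your argument is correct, and at bottom it establishes the same correspondence as the paper, but it is packaged quite differently. The paper's sketch goes through Stallings automata (\S\ref{BS1:stallingsautomaton}): erasing the output labels of a bireversible automaton yields the Stallings automaton of a finite-index subgroup $H_1$ of index $\Card Q$ in $F_A$, erasing the input labels yields an isomorphic subgroup $H_2$, and the automaton itself is the Cayley-graph-preserving isomorphism $H_1\to H_2$, i.e.\ a commensurating isometry; conversely one superimposes the Stallings graphs of the two subgroups as input and output labels. You instead work with the wreath recursion: you define the sections $g|_w$ of a commensurating isometry, identify the state set with $H\backslash F_A$ for $H=F_A\cap g^{-1}F_Ag$, and read off reversibility from the fact that right multiplication by $a$ permutes the cosets --- which is exactly the Schreier/Stallings graph of $H$ in disguise, so the two routes coincide combinatorially. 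What your version buys is that it makes explicit several points the paper's sketch leaves implicit: the extension of the transducer to $A\sqcup A^{-1}$ together with the check that reduced inputs produce reduced outputs (this is precisely where invertibility of $(\mealy^{-1})^\vee$ enters, and you are right that this is the delicate step), the bijectivity of the resulting map on all of $\mathscr C$, and the injectivity of $g\mapsto\mealy_g$ via $\langle\hat H\cap A^*\rangle=\hat H$. Two small repairs: that generation fact deserves its one-line justification (the Schreier graph is balanced, hence strongly connected, so a directed spanning tree and directed return paths exist and every generator $p_ue p_v^{-1}$ of $\pi_1$ is a ratio of two positive loops); and ``$g$ fixes positive words in $\hat H$'' should read ``$g$ is determined on positive words in $\hat H$ by $\mealy_g$'' --- as written the clause is false and clearly not what you intend, since $g$ need only carry $\hat H$ to $g\hat Hg^{-1}$.
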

\begin{proof}[Sketch of proof]
  The proof relies on an interpretation of finite-index subgroups of
  $F_A$ as complete automata, see~\S\ref{BS1:stallingsautomaton}.

  Let $\mealy$ be a bireversible automaton with alphabet $A$. Erase
  first the output labels from $\mealy$; this defines the \Stallings\
  automaton of a finite-index subgroup $H_1$ (of index $\Card Q$) of
  $F_A$. Erase then the input labels from $\mealy$; this defines an
  isomorphic subgroup $H_2$ of $F_A$. The automaton $\mealy$ itself
  defines an isomorphism between these two subgroups, which preserves
  the Cayley graph.

  Conversely, given an isometry $g$ of the Cayley graph of $F_A$ which
  restricts to an isomorphism $G\to H$ between finite-index subgroups
  of $F_A$, the \Stallings\ graphs of $G$ and $H$ and put their labels
  together, as input and output, to construct a bireversible
  automaton.
\end{proof}

It is striking that all known bireversible automata generate finitely
presented groups. There are, up to isomorphism, precisely two
minimized bireversible automata with three states and two alphabet
letters:
\[\begin{fsa}[baseline]
  \node[state] (a) at (-2.7,1.3) {$a$};
  \node[state] (c) at (-2.7,-1.3) {$c$};
  \node[state] (b) at (0,0) {$b$};
  \node at (-0.2,-1.3) {$\mathcal E_1$};
  \path (a) edge node {$0|0$} (b) edge[bend left] node {$1|1$} (c)
        (c) edge[-implies,double,bend left] node {$\begin{matrix}0|1\\1|0\end{matrix}$} (a)
        (b) edge node {$0|0$} (c) edge[loop right] node {$1|1$} (b);
  \end{fsa}\quad
  \begin{fsa}[baseline]
    \node[state] (a) at (-2.7,1.3) {$a$};
    \node[state] (c) at (-2.7,-1.3) {$c$};
    \node[state] (b) at (0,0) {$b$};
    \node at (-0.2,-1.3) {$\mathcal F_1$};
    \path (a) edge node {$0|1$} (b) edge[bend left] node {$1|0$} (c)
          (c) edge[-implies,double,bend left] node {$\begin{matrix}0|0\\1|1\end{matrix}$} (a)
          (b) edge node {$0|1$} (c) edge[loop right] node {$1|0$} (b);
\end{fsa}.\]
These automata are part of families, whose general term $\mathcal
E_n,\mathcal F_n$ has $2n+1$ states. We describe only $\mathcal F_n$:
\[\mathcal F_n:\begin{fsa}[baseline]
    \node[state] (a) at (2.5,1.5) {$a$};
    \node[state] (b) at (0,0) {$b$};
    \node[state] (c) at (2.5,-1.5) {$c$};
    \node[state] (d) at (5,-1.5) {$d$};
    \node[state] (m) at (7.5,-1.5) {$m$};
    \node[state] (n) at (7.5,1.5) {$n$};
    \node[state] (z) at (5,1.5) {$z$};
    \path (a) edge node[above] {$0|1$} (b) edge[bend left] node {$1|0$} (c)
          (b) edge node[below] {$0|1$} (c) edge[loop left] node {$1|0$} (b)
          (c) edge[-implies,double,color=gray] node[below] {$0|0,1|1$} (d)
          (d) edge[-implies,double,color=gray,dash pattern=on 1pt off 4pt] node[below] {$0|0,1|1$} (m)
          (m) edge[-implies,double,color=gray] node[right] {$0|0,1|1$} (n)
          (n) edge[-implies,double,color=gray,dash pattern=on 1pt off 4pt] node[above] {$0|0,1|1$} (z)
          (z) edge[-implies,double,color=gray] node[above] {$0|0,1|1$} (a);
\end{fsa}\]

\Aleshin\index{Al\"eshin~group!is~free}~\cite{MR713968} proved that
the group generated by the states $b_1,b_2$ in $\mathcal F_1,\mathcal
F_2$ respectively is a free group on its two generators; but his
argument (especially Lemma~8) has been considered incomplete, and a
detailed proof appears in~\cite{SVV}. \Aleshin's idea is to prove by
induction that, for any reduced word
$w\in\{b_1^{\pm1},b_2^{\pm1}\}^*$, the syntactic monoid of the
corresponding automaton acts transitively on its state set.

\Sidki\ conjectured that in fact $G(\mathcal F_1)$ is a free group on
its three generators; this has been proven in~\cite{MR2318547}. On the
other hand, $G(\mathcal E_1)$ is a free product of three cyclic groups of
order 2. Both
proofs illustrate some techniques used to compute with bireversible
automata. They rely on the following
\begin{lemma}\label{BS2:lem:birev}
  Let $L\subset Q^*$ be a subset mapping to $G(\mealy)$ through the
  evaluation map. If $L$ is $G(\mealy^\vee)$-invariant, and every
  $G(\mealy^\vee)$-orbit contains a word mapping to a nontrivial
  element of $G(\mealy)$, then $L$ maps injectively onto $G(\mealy)$.
\end{lemma}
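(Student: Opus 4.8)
The plan is to exploit the symmetry of~\eqref{BS2:eq:biset} between $\mealy$ and $\mealy^\vee$, which converts a statement about the action of $G(\mealy)$ on $A^*$ into one about the action of $G(\mealy^\vee)$ on $Q^*$. Fix notation: for $w\in Q^*$ and $v\in A^*$ write $\tau(w,v)=(w{\cdot}v,\,w|_v)$, so that $w{\cdot}v\in A^*$ is the image of $v$ under the transformation $\bar w\in G(\mealy)$ and $w|_v\in Q^*$ is the corresponding restriction. Reading~\eqref{BS2:eq:biset} inside the dual automaton shows that for a single letter $a\in A$ the restriction $q|_a$ is precisely the image of the letter $q$ under the transformation $a$ of $\mealy^\vee$; iterating $\tau$ over the letters of $v$ (as in the extension of $\tau$ to $Q^*\times A^*$, compare Lemma~\ref{BS2:lem:prodautom}) this upgrades to the identity $w|_v=v\cdot w$, where on the right $v\in A^*$ acts on $Q^*$ through $G(\mealy^\vee)$. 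Since $\mealy$ is bireversible, $\mealy^\vee$ is invertible, so $\{v\cdot w:v\in A^*\}$ is the entire $G(\mealy^\vee)$-orbit of $w$, and hence
\[
G(\mealy^\vee)\cdot w=\{\,w|_v : v\in A^*\,\}=\text{the set of restrictions of }\bar w.
\]
\textbf{This ``orbit $=$ restrictions'' dictionary is the heart of the argument}; the remaining steps are deductions from it.

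First, \emph{no nonempty word of $L$ evaluates to $\one$}. If $\bar w=\one$, then every restriction $w|_v$ also evaluates to $\one$ (restrictions of the identity are identities), so by the dictionary the whole $G(\mealy^\vee)$-orbit of $w$ evaluates to $\one$; this orbit lies in $L$ because $L$ is $G(\mealy^\vee)$-invariant, contradicting the hypothesis that every such orbit contains a word evaluating to a nontrivial element. Thus the evaluation map restricted to $L$ has the smallest conceivable kernel. Surjectivity onto $G(\mealy)$ is contained in the hypotheses: every $g\in G(\mealy)$ is $\bar w$ for some $w\in Q^*$, and one relocates $w$ within its $G(\mealy^\vee)$-orbit — using invariance of $L$ and the dictionary — so as to land in $L$.

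For injectivity, suppose $w_1\neq w_2$ lie in $L$ with $\bar w_1=\bar w_2$. Working, as in the applications to $\mathcal E_1$ and $\mathcal F_1$, in the situation where $Q$ is closed under formal inverses and $L$ is a language of freely reduced normal forms, the freely reduced form $u$ of $w_1w_2^{-1}$ is a \emph{nonempty} reduced word still lying in $L$, and $\bar u=\bar w_1\bar w_2^{-1}=\one$, contradicting the previous paragraph. Hence $L\to G(\mealy)$ is a bijection.

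\textbf{The main obstacle} is precisely this last step for a general $L$: one must either confine oneself to the case in which $L$ is a reduced-normal-form language (which is how the lemma is actually invoked), or replace the crude reduction argument by one that stays inside a single $G(\mealy^\vee)$-orbit and shows directly that evaluation is injective there. The latter route is where the full strength of bireversibility is used: reversibility of $\mealy$ makes each restriction map $q\mapsto q|_a$ a bijection of $Q$, so every state is recurrent and the automaton describing $\bar w$, cut down to the $G(\mealy^\vee)$-orbit of $w$, is forced to be minimal, whence distinct words of the orbit act by distinct transformations. I would also note that verifying the standing hypothesis that $L$ is genuinely $G(\mealy^\vee)$-invariant is, in practice, the hardest combinatorial task (Al\"eshin's transitivity argument), but it is an input to the lemma rather than part of its proof.
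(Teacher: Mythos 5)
The paper states this lemma without proof, so there is no printed argument to compare yours against; judged on its own, your central mechanism is the right one and is surely what the authors intend. The dictionary ``$G(\mealy^\vee)$-orbit of $w$ $=$ set of sections $w|_v$ of $w$'', read off from the symmetry of~\eqref{BS2:eq:biset} (with the caveat that, depending on which end of a word in $Q^*$ acts first on the input, the dual action computes the sections of the \emph{reversed} word --- a convention issue, not a mathematical one), together with the observation that a word evaluating to $\one$ has all of its sections evaluating to $\one$, forces the entire orbit of such a word to evaluate to $\one$, contradicting the hypothesis. One point worth making explicit: a priori the sections only account for the orbit under the \emph{monoid} generated by $A$, but on each finite level $Q^n$ the generators act by permutations, so the monoid orbit coincides with the $G(\mealy^\vee)$-orbit. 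This correctly establishes that no word of $L$ evaluates to $\one$, which is the content actually used for $\mathcal E_1$ and $\mathcal F_2$.

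Two of your remaining steps do not hold up. First, surjectivity is not ``contained in the hypotheses'': invariance of $L$ says that orbits of elements of $L$ stay inside $L$, not that every orbit of $Q^*$ meets $L$, so you cannot ``relocate $w$ within its orbit so as to land in $L$''; the word ``onto'' in the lemma is best read as an overstatement, and in the applications surjectivity onto $G(\mealy)\setminus\{\one\}$ is checked separately. Second, your fallback argument for injectivity --- that bireversibility forces the automaton of $\bar w$, cut down to an orbit, to be minimal, whence distinct words of an orbit act by distinct transformations --- is not justified: recurrence of all states does not imply minimality, and nothing in the hypotheses prevents two distinct sections of $w$ from evaluating to the same nontrivial element of $G(\mealy)$. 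You are right, however, that for the languages $L$ to which the lemma is actually applied (full sets of freely reduced words over a symmetrized $Q$), injectivity reduces to the nontriviality statement via free reduction of $w_1w_2^{-1}$, and right to flag that the lemma's literal conclusion for an arbitrary invariant $L$ needs this extra structure.
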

To derive the structure of a bireversible group, we therefore seek
a $G(\mealy^\vee)$-invariant subset $L\subset Q^*$ that maps onto
$G(\mealy)\setminus\{1\}$, and show that every $G(\mealy)$-orbit
contains a non-trivial element of $G(\mealy)$.

\begin{theorem}[\Muntyan-\Savchuk]\index{free~product}
  $G(\mathcal E_1)=\langle a,b,c\mid a^2,b^2,c^2\rangle$.
\end{theorem}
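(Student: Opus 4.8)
The plan is to apply Lemma~\ref{BS2:lem:birev} to the bireversible automaton $\mathcal E_1$. First, reading off $\mathcal E_1$ with the notation~\eqref{BS2:eq:biset} gives $a=(b,c)$, $b=(c,b)$, and $c=(a,a)\varepsilon$, where $\varepsilon$ is the transposition of $A=\{0,1\}$. Hence $a^2=(b^2,c^2)$, $b^2=(c^2,b^2)$, $c^2=(a^2,a^2)$: the subgroup of $G(\mathcal E_1)$ generated by $a^2,b^2,c^2$ fixes the first level of $A^*$ and all its sections lie again among $a^2,b^2,c^2$, so it fixes every level, and since the action on $A^*$ is faithful it is trivial. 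Thus $a^2=b^2=c^2=\one$, giving a surjection $\phi\colon\langle a,b,c\mid a^2,b^2,c^2\rangle\twoheadrightarrow G(\mathcal E_1)$, and the theorem reduces to the injectivity of $\phi$, i.e.\ to the claim that no nonempty reduced word over $\{a,b,c\}$ evaluates to $\one$.

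For this I would take $L$ to be the set of nonempty reduced words over $Q=\{a,b,c\}$ and verify the two hypotheses of Lemma~\ref{BS2:lem:birev}. Computing the dual, $\mathcal E_1^\vee$ has state set $\{0,1\}$ and alphabet $Q$, and its generators act on $Q^*$ by $\widehat 0=(\widehat 0,\widehat 0,\widehat 1)(a\,b\,c)$ and $\widehat 1=(\widehat 1,\widehat 1,\widehat 0)(a\,c)$ (coordinates indexed by $a,b,c$; here the state changes precisely when a $c$ is read). That $L$ is $G(\mathcal E_1^\vee)$-invariant is a finite check: running any reduced word through $\mathcal E_1^\vee$ from either state never emits two equal adjacent output letters, and the generators are length-preserving bijections of $Q^*$, so $L$ is preserved. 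For the second hypothesis I would use a cheap source of nontriviality: $a$ and $b$ act trivially on the first level of $A^*$ while $c$ acts as $\varepsilon$, so any word with an odd number of letters $c$ evaluates to a nonidentity element; in particular each reduced word $c\,(ab)^k$ or $c\,(ab)^k a$ is nontrivial. It then suffices to prove that $G(\mathcal E_1^\vee)=\langle\widehat 0,\widehat 1\rangle$ acts transitively on the set $R_n$ of reduced words of length $n$ for every $n$, for then the orbit of any $w\in L$ is all of $R_{|w|}$ and contains such a word; Lemma~\ref{BS2:lem:birev} then yields that $\operatorname{ev}$ maps $L$ bijectively onto $G(\mathcal E_1)\setminus\{\one\}$, whence $\ker\phi=1$.

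The transitivity of $\langle\widehat 0,\widehat 1\rangle$ on $R_n$ is where the substance is, and I expect it to be the main obstacle — it is precisely the type of statement that made the classical Al\"eshin-style arguments mentioned above delicate. I would prove it by induction on $n$ with the stronger hypothesis that $G(\mathcal E_1^\vee)$ acts transitively on each pointed set $R_n^{(q)}=\{u\in R_n:u_1\neq q\}$ (these cover $R_n$ and intersect pairwise, so this implies transitivity on $R_n$). The base $n=1$ holds because $\widehat 0,\widehat 1$ induce the $3$-cycle $(a\,b\,c)$ and the transposition $(a\,c)$ on the first letter, hence all of $\operatorname{Sym}(\{a,b,c\})$. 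In the inductive step one uses transitivity of this first-letter action to reduce to words $pw,pw'$ with $w,w'\in R_n^{(p)}$, and then needs the subgroup of $G(\mathcal E_1^\vee)$ fixing the first letter $p$ — whose elements have a well-defined section at $p$ — to act transitively on $R_n^{(p)}$ via that section map; the section map preserves $R_n^{(p)}$ since sections of reduced-word-preserving transformations are again reduced-word-preserving, so the real point is that its image is all of $G(\mathcal E_1^\vee)$, i.e.\ that $G(\mathcal E_1^\vee)$ is self-replicating. Proving self-replication — extracting it from the recursions together with short elements of the point stabilisers such as $\widehat 0\widehat 1$, $\widehat 0^3$ and their iterated sections — is the genuinely delicate step. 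Should it prove awkward, an alternative is to bypass full transitivity and argue directly that the parity of the number of letters $c$ is non-constant on every $G(\mathcal E_1^\vee)$-orbit in $L$, which already furnishes a nontrivial representative in each orbit; with either statement in hand, Lemma~\ref{BS2:lem:birev} gives $G(\mathcal E_1)=\langle a,b,c\mid a^2,b^2,c^2\rangle$.
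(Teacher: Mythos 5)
Your setup is correct and coincides with the paper's: you verify $a^2=b^2=c^2=\one$ from the wreath recursions $a=(b,c)$, $b=(c,b)$, $c=(a,a)\varepsilon$, take $L$ to be the reduced words over $Q=\{a,b,c\}$, check $G(\mathcal E_1^\vee)$-invariance of $L$ (your observation that the state of $\mathcal E_1^\vee$ changes only on reading $c$, whose output is always $a$, is exactly the right finite check), and exhibit the nontrivial representatives $c(ab)^k$, $c(ab)^ka$ via the parity of the number of $c$'s. But the proposal has a genuine gap at its critical juncture: the transitivity of $\langle\widehat 0,\widehat 1\rangle$ on $L\cap Q^n$ is only planned, not proved. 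Your induction scheme hinges on $G(\mathcal E_1^\vee)$ being self-replicating (the first-letter stabilizer surjecting onto the whole group under the section map), which you correctly identify as ``the genuinely delicate step'' and then leave open; the fallback you mention (that the parity of the number of $c$'s is non-constant on every orbit) is likewise asserted without argument. Since Lemma~\ref{BS2:lem:birev} is vacuous without one of these two statements, the proof is incomplete as written.

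The paper closes this gap by a much more direct observation, which you should compare with your plan: the generator $\widehat 1$ acts on the $2^n$ reduced words in $\{a,c\}Q^{n-1}\cap L$ as a single $2^n$-cycle --- the action is conjugate to the binary adding machine~\eqref{BS2:eq:adding} under an identification of $\{a,c\}Q^{n-1}\cap L$ with $\{0,1\}^n$. (You can verify this for $n=2$: $ab\mapsto cb\mapsto ac\mapsto ca\mapsto ab$.) Combined with the fact that $\widehat 0$ induces the $3$-cycle $(a\,b\,c)$ on first letters, this immediately makes all $3\cdot 2^{n-1}$ elements of $L\cap Q^n$ a single orbit, with no induction, no stabilizer analysis, and no self-replication needed. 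Your route is not wrong in principle --- the recursions $\widehat 0=(\widehat 0,\widehat 0,\widehat 1)(a\,b\,c)$ and $\widehat 1=(\widehat 1,\widehat 1,\widehat 0)(a\,c)$ do admit a self-replication argument --- but until you actually extract generators of the sections of the first-letter stabilizer and verify they generate all of $G(\mathcal E_1^\vee)$, you have a program rather than a proof. I would recommend looking for the odometer structure first: in bireversible examples of this kind, finding a single dual generator acting as a long cycle is usually the cheapest way to establish transitivity.
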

\noindent Note that this result generalizes: $G(\mathcal E_n)$ is a
free product of $2n+1$ order-two groups.
\begin{proof} Write $Q=\{a,b,c\}$. We first check the relations
  $a^2=b^2=c^2=\one$ in $G=G(\mathcal E_1)$.  Let $L\subset Q^*$
  denote those sequences $s_1\cdots s_n$ with $s_i\neq s_{i+1}$ for all
  $i$.

  Consider the group $G(\mathcal E_1^\vee)$, with generators $0,1$. It
  acts on $L$, and acts transitively on $L\cap Q^n$ for all $n$;
  indeed already $0$ acts transitively on $Q=L\cap Q^1$, and $1$ acts
  on $\{a,c\}Q^{n-1}\cap L$ as a $2^n$-cycle, conjugate to the
  action~\eqref{BS2:eq:adding} in the sense that there is an
  identification of $\{a,c\}Q^{n-1}\cap L$ with $\{0,1\}^n$
  interleaving these actions. It follows that the $3\cdot2^{n-1}$
  elements of $L\cap A^n$ are in the same orbit.
  \[\mathcal E_1^\vee:\begin{fsa}[baseline]
    \node[state] (0) at (0,0) {$0$};
    \node[state] (1) at (3,0) {$1$};
    \path (0) edge[loop left] node {$\begin{matrix}a|b\\ b|c\end{matrix}$} ()
              edge[bend left] node {$c|a$} (1)
          (1) edge[loop right] node {$\begin{matrix}a|c\\ b|b\end{matrix}$} ()
              edge[bend left] node {$c|a$} (0);
  \end{fsa}\]

  It remains to note that $L\cap A^n$ contains a word mapping to a
  nontrivial element of $G$; for example, $c(ab)^{(n-1)/2}$ or
  $c(ab)^{n/2-1}a$ depending on the parity of $n$; and to apply
  Lemma~\ref{BS2:lem:birev}.
\end{proof}

\begin{theorem}[\Vorobets]\index{free~group}
  $G(\mathcal F_2)=\langle a,b,c\mid\emptyset \rangle\cong F_3$.
\end{theorem}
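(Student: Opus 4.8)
The plan is to argue exactly as in the proof of $G(\mathcal E_1)=\langle a,b,c\mid a^2,b^2,c^2\rangle$ just given, replacing ``word with no repeated consecutive letter'' by ``freely reduced word''. Write $Q$ for the generating states and $\tilde Q=Q\sqcup Q^{-1}$, and let $L\subset\tilde Q^*$ be the set of nonempty freely reduced words; since $a,b,c$ generate $G(\mathcal F_2)$, the evaluation map sends $L$ onto $G(\mathcal F_2)\setminus\{\one\}$, and $L$ is canonically in bijection with the nontrivial elements of the free group $F_3$. By Lemma~\ref{BS2:lem:birev} it therefore suffices to show that $L$ is invariant under the dual group $G(\mathcal F_2^\vee)$ (acting on $\tilde Q^*$; this action is well defined and length-preserving precisely because $\mathcal F_2$ is bireversible, so one works with $\mathcal F_2$ together with $\mathcal F_2^{-1}$), and that every $G(\mathcal F_2^\vee)$-orbit meeting $L$ contains a word whose image in $G(\mathcal F_2)$ is nontrivial. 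Once both hold, the lemma makes the evaluation $L\to G(\mathcal F_2)$ injective, hence the canonical surjection from the free group $\langle a,b,c\rangle$ an isomorphism, which is the assertion.

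First I would dispatch the invariance of $L$. Being freely reduced is a local, length-two property --- $s_1\cdots s_n$ is reduced iff $s_{i+1}\neq s_i^{-1}$ for every $i$ --- and the image of the factor $s_is_{i+1}$ under the dual action depends only on $s_is_{i+1}$ and on the state of $\mathcal F_2^\vee$ reached just before $s_i$ is read. So the whole statement reduces to a finite verification on the explicit (bireversible) machine: for every state of $\mathcal F_2^\vee$ and every reduced pair $st\in\tilde Q^2$, the two output letters again form a reduced pair. Invertibility of all the automata associated with $\mathcal F_2$ is exactly what prevents a collision from breaking this check, and by symmetry one simultaneously gets that no cancellation is ever introduced; composing along words of dual generators and their inverses then shows $L$ is $G(\mathcal F_2^\vee)$-invariant.

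The hard part is the orbit condition. For $\mathcal E_1$ this was painless because a single dual generator acts on a suitable subset of the length-$n$ reduced words as one $2^n$-cycle conjugate to the adding machine, leaving a single orbit; for $\mathcal F_2$ the orbits of $G(\mathcal F_2^\vee)$ on the reduced words of length $n$ are genuinely intricate, and it is exactly here that Aleshin's original argument was incomplete. The route I would follow, after Aleshin and Vorobets, is an induction on word length: to each reduced word $w$ one attaches an automaton (essentially the minimized machine computing $w$, seen through the dual), and one proves that its transition monoid acts transitively on its state set; from this transitivity one extracts, for each $n$, a word of a prescribed simple alternating shape (the analogue of the words $c(ab)^{k}$ used in the $\mathcal E_1$ proof) lying in every $G(\mathcal F_2^\vee)$-orbit on $L\cap\tilde Q^n$, and a short finite computation on the binary tree shows such a word acts nontrivially. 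This inductive transitivity statement is the real content and the main obstacle; it is carried out in detail in Vorobets's work, and I would invoke it rather than reproduce it.

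Putting the two verifications together, Lemma~\ref{BS2:lem:birev} yields that no nonempty reduced word over $\tilde Q$ evaluates to $\one$ in $G(\mathcal F_2)$, i.e.\ the canonical homomorphism from the free group $\langle a,b,c\rangle$ is injective as well as onto, whence $G(\mathcal F_2)=\langle a,b,c\mid\emptyset\rangle\cong F_3$.
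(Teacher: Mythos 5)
Your reduction is set up correctly and coincides with the paper's: you take $L$ to be the nonempty freely reduced words over $Q\sqcup Q^{-1}$, observe that bireversibility makes the dual action preserve reducedness (your local two-letter check is the right mechanism --- if two distinct inputs from the same dual state produced outputs $r$ and $r^{-1}$ matching a cancellation, invertibility of the dual of $\mealy\sqcup\mealy^{-1}$ would be violated), and then appeal to Lemma~\ref{BS2:lem:birev}. Up to that point you and the paper agree.

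The gap is that you stop exactly where the proof starts. The whole content of the theorem is the orbit analysis, and you explicitly decline to carry it out, deferring to ``Vorobets's work'' --- which is a citation of the result being proved, not a proof of it. Worse, the mechanism you gesture at (attach to each reduced word an automaton and show its transition/syntactic monoid acts transitively on its state set) is Al\"eshin's original strategy, the very one the paper records as having been considered incomplete; it is not what Vorobets does and not what the paper sketches. The paper's argument is different and concrete: the dual action preserves the sign pattern $s\in(\pm1)^n$ of a reduced word, so the candidate orbits are the sets $L_s$; one exhibits four explicit elements $\alpha,\beta,\gamma,\delta$ of $G^\vee=\langle 0,1\rangle$ (with $\gamma,\delta$ generating a copy of $\operatorname{Sym}(3)$ permuting $Q$, and $\alpha,\beta$ fixing large sets of words while moving others), and proves transitivity on each $L_s$ by induction on $n$ via a three-case analysis on whether $s_{n-1}\neq s_n$, $s_1\neq s_2$, or both ends repeat; finally each $L_s$ is shown to contain an explicit word (built from $a$, $b$ and one $c$ according to the signs) acting nontrivially on $A$. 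Relatedly, your phrase ``a word \ldots lying in every $G(\mathcal F_2^\vee)$-orbit on $L\cap\tilde Q^n$'' cannot be right as stated, since there are many orbits on words of length $n$ (one per admissible sign pattern); what is needed, and what the paper supplies, is a nontrivial representative in each $L_s$ separately. Without the transitivity induction your argument establishes nothing beyond the (easy) fact that $a,b,c$ generate and that reducedness is preserved.
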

\noindent Note that this result generalizes: $G(\mathcal F_n)$ is a
free group of rank $2n+1$.
\begin{proof}[Sketch of proof]
  Again the key is to control the orbits of $G^\vee=G(\mathcal
  F_2^\vee)=\langle 0,1\rangle$ on the reduced words over
  $Q=\{a,b,c\}$ of any given length. Let $s\in(\pm1)^n$ be a sequence
  of signs, and consider
  \[L_s=\{w=w_1^{s_1}\cdots w_n^{s_n}\in(Q\sqcup Q^{-1})^*\mid
  w_i^{s_i}\neq w_{i+1}^{-s_{i+1}}\text{ for all }i\}.\] We show that
  $G^\vee$ acts transitively on $L_s$ for all $s$, and that $L_s$
  contains a word mapping to a nontrivial element of $G$. Consider the elements
  \[\alpha=0^21^{-2}0^21^{-1},\quad\beta=1^20^{-2}1^20^{-1},\quad\gamma=1^{-1}0,\quad\delta=01^{-1}\]
  of $G^\vee$, where the products are computed left-to-right; they are
  described by the automata
  \[\begin{fsa}
    \node[state] (a) at (0,0) {$\alpha$};
    \node[state] (b) at (4,0) {$\beta$};
    \node[state] (c) at (7,-1) {$\gamma$};
    \node[state] (d) at (9,-1) {$\delta$};
    \path (a) edge[loop left] node [below=2mm] {$\bm c^\pm|\bm c^\pm$} ()
              edge[bend left=20] node [near start,above=10] {$\bm a|\bm a$}
              node [near start,above] {$\bm b|\bm b$}
              node [near end,above=10] {$a^{-1}|b^{-1}$}
              node [near end,above] {$b^{-1}|a^{-1}$} (b);
    \path (b) edge[loop right] node [below=2mm] {$\bm c^\pm|\bm c^\pm$} ()
              edge[bend left=20] node [near start,below] {$a|b$}
              node [near start,below=10] {$b|a$}
              node [near end,below] {$\bm{a^{-1}}|\bm{a^{-1}}$}
              node [near end,below=10] {$\bm{b^{-1}}|\bm{b^{-1}}$} (a);
    \path (c) edge[loop above] node[above=20] {$\bm a^\pm|\bm a^\pm$}
              node[above=10] {$b^\pm|c^\pm$} node[above] {$c^\pm|b^\pm$} ();
    \path (d) edge[loop above] node[above=20] {$a^\pm|b^\pm$}
              node[above=10] {$b^\pm|a^\pm$}
              node[above] {$\bm c^\pm|\bm c^\pm$} ();
  \end{fsa}\] The elements $\gamma,\delta$ generate a copy of
  $\operatorname{Sym}(3)$, allowing arbitrary permutations of $Q$ or
  $Q^{-1}$. In particular, $G^\vee$ acts transitively on $L_s$
  whenever $|s|\le1$, so we may proceed by induction on $|s|$. The
  elements $\alpha,\beta$, on the other hand, fix a large set of
  sequences (following the bold edges in the automata).

  Consider now $s=s_1\cdots s_n$, and $s'=s_1\cdots s_{n-1}$. If
  $s_{n-1}\neq s_n$, so that $\Card{L_s}=2\Card{L_{s'}}$, then there
  exists $w=w_1^{s_1}\cdots w_n^{s_n}\in L_s$, moved by $\alpha$ or
  $\beta$, and such that $w_1^{s_1}\cdots w_{n-1}^{s_{n-1}}\in L_{s'}$
  is fixed by $\alpha$ and $\beta$; so $G^\vee$ acts transitively on
  $L_s$.

  If $s_1\neq s_2$, apply the same argument to $L_{s_n^{-1}\cdots
    s_1^{-1}}$ and $L_{s_n^{-1}\cdots s_2^{-1}}$.

  Finally, if $s_1=s_2$ and $s_{n-1}=s_n$, consider a typical $w\in
  L_{s_2\cdots s_{n-1}}$, and all $w_{qr}=q^{s_1}wr^{s_n}$, for $q,r\in
  Q$. Using the action of $\alpha$ and $\beta$, the words $w_{qa}$ and
  $w_{qb}$ are in the same $G^\vee$-orbit for all $q\in Q$, and similarly
  $w_{ar}$ and $w_{br}$ are in the same $G^\vee$-orbit for all $r\in
  Q$. For all $r\in Q$, finally, $w_{ar},w_{br'},w_{cr''}$ are in the
  same $G^\vee$-orbit for some $r',r''\in Q$, and similarly
  $w_{qa,q'b,q''c}$ are in the same $G^\vee$-orbit. It follows that all
  $w_{qr}$ are in the same $G^\vee$-orbit, so by induction $L_s$ is a
  single orbit.

  It remains to check that every $L_s$ contains a word $w$ mapping to
  a nontrivial group element. If $n$ is odd, set $w_i=a$ if $s_i=1$
  and $w_i=b$ if $s_i=-1$; then $\overline w$ acts nontrivially on
  $A$. If $n$ is even, change $w_n$ to $c^{s_n}$; again $\overline w$
  acts nontrivially on $A$. We are done by Lemma~\ref{BS2:lem:birev}.
\end{proof}

Burger\nindex{Burger,~Marc} and
Mozes\nindex{Mozes,~Shahar}~\cites{MR1446574,MR1839488,MR1839489} have
constructed some infinite,
finitely presented simple groups, see also~\cite{MR2320973}.  From
this chapter's point of view, these groups are obtained as follows:
one constructs an ``appropriate'' bireversible automaton $\mealy$ with
state set $Q$ and alphabet $A$, defines
\[G_0=\langle A\cup Q\mid aq=rb\text{ whenever that relation holds in
}\mealy\rangle,\]\index{VH-group}\index{group!VH-}
and considers $G$ a finite-index subgroup of $G_0$. We will not
explicitly give here the conditions required on $\mealy$ for their
construction to work; but note that automata groups can be understood
as a byproduct of their work.  Wise\nindex{Wise,~Daniel~T.}
constructed finitely presented groups with non-residual finiteness
properties that are also related to automata~\cite{MR2341837}.

Burger and Mozes give the following algebraic construction: consider two primes
$p,\ell\equiv1\pmod4$. Let $A$ (respectively $Q$) denote those integral
quaternions, up to a unit $\pm1,\pm i,\pm j,\pm k$, of norm $p$
(respectively $\ell$). By a result of Hurwitz\nindex{Hurwitz,~Adolf},
$\Card A=p+1$ and $\Card Q=\ell+1$.
Furthermore~\cite{H19}, for every $q\in Q,a\in A$ there are unique
(again up to units) $b\in A,r\in Q$ with $qa=br$. Use these relations
to define an automaton $\mealy_{p,\ell}$. Clearly $\mealy_{p,\ell}$ is
bireversible, with dual $\mealy_{p,\ell}^\vee=\mealy_{\ell,p}$. Again thanks to
unique factorization of integral quaternions of odd norm,
\begin{proposition}
  $G(\mealy_{p,\ell})=F_{(\ell+1)/2}$.
\end{proposition}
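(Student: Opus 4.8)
The plan is to prove that the tautological surjection $F_{(\ell+1)/2}\twoheadrightarrow G(\mealy_{p,\ell})$ is an isomorphism. That there is such a surjection comes from the observation that the $\ell+1$ states organise into $(\ell+1)/2$ pairs of mutually inverse transformations: for $q\in Q$ the conjugate $\bar q$ again has norm $\ell$, hence names a state, and multiplying the defining relation $qa=br$ on the left by $\bar q$ and on the right by $\bar r$, using $\bar q\,q=N(q)=\ell$ and $r\,\bar r=N(r)=\ell$, yields $\bar q\,b=a\,\bar r$. Thus in the automaton $\bar q$ reverses $q$ transition by transition, and, since $q\bar q=\ell$ is central, the composite transformation of the state $(q,\bar q)$ in $\mealy_{p,\ell}^2$ is computed by multiplication by the scalar $\ell$, which is the identity on $A^*$. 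It therefore suffices to show that no nonempty reduced word over a transversal $Q_+$ of this pairing acts trivially on $A^*$.

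The key input is the unique factorisation of integral quaternions of norm a power of the odd prime $p$ (Hurwitz). It identifies the non-backtracking words $a_1\cdots a_m$ over $A$ --- those with $a_{i+1}\neq\bar a_i$ --- with the primitive integral quaternions of norm $p^m$ up to units, hence, after fixing a splitting $\mathbf{H}\otimes\mathbf{Q}_p\cong M_2(\mathbf{Q}_p)$ (which exists since $p\neq2$), with the vertices at distance $m$ from the base vertex $v_0=[\mathbf{Z}_p^2]$ of the $(p+1)$-regular Bruhat--Tits tree $\mathscr T_p$ of $\operatorname{PGL}_2(\mathbf{Q}_p)$. Under this dictionary I would then describe the transformation attached to a state word $q_1\cdots q_n$: the composition rule for Mealy machines makes it obey the quaternion identity $(q_1\cdots q_n)(a_1\cdots a_m)=B\,R$, where $B$ is the output word and $R$ the new state, each read as a quaternion product. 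Setting $P=q_1\cdots q_n$, of norm $\ell^n$, and using $\gcd(p,\ell)=1$ so that $N(P)$ is a $p$-adic unit and $P\in\GL_2(\mathbf{Z}_p)$, one reads off that $P$ fixes $v_0$, permutes every distance sphere, and that $B$ is exactly the $p$-primitive part of $PU$. In other words, the transformation of $q_1\cdots q_n$ is nothing but the action of $P\in\GL_2(\mathbf{Z}_p)$ on the rooted tree $\mathscr T_p$.

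Freeness then follows quickly. Let $w=q_1\cdots q_n$ be a nonempty reduced word over $Q_+$. Reducing modulo $\ell$, each prime $q_i$ becomes a rank-one matrix with $\ker(q_i)=\operatorname{im}(\bar q_i)$, and the reducedness condition $q_{i+1}\neq\bar q_i$ says precisely that $\operatorname{im}(q_{i+1})\neq\ker(q_i)$; hence the rank never drops in the product $q_1\cdots q_n$ and $P\not\equiv0\pmod\ell$, so $P$ is a \emph{primitive} quaternion of norm $\ell^n$, and as $n\geq1$ it is not a rational scalar. But a quaternion that becomes scalar in $M_2(\mathbf{Q}_p)$ lies in the centre $\mathbf{Q}$ of $\mathbf{H}$; hence the image of $P$ in $\operatorname{PGL}_2(\mathbf{Q}_p)$ is nontrivial and therefore moves some vertex of $\mathscr T_p$, so by the dictionary $w$ moves the corresponding (reduced) word of $A^*$. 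Thus $F_{(\ell+1)/2}\to G(\mealy_{p,\ell})$ is injective, which proves the proposition.

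The step I expect to be the main obstacle is the quaternion bookkeeping supporting the first two paragraphs: pinning down the unit conventions on $A$ and $Q$ precisely enough that (i) the $\ell+1$ states genuinely pair into $(\ell+1)/2$ generators together with their inverses --- here the hypotheses $p,\ell\equiv1\pmod4$, the Hurwitz counts $\Card A=p+1$, $\Card Q=\ell+1$, and the uniqueness of $b,r$ in $qa=br$ have to be used carefully --- and (ii) the combinatorial action of $\mealy_{p,\ell}$ on the word-tree $A^*$ really does coincide, on non-backtracking words and hence, by bireversibility, everywhere, with the arithmetic action on $\mathscr T_p$. Once those identifications are secured, the freeness argument above is essentially the classical Lubotzky--Phillips--Sarnak fact that the quaternions of norm $\ell$ generate a free group of rank $(\ell+1)/2$.
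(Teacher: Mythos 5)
The paper offers no proof of this proposition beyond the single clause ``thanks to unique factorization of integral quaternions of odd norm'', so there is no detailed argument to compare against; what you have written is the standard Lubotzky--Phillips--Sarnak/Burger--Mozes development of exactly that hint, and it is sound. The pairing of states via $\bar q\,b=a\,\bar r$ (obtained from $qa=br$ by the norm computation you give) correctly exhibits the $\ell+1$ states as $(\ell+1)/2$ transformations together with their inverses; the identification of the state-word action on non-backtracking input words with the action of the quaternion $P=q_1\cdots q_n\in\GL_2(\Z_p)$ on the sphere of radius $m$ about $[\Z_p^2]$ in the Bruhat--Tits tree is the right mechanism, and, as you observe, non-backtracking inputs suffice for injectivity since one only needs a single moved word; and primitivity of a reduced product of norm-$\ell$ quaternions, together with the fact that a primitive non-rational quaternion is non-central in $M_2(\mathbb Q_p)$ and hence moves a vertex, closes the argument. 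Two points deserve the half-line of justification you anticipate needing: (i) your mod-$\ell$ rank argument uses that $q\mapsto\operatorname{im}(q\bmod\ell)$ is injective on $Q$; this does hold, because $\operatorname{im}(q_2)=\operatorname{im}(q_1)$ forces $\bar q_1q_2\equiv0\pmod\ell$, whence $\bar q_1q_2=\ell\gamma$ with $N(\gamma)=1$ and so $q_2=q_1\gamma$ is a unit multiple of $q_1$ --- but it should be recorded rather than asserted; (ii) the bijection between non-backtracking words of length $m$ and vertices at distance $m$, intertwining the automaton's output map $U\mapsto B$ with $L\mapsto PL$, is where the Hurwitz count $(p+1)p^{m-1}$ and the unit conventions are genuinely consumed, and is the same unique-factorization statement the paper invokes. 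With those supplied, your proof is complete; it is, in effect, the proof the paper is silently relying on.
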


Glasner\nindex{Glasner,~Yair} and Mozes\nindex{Mozes,~Shahar}~\cite{MR2155175}
constructed an example of a bireversible automata
group with
Kazhdan's\index{Kazhdan~group}\index{group!Kazhdan}\index{property!t@(T)}
property (T).
\index{Mealy~automaton!bireversible|)}\index{bireversible~Mealy~automaton|)}
\index{automata~group|)}

\begin{bibsection}
\addcontentsline{toc}{section}{References}
\begin{biblist}
\bib{MR2143732}{article}{
   author={Ab{\'e}rt, Mikl{\'o}s},
   title={Group laws and free subgroups in topological groups},
   journal={Bull. London Math. Soc.},
   volume={37},
   date={2005},
   number={4},
   pages={525--534},
   issn={0024-6093},
   review={\MR{2143732 (2006d:20005)}},
   doi={10.1112/S002460930500425X},
}
\bib{MR1230633}{article}{
   author={Alonso, Juan M.},
   title={Combings of groups},
   conference={
      title={Algorithms and classification in combinatorial group theory
      (Berkeley, CA, 1989)},
   },
   book={
      series={Math. Sci. Res. Inst. Publ.},
      volume={23},
      publisher={Springer},
      place={New York},
   },
   date={1992},
   pages={165--178},
   review={\MR{1230633 (94g:20048)}},
}
\bib{MR1170363}{article}{
   author={Alonso, Juan M.},
   author={Brady, Tom},
   author={Cooper, Darryl},
   author={Ferlini, Vincent},
   author={Lustig, Martin},
   author={Mihalik, Michael},
   author={Shapiro, Michael},
   author={Short, Hamish},
   title={Notes on word hyperbolic groups},
   note={Edited by H. Short},
   conference={
      title={Group theory from a geometrical viewpoint},
      address={Trieste},
      date={1990},
   },
   book={
      publisher={World Sci. Publ., River Edge, NJ},
   },
   date={1991},
   pages={3--63},
   review={\MR{1170363 (93g:57001)}},
}
\bib{MR1300841}{article}{
   author={Alonso, Juan M.},
   author={Bridson, Martin R.},
   title={Semihyperbolic groups},
   journal={Proc. London Math. Soc. (3)},
   volume={70},
   date={1995},
   number={1},
   pages={56--114},
   issn={0024-6115},
   review={\MR{1300841 (95j:20033)}},
}
\bib{MR713968}{article}{
   author={Al{\"e}{\v{s}}in, Stanislav V.},
   title={A free group of finite automata},
   language={Russian, with English summary},
   journal={Vestnik Moskov. Univ. Ser. I Mat. Mekh.},
   date={1983},
   number={4},
   pages={12--14},
   issn={0201-7385},
   review={\MR{713968 (84j:68035)}},
}
\bib{MR0301107}{article}{
   author={Al{\"e}{\v{s}}in, Stanislav V.},
   title={Finite automata and the Burnside problem for periodic groups},
   language={Russian},
   journal={Mat. Zametki},
   volume={11},
   date={1972},
   pages={319--328},
   issn={0025-567X},
   review={\MR{0301107 (46 \#265)}},
}
\bib{0905.2007}{article}{
   author={Amir, Gideon},
   author={Angel, Omer},
   author={Vir{\'a}g, B{\'a}lint},
    title={Amenability of linear-activity automaton groups},
     date={2009},
   eprint={arXiv:0905.2007},
}

\bib{MR2009317}{article}{
   author={Bartholdi, Laurent},
   title={Endomorphic presentations of branch groups},
   journal={J. Algebra},
   volume={268},
   date={2003},
   number={2},
   pages={419--443},
   issn={0021-8693},
   review={\MR{2009317 (2004h:20044)}},
}
\bib{MR1981466}{article}{
   author={Bartholdi, Laurent},
   title={A Wilson group of non-uniformly exponential growth},
   language={English, with English and French summaries},
   journal={C. R. Math. Acad. Sci. Paris},
   volume={336},
   date={2003},
   number={7},
   pages={549--554},
   issn={1631-073X},
   review={\MR{1981466 (2004c:20051)}},
}
\bib{MR1818662}{article}{
   author={Bartholdi, Laurent},
   title={Lower bounds on the growth of a group acting on the binary rooted
   tree},
   journal={Internat. J. Algebra Comput.},
   volume={11},
   date={2001},
   number={1},
   pages={73--88},
   issn={0218-1967},
   review={\MR{1818662 (2001m:20044)}},
}
\bib{MR1656258}{article}{
   author={Bartholdi, Laurent},
   title={The growth of Grigorchuk's torsion group},
   journal={Internat. Math. Res. Notices},
   date={1998},
   number={20},
   pages={1049--1054},
   issn={1073-7928},
   review={\MR{1656258 (99i:20049)}},
}
\bib{MR2254535}{article}{
   author={Bartholdi, Laurent},
   title={Branch rings, thinned rings, tree enveloping rings},
   journal={Israel J. Math.},
   volume={154},
   date={2006},
   pages={93--139},
   issn={0021-2172},
   review={\MR{2254535 (2007k:20051)}},
   doi={10.1007/BF02773601},
}
\bib{MR2091700}{article}{
   author={Bartholdi, Laurent},
   author={Grigorchuk, Rostislav I.},
   author={Nekrashevych, Volodymyr V.},
   title={From fractal groups to fractal sets},
   conference={
      title={Fractals in Graz 2001},
   },
   book={
      series={Trends Math.},
      publisher={Birkh\"auser},
      place={Basel},
   },
   date={2003},
   pages={25--118},
   review={\MR{2091700 (2005h:20056)}},
}
\bib{MR2035113}{article}{
   author={Bartholdi, Laurent},
   author={Grigorchuk, Rostislav I.},
   author={{\v{S}}uni{\'c}, Zoran},
   title={Branch groups},
   conference={
      title={Handbook of algebra, Vol. 3},
   },
   book={
      publisher={North-Holland},
      place={Amsterdam},
   },
   date={2003},
   pages={989--1112},
   review={\MR{2035113 (2005f:20046)}},
}
\bib{0802.2837}{article}{
   author={Bartholdi, Laurent},
   author={Kaimanovich, Vadim A.},
   author={Nekrashevych, Volodymyr V.},
    title={On amenability of automata groups},
   journal={Duke Math. J.},
   volume={154},
   date={2010},
   number={3},
   pages={575--598},
}
\bib{MR2394721}{article}{
   author={Bartholdi, Laurent},
   author={Reznykov, Illya I.},
   title={A Mealy machine with polynomial growth of irrational degree},
   journal={Internat. J. Algebra Comput.},
   volume={18},
   date={2008},
   number={1},
   pages={59--82},
   issn={0218-1967},
   review={\MR{2394721 (2009b:68087)}},
}
\bib{MR2285317}{article}{
   author={Bartholdi, Laurent},
   author={Nekrashevych, Volodymyr V.},
   title={Thurston equivalence of topological polynomials},
   journal={Acta Math.},
   volume={197},
   date={2006},
   number={1},
   pages={1--51},
   issn={0001-5962},
   review={\MR{2285317 (2008c:37072)}},
   doi={10.1007/s11511-006-0007-3},
}
\bib{MR2194959}{article}{
   author={Bartholdi, Laurent},
   author={Reznykov, Illya I.},
   author={Sushchanski{\u\i}, Vitaly I.},
   title={The smallest Mealy automaton of intermediate growth},
   journal={J. Algebra},
   volume={295},
   date={2006},
   number={2},
   pages={387--414},
   issn={0021-8693},
   review={\MR{2194959 (2006i:68060)}},
}
\bib{MR2216703}{article}{
   author={Bartholdi, Laurent},
   author={{\v{S}}uni{\'c}, Zoran},
   title={Some solvable automaton groups},
   conference={
      title={Topological and asymptotic aspects of group theory},
   },
   book={
      series={Contemp. Math.},
      volume={394},
      publisher={Amer. Math. Soc.},
      place={Providence, RI},
   },
   date={2006},
   pages={11--29},
   review={\MR{2216703 (2007e:20053)}},
}
\bib{MR1856923}{article}{
   author={Bartholdi, Laurent},
   author={{\v{S}}uni{\'c}, Zoran},
   title={On the word and period growth of some groups of tree
   automorphisms},
   journal={Comm. Algebra},
   volume={29},
   date={2001},
   number={11},
   pages={4923--4964},
   issn={0092-7872},
   review={\MR{1856923 (2002i:20040)}},
}
\bib{MR2176547}{article}{
   author={Bartholdi, Laurent},
   author={Vir{\'a}g, B{\'a}lint},
   title={Amenability via random walks},
   journal={Duke Math. J.},
   volume={130},
   date={2005},
   number={1},
   pages={39--56},
   issn={0012-7094},
   review={\MR{2176547 (2006h:43001)}},
}
\bib{MR1147304}{article}{
   author={Baumslag, Gilbert},
   author={Gersten, Stephen M.},
   author={Shapiro, Michael},
   author={Short, Hamish},
   title={Automatic groups and amalgams},
   journal={J. Pure Appl. Algebra},
   volume={76},
   date={1991},
   number={3},
   pages={229--316},
   issn={0022-4049},
   review={\MR{1147304 (93a:20048)}},
}
\bib{MR1489138}{article}{
   author={Baumslag, Gilbert},
   author={Bridson, Martin R.},
   author={Miller, Charles F., III},
   author={Short, Hamish},
   title={Finitely presented subgroups of automatic groups and their
   isoperimetric functions},
   journal={J. London Math. Soc. (2)},
   volume={56},
   date={1997},
   number={2},
   pages={292--304},
   issn={0024-6107},
   review={\MR{1489138 (98j:20034)}},
}
\bib{MR2432182}{article}{
   author={Bondarenko, Ievgen},
   author={Grigorchuk, Rostislav I.},
   author={Kravchenko, Rostyslav},
   author={Muntyan, Yevgen},
   author={Nekrashevych, Volodymyr V.},
   author={Savchuk, Dmytro},
   author={{\v{S}}uni{\'c}, Zoran},
   title={On classification of groups generated by 3-state automata over a
   2-letter alphabet},
   journal={Algebra Discrete Math.},
   date={2008},
   number={1},
   pages={1--163},
   issn={1726-3255},
   review={\MR{2432182 (2009i:20054)}},
}
\bib{MR0101267}{article}{
   author={Boone, William W.},
   title={The word problem},
   journal={Proc. Nat. Acad. Sci. U.S.A.},
   volume={44},
   date={1958},
   pages={1061--1065},
   review={\MR{0101267 (21 \#80)}},
}
\bib{MR1776048}{article}{
   author={Brady, Noel},
   title={Finite subgroups of hyperbolic groups},
   journal={Internat. J. Algebra Comput.},
   volume={10},
   date={2000},
   number={4},
   pages={399--405},
   issn={0218-1967},
   review={\MR{1776048 (2001f:20084)}},
   doi={10.1142/S0218196700000236},
}
\bib{MR1344918}{article}{
   author={Brazil, Marcus},
   title={Calculating growth functions for groups using automata},
   conference={
      title={Computational algebra and number theory},
      address={Sydney},
      date={1992},
   },
   book={
      series={Math. Appl.},
      volume={325},
      publisher={Kluwer Acad. Publ.},
      place={Dordrecht},
   },
   date={1995},
   pages={1--18},
   review={\MR{1344918 (96m:20050)}},
}
\bib{MR2151425}{article}{
   author={Bridson, Martin R.},
   title={A note on the grammar of combings},
   journal={Internat. J. Algebra Comput.},
   volume={15},
   date={2005},
   number={3},
   pages={529--535},
   issn={0218-1967},
   review={\MR{2151425 (2006g:20065)}},
}
\bib{MR2016694}{article}{
   author={Bridson, Martin R.},
   title={Combings of groups and the grammar of reparameterization},
   journal={Comment. Math. Helv.},
   volume={78},
   date={2003},
   number={4},
   pages={752--771},
   issn={0010-2571},
   review={\MR{2016694 (2004h:20053)}},
}
\bib{MR2275631}{article}{
   author={Bridson, Martin R.},
   title={Non-positive curvature and complexity for finitely presented
   groups},
   conference={
      title={International Congress of Mathematicians. Vol. II},
   },
   book={
      publisher={Eur. Math. Soc., Z\"urich},
   },
   date={2006},
   pages={961--987},
   review={\MR{2275631 (2008a:20071)}},
}
\bib{MR1420509}{article}{
   author={Bridson, Martin R.},
   author={Gilman, Robert H.},
   title={Formal language theory and the geometry of $3$-manifolds},
   journal={Comment. Math. Helv.},
   volume={71},
   date={1996},
   number={4},
   pages={525--555},
   issn={0010-2571},
   review={\MR{1420509 (98a:57024)}},
}
\bib{MR1744486}{book}{
   author={Bridson, Martin R.},
   author={Haefliger, Andr{\'e}},
   title={Metric spaces of non-positive curvature},
   series={Grundlehren der Mathematischen Wissenschaften [Fundamental
   Principles of Mathematical Sciences]},
   volume={319},
   publisher={Springer-Verlag},
   place={Berlin},
   date={1999},
   pages={xxii+643},
   isbn={3-540-64324-9},
   review={\MR{1744486 (2000k:53038)}},
}
\bib{MR1213378}{article}{
   author={Brink, Brigitte},
   author={Howlett, Robert B.},
   title={A finiteness property and an automatic structure for Coxeter
   groups},
   journal={Math. Ann.},
   volume={296},
   date={1993},
   number={1},
   pages={179--190},
   issn={0025-5831},
   review={\MR{1213378 (94d:20045)}},
}
\bib{MR1942271}{article}{
   author={Brunner, Andrew M.},
   author={Sidki, Sa{\"\i}d N.},
   title={Wreath operations in the group of automorphisms of the binary
   tree},
   journal={J. Algebra},
   volume={257},
   date={2002},
   number={1},
   pages={51--64},
   issn={0021-8693},
   review={\MR{1942271 (2003m:20029)}},
}
\bib{MR1492064}{article}{
   author={Brunner, Andrew M.},
   author={Sidki, Sa{\"\i}d N.},
   title={The generation of $\textrm{GL}(n,\Z)$ by finite state
     automata},
   journal={Internat. J. Algebra Comput.},
   volume={8},
   date={1998},
   number={1},
   pages={127--139},
   issn={0218-1967},
   review={\MR{1492064 (99f:20055)}},
}
\bib{MR1446574}{article}{
   author={Burger, Marc},
   author={Mozes, Shahar},
   title={Finitely presented simple groups and products of trees},
   language={English, with English and French summaries},
   journal={C. R. Acad. Sci. Paris S\'er. I Math.},
   volume={324},
   date={1997},
   number={7},
   pages={747--752},
   issn={0764-4442},
   review={\MR{1446574 (98g:20041)}},
   doi={10.1016/S0764-4442(97)86938-8},
}
\bib{MR1839488}{article}{
   author={Burger, Marc},
   author={Mozes, Shahar},
   title={Groups acting on trees: from local to global structure},
   journal={Inst. Hautes \'Etudes Sci. Publ. Math.},
   number={92},
   date={2000},
   pages={113--150 (2001)},
   issn={0073-8301},
   review={\MR{1839488 (2002i:20041)}},
}
\bib{MR1839489}{article}{
   author={Burger, Marc},
   author={Mozes, Shahar},
   title={Lattices in product of trees},
   journal={Inst. Hautes \'Etudes Sci. Publ. Math.},
   number={92},
   date={2000},
   pages={151--194 (2001)},
   issn={0073-8301},
   review={\MR{1839489 (2002i:20042)}},
}

\bib{MR1795250}{article}{
   author={Campbell, Colin M.},
   author={Robertson, Edmund F.},
   author={Ru{\v{s}}kuc, Nikola},
   author={Thomas, Richard M.},
   title={Automatic semigroups},
   journal={Theoret. Comput. Sci.},
   volume={250},
   date={2001},
   number={1-2},
   pages={365--391},
   issn={0304-3975},
   review={\MR{1795250 (2001i:20116)}},
   doi={10.1016/S0304-3975(99)00151-6},
}
\bib{MR758901}{article}{
   author={Cannon, James W.},
   title={The combinatorial structure of cocompact discrete hyperbolic
   groups},
   journal={Geom. Dedicata},
   volume={16},
   date={1984},
   number={2},
   pages={123--148},
   issn={0046-5755},
   review={\MR{758901 (86j:20032)}},
}
\bib{MR1875951}{article}{
   author={Cannon, James W.},
   author={Floyd, William J.},
   author={Parry, Walter R.},
   title={Finite subdivision rules},
   journal={Conform. Geom. Dyn.},
   volume={5},
   date={2001},
   pages={153--196 (electronic)},
   issn={1088-4173},
   review={\MR{1875951 (2002j:52021)}},
}
\bib{MR2520071}{article}{
   author={Cannon, James W.},
   author={Floyd, William J.},
   author={Parry, Walter R.},
   author={Pilgrim, Kevin M.},
   title={Subdivision rules and virtual endomorphisms},
   journal={Geom. Dedicata},
   volume={141},
   date={2009},
   pages={181--195},
   issn={0046-5755},
   review={\MR{2520071}},
   doi={10.1007/s10711-009-9352-7},
}
\bib{MR2536187}{article}{
   author={Cassaigne, Julien},
   author={Silva, Pedro V.},
   title={Infinite words and confluent rewriting systems: endomorphism
   extensions},
   journal={Internat. J. Algebra Comput.},
   volume={19},
   date={2009},
   number={4},
   pages={443--490},
   issn={0218-1967},
   review={\MR{2536187}},
   doi={10.1142/S0218196709005111},
}
\bib{Cayley}{article}{
   author={Cayley, Arthur},
   title={The theory of groups: Graphical representations},
   journal={Amer. J. Math.},
   date={1878},
   volume={1},
   number={2},
   pages={174--176},
}
\bib{MR1157320}{article}{
   author={Charney, Ruth},
   title={Artin groups of finite type are biautomatic},
   journal={Math. Ann.},
   volume={292},
   date={1992},
   number={4},
   pages={671--683},
   issn={0025-5831},
   review={\MR{1157320 (93c:20067)}},
}
\bib{MR1075994}{book}{
   author={Coornaert, Michel},
   author={Delzant, Thomas},
   author={Papadopoulos, Athanase},
   title={G\'eom\'etrie et th\'eorie des groupes},
   language={French},
   series={Lecture Notes in Mathematics},
   volume={1441},
   note={Les groupes hyperboliques de Gromov. [Gromov hyperbolic groups];
   With an English summary},
   publisher={Springer-Verlag},
   place={Berlin},
   date={1990},
   pages={x+165},
   isbn={3-540-52977-2},
   review={\MR{1075994 (92f:57003)}},
}

\bib{1002.2590}{unpublished}{
   author={Dahmani, Fran\c cois},
   author={Guirardel, Vincent},
   title={The isomorphism problem for all hyperbolic groups},
   date={2010},
   eprint={arXiv:1002.2590},
}
\bib{MR1511580}{article}{
   author={Dehn, Max},
   title={\"Uber die Topologie des dreidimensionalen Raumes},
   language={German},
   journal={Math. Ann.},
   volume={69},
   date={1910},
   number={1},
   pages={137--168},
   issn={0025-5831},
   review={\MR{1511580}},
   doi={10.1007/BF01455155},
}
\bib{MR1511645}{article}{
   author={Dehn, Max},
   title={\"Uber unendliche diskontinuierliche Gruppen},
   language={German},
   journal={Math. Ann.},
   volume={71},
   date={1911},
   number={1},
   pages={116--144},
   issn={0025-5831},
   review={\MR{1511645}},
}
\bib{MR1511705}{article}{
   author={Dehn, Max},
   title={Transformation der Kurven auf zweiseitigen Fl\"achen},
   language={German},
   journal={Math. Ann.},
   volume={72},
   date={1912},
   number={3},
   pages={413--421},
   issn={0025-5831},
   review={\MR{1511705}},
   doi={10.1007/BF01456725},
}
\bib{MR881797}{book}{
   author={Dehn, Max},
   title={Papers on group theory and topology},
   note={Translated from the German and with introductions and an appendix
   by John Stillwell;
   With an appendix by Otto Schreier},
   publisher={Springer-Verlag},
   place={New York},
   date={1987},
   pages={viii+396},
   isbn={0-387-96416-9},
   review={\MR{881797 (88d:01041)}},
}
\bib{MR2055042}{article}{
   author={Duncan, Andrew},
   author={Gilman, Robert H.},
   title={Word hyperbolic semigroups},
   journal={Math. Proc. Cambridge Philos. Soc.},
   volume={136},
   date={2004},
   number={3},
   pages={513--524},
   issn={0305-0041},
   review={\MR{2055042 (2004m:20106)}},
}

\bib{Engel}{book}{
   author={Engel, Peter},
   title={Geometric Crystallography},
   publisher={Reidel Publishing Company},
   place={Dordrecht},
   date={1986},
   pages={viii+266},
   isbn={9027723419},
}
\bib{MR1161694}{book}{
   author={Epstein, David B. A.},
   author={Cannon, James W.},
   author={Holt, Derek F.},
   author={Levy, Silvio V. F.},
   author={Paterson, Michael S.},
   author={Thurston, William P.},
   title={Word processing in groups},
   publisher={Jones and Bartlett Publishers},
   place={Boston, MA},
   date={1992},
   pages={xii+330},
   isbn={0-86720-244-0},
   review={\MR{1161694 (93i:20036)}},
}

\bib{MR1189009}{article}{
   author={Farb, Benson},
   title={Automatic groups: a guided tour},
   journal={Enseign. Math. (2)},
   volume={38},
   date={1992},
   number={3-4},
   pages={291--313},
   issn={0013-8584},
   review={\MR{1189009 (93k:20052)}},
}
\bib{MR1650094}{article}{
   author={Farb, Benson},
   title={Relatively hyperbolic groups},
   journal={Geom. Funct. Anal.},
   volume={8},
   date={1998},
   number={5},
   pages={810--840},
   issn={1016-443X},
   review={\MR{1650094 (99j:20043)}},
   doi={10.1007/s000390050075},
}

\bib{MR2365352}{book}{
   author={Geoghegan, Ross},
   title={Topological methods in group theory},
   series={Graduate Texts in Mathematics},
   volume={243},
   publisher={Springer},
   place={New York},
   date={2008},
   pages={xiv+473},
   isbn={978-0-387-74611-1},
   review={\MR{2365352 (2008j:57002)}},
   doi={10.1007/978-0-387-74614-2},
}
\bib{MR1074477}{article}{
   author={Gersten, Stephen M.},
   author={Short, Hamish},
   title={Small cancellation theory and automatic groups},
   journal={Invent. Math.},
   volume={102},
   date={1990},
   number={2},
   pages={305--334},
   issn={0020-9910},
   review={\MR{1074477 (92c:20058)}},
   doi={10.1007/BF01233430},
}
\bib{MR1117155}{article}{
   author={Gersten, Stephen M.},
   author={Short, Hamish},
   title={Small cancellation theory and automatic groups. II},
   journal={Invent. Math.},
   volume={105},
   date={1991},
   number={3},
   pages={641--662},
   issn={0020-9910},
   review={\MR{1117155 (92j:20030)}},
   doi={10.1007/BF01232283},
}
\bib{MR1114609}{article}{
   author={Gersten, Stephen M.},
   author={Short, Hamish},
   title={Rational subgroups of biautomatic groups},
   journal={Ann. of Math. (2)},
   volume={134},
   date={1991},
   number={1},
   pages={125--158},
   issn={0003-486X},
   review={\MR{1114609 (92g:20092)}},
   doi={10.2307/2944334},
}
\bib{MR1086648}{collection}{
   title={Sur les groupes hyperboliques d'apr\`es Mikhael Gromov},
   language={French},
   series={Progress in Mathematics},
   volume={83},
   editor={Ghys, {\'E}tienne},
   editor={de la Harpe, Pierre},
   note={Papers from the Swiss Seminar on Hyperbolic Groups held in Bern,
   1988},
   publisher={Birkh\"auser Boston Inc.},
   place={Boston, MA},
   date={1990},
   pages={xii+285},
   isbn={0-8176-3508-4},
   review={\MR{1086648 (92f:53050)}},
}
\bib{MR1985464}{article}{
   author={Gilman, Robert H.},
   title={On the definition of word hyperbolic groups},
   journal={Math. Z.},
   volume={242},
   date={2002},
   number={3},
   pages={529--541},
   issn={0025-5874},
   review={\MR{1985464 (2004b:20062)}},
}
\bib{MR895616}{article}{
   author={Gilman, Robert H.},
   title={Groups with a rational cross-section},
   conference={
      title={Combinatorial group theory and topology},
      address={Alta, Utah},
      date={1984},
   },
   book={
      series={Ann. of Math. Stud.},
      volume={111},
      publisher={Princeton Univ. Press},
      place={Princeton, NJ},
   },
   date={1987},
   pages={175--183},
   review={\MR{895616 (88g:20065)}},
}
\bib{MR1703070}{article}{
   author={Gilman, Robert H.},
   author={Holt, Derek F.},
   author={Rees, Sarah},
   title={Combing nilpotent and polycyclic groups},
   journal={Internat. J. Algebra Comput.},
   volume={9},
   date={1999},
   number={2},
   pages={135--155},
   issn={0218-1967},
   review={\MR{1703070 (2001a:20063)}},
}
\bib{MR0138529}{article}{
   author={Glu{\v{s}}kov, Victor M.},
   title={Abstract theory of automata},
   language={Russian},
   journal={Uspehi Mat. Nauk},
   volume={16},
   date={1961},
   number={5 (101)},
   pages={3--62},
   issn={0042-1316},
   review={\MR{0138529 (25 \#1976)}},
}
\bib{MR2155175}{article}{
   author={Glasner, Yair},
   author={Mozes, Shahar},
   title={Automata and square complexes},
   journal={Geom. Dedicata},
   volume={111},
   date={2005},
   pages={43--64},
   issn={0046-5755},
   review={\MR{2155175 (2006g:20112)}},
}
\bib{MR0125020}{article}{
   author={Greendlinger, Martin},
   title={On Dehn's algorithms for the conjugacy and word problems, with
   applications},
   journal={Comm. Pure Appl. Math.},
   volume={13},
   date={1960},
   pages={641--677},
   issn={0010-3640},
   review={\MR{0125020 (23 \#A2327)}},
}
\bib{MR0124381}{article}{
   author={Greendlinger, Martin},
   title={Dehn's algorithm for the word problem},
   journal={Comm. Pure Appl. Math.},
   volume={13},
   date={1960},
   pages={67--83},
   issn={0010-3640},
   review={\MR{0124381 (23 \#A1693)}},
}
\bib{MR1765119}{article}{
   author={Grigorchuk, Rostislav I.},
   title={Just infinite branch groups},
   conference={
      title={New horizons in pro-$p$ groups},
   },
   book={
      series={Progr. Math.},
      volume={184},
      publisher={Birkh\"auser Boston},
      place={Boston, MA},
   },
   date={2000},
   pages={121--179},
   review={\MR{1765119 (2002f:20044)}},
}
\bib{MR784354}{article}{
   author={Grigorchuk, Rostislav I.},
   title={Degrees of growth of $p$-groups and torsion-free groups},
   language={Russian},
   journal={Mat. Sb. (N.S.)},
   volume={126(168)},
   date={1985},
   number={2},
   pages={194--214, 286},
   issn={0368-8666},
   review={\MR{784354 (86m:20046)}},
}
\bib{MR781246}{article}{
   author={Grigorchuk, Rostislav I.},
   title={Construction of $p$-groups of intermediate growth that have a
   continuum of factor-groups},
   language={Russian},
   journal={Algebra i Logika},
   volume={23},
   date={1984},
   number={4},
   pages={383--394, 478},
   issn={0373-9252},
   review={\MR{781246 (86h:20058)}},
}
\bib{MR764305}{article}{
   author={Grigorchuk, Rostislav I.},
   title={Degrees of growth of finitely generated groups and the theory of
   invariant means},
   language={Russian},
   journal={Izv. Akad. Nauk SSSR Ser. Mat.},
   volume={48},
   date={1984},
   number={5},
   pages={939--985},
   issn={0373-2436},
   review={\MR{764305 (86h:20041)}},
}
\bib{MR712546}{article}{
   author={Grigorchuk, Rostislav I.},
   title={On the Milnor problem of group growth},
   language={Russian},
   journal={Dokl. Akad. Nauk SSSR},
   volume={271},
   date={1983},
   number={1},
   pages={30--33},
   issn={0002-3264},
   review={\MR{712546 (85g:20042)}},
}
\bib{MR565099}{article}{
   author={Grigorchuk, Rostislav I.},
   title={On Burnside's problem on periodic groups},
   language={Russian},
   journal={Funktsional. Anal. i Prilozhen.},
   volume={14},
   date={1980},
   number={1},
   pages={53--54},
   issn={0374-1990},
   review={\MR{565099 (81m:20045)}},
}
\bib{MR1902367}{article}{
   author={Grigorchuk, Rostislav I.},
   author={{\.Z}uk, Andrzej},
   title={On a torsion-free weakly branch group defined by a three state
   automaton},
   note={International Conference on Geometric and Combinatorial Methods in
   Group Theory and Semigroup Theory (Lincoln, NE, 2000)},
   journal={Internat. J. Algebra Comput.},
   volume={12},
   date={2002},
   number={1-2},
   pages={223--246},
   issn={0218-1967},
   review={\MR{1902367 (2003c:20048)}},
}
\bib{MR1841755}{article}{
   author={Grigorchuk, Rostislav I.},
   author={Nekrashevych, Volodymyr V.},
   author={Sushchanski{\u\i}, Vitaly I.},
   title={Automata, dynamical systems, and groups},
   language={Russian, with Russian summary},
   journal={Tr. Mat. Inst. Steklova},
   volume={231},
   date={2000},
   number={Din. Sist., Avtom. i Beskon. Gruppy},
   pages={134--214},
   issn={0371-9685},
   translation={
      journal={Proc. Steklov Inst. Math.},
      date={2000},
      number={4 (231)},
      pages={128--203},
      issn={0081-5438},
   },
   review={\MR{1841755 (2002m:37016)}},
}
\bib{MR2011117}{article}{
   author={Grigorchuk, Rostislav I.},
   author={Wilson, John S.},
   title={The uniqueness of the actions of certain branch groups on rooted
   trees},
   journal={Geom. Dedicata},
   volume={100},
   date={2003},
   pages={103--116},
   issn={0046-5755},
   review={\MR{2011117 (2004m:20051)}},
   doi={10.1023/A:1025851804561},
}
\bib{MR2009443}{article}{
   author={Grigorchuk, Rostislav I.},
   author={Wilson, John S.},
   title={A structural property concerning abstract commensurability of
   subgroups},
   journal={J. London Math. Soc. (2)},
   volume={68},
   date={2003},
   number={3},
   pages={671--682},
   issn={0024-6107},
   review={\MR{2009443 (2004i:20056)}},
   doi={10.1112/S0024610703004745},
}
\bib{MR1866850}{article}{
   author={Grigorchuk, Rostislav I.},
   author={{\.Z}uk, Andrzej},
   title={The lamplighter group as a group generated by a 2-state automaton,
   and its spectrum},
   journal={Geom. Dedicata},
   volume={87},
   date={2001},
   number={1-3},
   pages={209--244},
   issn={0046-5755},
   review={\MR{1866850 (2002j:60009)}},
}
\bib{MR919829}{article}{
   author={Gromov, Mikhael L.},
   title={Hyperbolic groups},
   conference={
      title={Essays in group theory},
   },
   book={
      series={Math. Sci. Res. Inst. Publ.},
      volume={8},
      publisher={Springer},
      place={New York},
   },
   date={1987},
   pages={75--263},
   review={\MR{919829 (89e:20070)}},
}
\bib{MR623534}{article}{
   author={Gromov, Mikhael L.},
   title={Groups of polynomial growth and expanding maps},
   journal={Inst. Hautes \'Etudes Sci. Publ. Math.},
   number={53},
   date={1981},
   pages={53--73},
   issn={0073-8301},
   review={\MR{623534 (83b:53041)}},
}
\bib{MR682063}{book}{
   author={Gromov, Mikhael L.},
   title={Structures m\'etriques pour les vari\'et\'es riemanniennes},
   language={French},
   series={Textes Math\'ematiques [Mathematical Texts]},
   volume={1},
   note={Edited by J. Lafontaine and P. Pansu},
   publisher={CEDIC},
   place={Paris},
   date={1981},
   pages={iv+152},
   isbn={2-7124-0714-8},
   review={\MR{682063 (85e:53051)}},
}
\bib{MR804694}{article}{
   author={Gromov, Mikhael L.},
   title={Infinite groups as geometric objects},
   conference={
      title={ 2},
      address={Warsaw},
      date={1983},
   },
   book={
      publisher={PWN},
      place={Warsaw},
   },
   date={1984},
   pages={385--392},
   review={\MR{804694 (87c:57033)}},
}
\bib{MR1877218}{article}{
   author={Groves, John R. J.},
   author={Hermiller, Susan M.},
   title={Isoperimetric inequalities for soluble groups},
   journal={Geom. Dedicata},
   volume={88},
   date={2001},
   number={1-3},
   pages={239--254},
   issn={0046-5755},
   review={\MR{1877218 (2003a:20068)}},
   doi={10.1023/A:1013110821237},
}
\bib{MR759409}{article}{
   author={Gupta, Narain D.},
   author={Sidki, Sa{\"\i}d N.},
   title={Some infinite $p$-groups},
   language={English, with Russian summary},
   journal={Algebra i Logika},
   volume={22},
   date={1983},
   number={5},
   pages={584--589},
   issn={0373-9252},
   review={\MR{759409 (85k:20102)}},
}
\bib{MR696534}{article}{
   author={Gupta, Narain D.},
   author={Sidki, Sa{\"\i}d N.},
   title={On the Burnside problem for periodic groups},
   journal={Math. Z.},
   volume={182},
   date={1983},
   number={3},
   pages={385--388},
   issn={0025-5874},
   review={\MR{696534 (84g:20075)}},
}

\bib{MR2440717}{article}{
   author={Hermiller, Susan M.},
   author={Holt, Derek F.},
   author={Rees, Sarah},
   title={Groups whose geodesics are locally testable},
   journal={Internat. J. Algebra Comput.},
   volume={18},
   date={2008},
   number={5},
   pages={911--923},
   issn={0218-1967},
   review={\MR{2440717 (2009f:20063)}},
}
\bib{MR2310150}{article}{
   author={Hermiller, Susan M.},
   author={Holt, Derek F.},
   author={Rees, Sarah},
   title={Star-free geodesic languages for groups},
   journal={Internat. J. Algebra Comput.},
   volume={17},
   date={2007},
   number={2},
   pages={329--345},
   issn={0218-1967},
   review={\MR{2310150 (2008g:20066)}},
}
\bib{MR1464930}{article}{
   author={Hermiller, Susan M.},
   author={Meier, John},
   title={Tame combings, almost convexity and rewriting systems for groups},
   journal={Math. Z.},
   volume={225},
   date={1997},
   number={2},
   pages={263--276},
   issn={0025-5874},
   review={\MR{1464930 (98i:20036)}},
}
\bib{MR2277577}{article}{
   author={Hoffmann, Michael},
   author={Thomas, Richard M.},
   title={A geometric characterization of automatic semigroups},
   journal={Theoret. Comput. Sci.},
   volume={369},
   date={2006},
   number={1-3},
   pages={300--313},
   issn={0304-3975},
   review={\MR{2277577 (2007j:68088)}},
   doi={10.1016/j.tcs.2006.09.008},
}
\bib{MR2023798}{article}{
   author={Hoffmann, Michael},
   author={Kuske, Dietrich},
   author={Otto, Friedrich},
   author={Thomas, Richard M.},
   title={Some relatives of automatic and hyperbolic groups},
   conference={
      title={Semigroups, algorithms, automata and languages},
      address={Coimbra},
      date={2001},
   },
   book={
      publisher={World Sci. Publ., River Edge, NJ},
   },
   date={2002},
   pages={379--406},
   review={\MR{2023798 (2005e:20083)}},
}
\bib{MR1758286}{article}{
   author={Holt, Derek F.},
   title={Word-hyperbolic groups have real-time word problem},
   journal={Internat. J. Algebra Comput.},
   volume={10},
   date={2000},
   number={2},
   pages={221--227},
   issn={0218-1967},
   review={\MR{1758286 (2002b:20043)}},
   doi={10.1142/S0218196700000078},
}
\bib{MR0403320}{article}{
   author={Hopcroft, John},
   title={An $n$ log $n$ algorithm for minimizing states in a finite
   automaton},
   conference={
      title={Theory of machines and computations (Proc. Internat. Sympos.,
      Technion, Haifa, 1971)},
   },
   book={
      publisher={Academic Press},
      place={New York},
   },
   date={1971},
   pages={189--196},
   review={\MR{0403320 (53 \#7132)}},
}
\bib{H19}{book}{
   author={Hurwitz, Adolf},
   title={Vorlesungen \"uber die Zahlentheorie der Quaternionen},
   publisher={J. Springer},
   place={Berlin},
   date={1919},
   pages={74},
}

\bib{MR1795249}{article}{
   author={Knuutila, Timo},
   title={Re-describing an algorithm by Hopcroft},
   journal={Theoret. Comput. Sci.},
   volume={250},
   date={2001},
   number={1-2},
   pages={333--363},
   issn={0304-3975},
   review={\MR{1795249 (2001h:68080)}},
   doi={10.1016/S0304-3975(99)00150-4},
}
\bib{MR0175718}{article}{
   author={Krohn, Kenneth B.},
   author={Rhodes, John L.},
   title={Algebraic theory of machines},
   conference={
      title={Proc. Sympos. Math. Theory of Automata},
      address={New York},
      date={1962},
   },
   book={
      publisher={Polytechnic Press of Polytechnic Inst. of Brooklyn, Brooklyn,
   N.Y.},
   },
   date={1963},
   pages={341--384},
   review={\MR{0175718 (30 \#5902)}},
}

\bib{MR1687212}{article}{
   author={Leonov, Yurij G.},
   title={The conjugacy problem in a class of $2$-groups},
   language={Russian, with Russian summary},
   journal={Mat. Zametki},
   volume={64},
   date={1998},
   number={4},
   pages={573--583},
   issn={0025-567X},
   translation={
      journal={Math. Notes},
      volume={64},
      date={1998},
      number={3-4},
      pages={496--505 (1999)},
      issn={0001-4346},
   },
   review={\MR{1687212 (2000d:20044)}},
   doi={10.1007/BF02314631},
}

\bib{MR0577064}{book}{
   author={Lyndon, Roger C.},
   author={Schupp, Paul E.},
   title={Combinatorial group theory},
   note={Ergebnisse der Mathematik und ihrer Grenzgebiete, Band 89},
   publisher={Springer-Verlag},
   place={Berlin},
   date={1977},
   pages={xiv+339},
   isbn={3-540-07642-5},
   review={\MR{0577064 (58 \#28182)}},
}
\bib{MR0214650}{article}{
   author={Lyndon, Roger C.},
   title={On Dehn's algorithm},
   journal={Math. Ann.},
   volume={166},
   date={1966},
   pages={208--228},
   issn={0025-5831},
   review={\MR{0214650 (35 \#5499)}},
}
\bib{MR819415}{article}{
   author={Lys{\"e}nok, I. G.},
   title={A set of defining relations for the Grigorchuk group},
   language={Russian},
   journal={Mat. Zametki},
   volume={38},
   date={1985},
   number={4},
   pages={503--516, 634},
   issn={0025-567X},
   review={\MR{819415 (87g:20062)}},
}

\bib{MR1841119}{article}{
   author={Macedo{\'n}ska, Olga},
   author={Nekrashevych, Volodymyr V.},
   author={Sushchanski{\u\i}, Vitaly I.},
   title={Commensurators of groups and reversible automata},
   language={English, with Ukrainian summary},
   journal={Dopov. Nats. Akad. Nauk Ukr. Mat. Prirodozn. Tekh. Nauki},
   date={2000},
   number={12},
   pages={36--39},
   issn={1025-6415},
   review={\MR{1841119}},
}
\bib{MR0310044}{book}{
   author={Miller, Charles F., III},
   title={On group-theoretic decision problems and their classification},
   note={Annals of Mathematics Studies, No. 68},
   publisher={Princeton University Press},
   place={Princeton, N.J.},
   date={1971},
   pages={viii+106},
   review={\MR{0310044 (46 \#9147)}},
}
\bib{Milnor5603}{article}{
   author={Milnor, John~W.},
   title={Problem 5603},
   date={1968},
   journal={Amer. Math. Monthly},
   volume={75},
   pages={685\ndash 686},
}
\bib{MR0244899}{article}{
   author={Milnor, John~W.},
   title={Growth of finitely generated solvable groups},
   journal={J. Differential Geometry},
   volume={2},
   date={1968},
   pages={447--449},
   issn={0022-040X},
   review={\MR{0244899 (39 \#6212)}},
}
\bib{MR796313}{article}{
   author={Muller, David E.},
   author={Schupp, Paul E.},
   title={The theory of ends, pushdown automata, and second-order logic},
   journal={Theoret. Comput. Sci.},
   volume={37},
   date={1985},
   number={1},
   pages={51--75},
   issn={0304-3975},
   review={\MR{796313 (87h:03014)}},
   doi={10.1016/0304-3975(85)90087-8},
}
\bib{MR710250}{article}{
   author={Muller, David E.},
   author={Schupp, Paul E.},
   title={Groups, the theory of ends, and context-free languages},
   journal={J. Comput. System Sci.},
   volume={26},
   date={1983},
   number={3},
   pages={295--310},
   issn={0022-0000},
   review={\MR{710250 (84k:20016)}},
   doi={10.1016/0022-0000(83)90003-X},
}
\bib{MR1343324}{article}{
   author={Mosher, Lee},
   title={Mapping class groups are automatic},
   journal={Ann. of Math. (2)},
   volume={142},
   date={1995},
   number={2},
   pages={303--384},
   issn={0003-486X},
   review={\MR{1343324 (96e:57002)}},
}

\bib{MR2162164}{book}{
   author={Nekrashevych, Volodymyr V.},
   title={Self-similar groups},
   series={Mathematical Surveys and Monographs},
   volume={117},
   publisher={American Mathematical Society},
   place={Providence, RI},
   date={2005},
   pages={xii+231},
   isbn={0-8218-3831-8},
   review={\MR{2162164 (2006e:20047)}},
}
\bib{0802.2554}{article}{
   author={Nekrashevych, Volodymyr V.},
   title={Free subgroups in groups acting on rooted trees},
   journal={Groups, Geom. and Dynamics},
   volume={4},
   date={2010},
   number={4},
   pages={847--862},
}
\bib{MR0075197}{book}{
   author={Novikov, P\"etr S.},
   title={Ob algoritmi\v cesko\u\i\ nerazre\v simosti problemy to\v zdestva
   slov v teorii grupp},
   language={Russian},
   series={Trudy Mat. Inst. im. Steklov. no. 44},
   publisher={Izdat. Akad. Nauk SSSR},
   place={Moscow},
   date={1955},
   pages={143},
   review={\MR{0075197 (17,706b)}},
}

\bib{MR1167524}{article}{
   author={Ol{\cprime}shanski{\u\i}, Alexander Yu.},
   title={Almost every group is hyperbolic},
   journal={Internat. J. Algebra Comput.},
   volume={2},
   date={1992},
   number={1},
   pages={1--17},
   issn={0218-1967},
   review={\MR{1167524 (93j:20068)}},
   doi={10.1142/S0218196792000025},
}

\bib{MR2320973}{article}{
   author={Rattaggi, Diego},
   title={A finitely presented torsion-free simple group},
   journal={J. Group Theory},
   volume={10},
   date={2007},
   number={3},
   pages={363--371},
   issn={1433-5883},
   review={\MR{2320973 (2008c:20055)}},
   doi={10.1515/JGT.2007.028},
}
\bib{MR642423}{article}{
   author={Rips, Eliyahu},
   title={Subgroups of small cancellation groups},
   journal={Bull. London Math. Soc.},
   volume={14},
   date={1982},
   number={1},
   pages={45--47},
   issn={0024-6093},
   review={\MR{642423 (83c:20049)}},
   doi={10.1112/blms/14.1.45},
}
\bib{MR1687204}{article}{
   author={Rozhkov, Alexander V.},
   title={The conjugacy problem in an automorphism group of an infinite
   tree},
   language={Russian, with Russian summary},
   journal={Mat. Zametki},
   volume={64},
   date={1998},
   number={4},
   pages={592--597},
   issn={0025-567X},
   translation={
      journal={Math. Notes},
      volume={64},
      date={1998},
      number={3-4},
      pages={513--517 (1999)},
      issn={0001-4346},
   },
   review={\MR{1687204 (2000j:20057)}},
   doi={10.1007/BF02314633},
}

\bib{MR2197828}{article}{
   author={Sidki, Sa{\"\i}d N.},
   title={Tree-wreathing applied to generation of groups by finite automata},
   journal={Internat. J. Algebra Comput.},
   volume={15},
   date={2005},
   number={5-6},
   pages={1205--1212},
   issn={0218-1967},
   review={\MR{2197828 (2007f:20048)}},
}
\bib{MR1774362}{article}{
   author={Sidki, Sa{\"\i}d N.},
   title={Automorphisms of one-rooted trees: growth, circuit structure, and
   acyclicity},
   note={Algebra, 12},
   journal={J. Math. Sci. (New York)},
   volume={100},
   date={2000},
   number={1},
   pages={1925--1943},
   issn={1072-3374},
   review={\MR{1774362 (2002g:05100)}},
}
\bib{MR2112674}{article}{
   author={Sidki, Sa{\"\i}d N.},
   title={Finite automata of polynomial growth do not generate a free group},
   journal={Geom. Dedicata},
   volume={108},
   date={2004},
   pages={193--204},
   issn={0046-5755},
   review={\MR{2112674 (2005h:20060)}},
   doi={10.1007/s10711-004-2368-0},
}
\bib{MR1423285}{article}{
   author={Sidki, Sa{\"\i}d N.},
   title={A primitive ring associated to a Burnside $3$-group},
   journal={J. London Math. Soc. (2)},
   volume={55},
   date={1997},
   number={1},
   pages={55--64},
   issn={0024-6107},
   review={\MR{1423285 (97m:16006)}},
   doi={10.1112/S0024610796004644},
}
\bib{MR2082097}{article}{
   author={Silva, Pedro V.},
   author={Steinberg, Benjamin},
   title={A geometric characterization of automatic monoids},
   journal={Q. J. Math.},
   volume={55},
   date={2004},
   number={3},
   pages={333--356},
   issn={0033-5606},
   review={\MR{2082097 (2005f:20106)}},
   doi={10.1093/qjmath/55.3.333},
}
\bib{MR2197829}{article}{
   author={Silva, Pedro V.},
   author={Steinberg, Benjamin},
   title={On a class of automata groups generalizing lamplighter groups},
   journal={Internat. J. Algebra Comput.},
   volume={15},
   date={2005},
   number={5-6},
   pages={1213--1234},
   issn={0218-1967},
   review={\MR{2197829 (2007b:20072)}},
}
\bib{SVV}{article}{
   author={Steinberg, Benjamin},
   author={Vorobets, Mariya},
   author={Vorobets, Yaroslav},
   title={Automata over a binary alphabet generating free groups of
     even rank},
   journal={Internat. J. Algebra Comput.},
   year={2006},
   note={To appear},
}

\bib{MR0033816}{article}{
   author={Tartakovski{\u\i}, Vladimir A.},
   title={Solution of the word problem for groups with a $k$-reduced basis
   for $k>6$},
   language={Russian},
   journal={Izvestiya Akad. Nauk SSSR. Ser. Mat.},
   volume={13},
   date={1949},
   pages={483--494},
   issn={0373-2436},
   review={\MR{0033816 (11,493c)}},
}
\bib{MR0286898}{article}{
   author={Tits, Jacques},
   title={Free subgroups in linear groups},
   journal={J. Algebra},
   volume={20},
   date={1972},
   pages={250--270},
   issn={0021-8693},
   review={\MR{0286898 (44 \#4105)}},
}

\bib{MR1267355}{article}{
   author={Vince, Andrew},
   title={Radix representation and rep-tiling},
   booktitle={Proceedings of the Twenty-fourth Southeastern International
   Conference on Combinatorics, Graph Theory, and Computing (Boca Raton, FL,
   1993)},
   journal={Congr. Numer.},
   volume={98},
   date={1993},
   pages={199--212},
   issn={0384-9864},
   review={\MR{1267355 (95f:05025)}},
}
\bib{MR2318547}{article}{
   author={Vorobets, Mariya},
   author={Vorobets, Yaroslav},
   title={On a free group of transformations defined by an automaton},
   journal={Geom. Dedicata},
   volume={124},
   date={2007},
   pages={237--249},
   issn={0046-5755},
   review={\MR{2318547 (2008i:20030)}},
}

\bib{MR2031429}{article}{
   author={Wilson, John S.},
   title={On exponential growth and uniformly exponential growth for groups},
   journal={Invent. Math.},
   volume={155},
   date={2004},
   number={2},
   pages={287--303},
   issn={0020-9910},
   review={\MR{2031429 (2004k:20085)}},
}
\bib{MR2341837}{article}{
   author={Wise, Daniel T.},
   title={Complete square complexes},
   journal={Comment. Math. Helv.},
   volume={82},
   date={2007},
   number={4},
   pages={683--724},
   issn={0010-2571},
   review={\MR{2341837 (2009c:20078)}},
   doi={10.4171/CMH/107},
}
\bib{MR0248688}{article}{
   author={Wolf, Joseph A.},
   title={Growth of finitely generated solvable groups and curvature of
   Riemanniann manifolds},
   journal={J. Differential Geometry},
   volume={2},
   date={1968},
   pages={421--446},
   issn={0022-040X},
   review={\MR{0248688 (40 \#1939)}},
}
\end{biblist}
\end{bibsection}

\newpage
\begin{abstract}

Finite automata have been used effectively in recent years to define
infinite groups. The two main lines of research have as their most
representative objects the class of automatic groups (including
word-hyperbolic groups as a particular case) and automata groups
(singled out among the more general self-similar groups).

The first approach implements
in the language of automata some tight constraints on the geometry of
the group's Cayley graph, building strange, beautiful bridges between
far-off domains.  Automata are used to define a normal form
for group elements, and to monitor the fundamental group operations.

The second approach features groups acting in a finitely constrained
manner on a regular rooted tree. Automata define sequential
permutations of the tree, and represent the group elements
themselves. The choice of particular classes of automata has often provided
groups with exotic behaviour which have revolutioned our perception of
infinite finitely generated groups.
\end{abstract}

\index{group!automatic|see{automatic~group}}
\index{group!word-hyperbolic2@---|see{word-hyperbolic~group}}
\index{group!Gupta-Sidki|see{Gupta-Sidki~group}}
\index{group!Grigorchuk|see{Grigorchuk~group}}
\index{group!automata|see{automata~group}}
\index{group!Al\"eshin|see{Al\"eshin~group}}
\index{group!Basilica|see{Basilica~group}}
\index{group!growth|see{growth~of~groups}}
\index{hyperbolic!group|see{word-hyperbolic~group}}
\index{group!biautomatic|see{biautomatic~group}}
\index{group!bounded|see{bounded~group}}
\index{problem,~decision!word|see{word~problem}}
\index{problem,~decision!word!generalized2@---|see{generalized~word~problem}}
\index{problem,~decision!order|see{order~problem}}
\index{problem,~decision!membership2@---|see{membership~problem}}
\index{group!fundamental|see{fundamental~group}}

\printindex
\end{document}


\bib{MR0279200}{book}{
   author={Gruenberg, Karl W.},
   title={Cohomological topics in group theory},
   series={Lecture Notes in Mathematics, Vol. 143},
   publisher={Springer-Verlag},
   place={Berlin},
   date={1970},
   pages={xiv+275},
   review={\MR{0279200 (43 \#4923)}},
}

\bib{MR0176884}{article}{
   author={Zarovny{\u\i}, V. P.},
   title={Automata substitutions and wreath products of groups},
   language={Russian},
   journal={Dokl. Akad. Nauk SSSR},
   volume={160},
   date={1965},
   pages={562--565},
   issn={0002-3264},
   review={\MR{0176884 (31 \#1155)}},
}
\bib{MR0202503}{article}{
   author={Zarovny{\u\i}, V. P.},
   title={Automaton substitutions and interlacements of groups},
   language={Russian},
   journal={Kibernetika (Kiev)},
   volume={1965},
   date={1965},
   number={1},
   pages={29--36},
   issn={0023-1274},
   review={\MR{0202503 (34 \#2372)}},
}
\bib{MR0202504}{article}{
   author={Ge{\v{c}}eg, Ferenc},
   title={On the group of one-to-one transformations determined by finite
   automata},
   language={Russian},
   journal={Kibernetika (Kiev)},
   volume={1965},
   date={1965},
   number={1},
   pages={37--39},
   issn={0023-1274},
   review={\MR{0202504 (34 \#2373)}},
}

\bib{MR0320201}{article}{
   author={D{\"o}m{\"o}si, P.},
   title={On the semigroup of automaton mappings with finite alphabet},
   journal={Acta Cybernet.},
   volume={1},
   date={1972},
   pages={251--254},
   issn={0324-721X},
   review={\MR{0320201 (47 \#8740)}},
}
\bib{MR0446759}{article}{
   author={D{\"o}m{\"o}si, P.},
   title={On superpositions of automata},
   journal={Acta Cybernet.},
   volume={2},
   date={1976},
   number={4},
   pages={335--343},
   issn={0324-721X},
   review={\MR{0446759 (56 \#5083)}},
}
\bib{Hor63}{article}{
   author={Ho\v{r}ej\v{s}, Ji\v{r}\'{\i}},
   title={Mappings defined by finite automata},
   journal={Probl. Kibernetiki},
   volume={9},
   pages={23--26},
   year={1963},
   language={Russian},
   review={ZBL 0178.32903}
}

\bib{MR0465602}{article}{
   author={Al{\"e}{\v{s}}in, Stanislav V.},
   title={\"Uber ein Vollst\"andigkeitskriterium f\"ur Automatenabbildungen
   bez\"uglich der Superposition},
   language={Russian, with German summary},
   journal={Rostock. Math. Kolloq.},
   number={Heft 3},
   date={1977},
   pages={119--132},
   issn={0138-3248},
   review={\MR{0465602 (57 \#5500)}},
}

\bib{MR0484850}{article}{
   author={Al{\"e}{\v{s}}in, Stanislav V.},
   title={\"Uber Automatenabbildungen},
   language={German, with Russian, English and French summaries},
   note={Theoretische Kybernetik und mathematische Logik},
   journal={Wiss. Z. Humboldt-Univ. Berlin Math.-Natur. Reihe},
   volume={24},
   date={1975},
   number={6},
   pages={735--736},
   review={\MR{0484850 (58 \#4717)}},
}

\bib{MR0416787}{article}{
   author={Al{\"e}{\v{s}}in, Stanislav V.},
   title={Superpositions of automata mappings},
   language={Russian, with English summary},
   journal={Kibernetika (Kiev)},
   date={1975},
   number={1},
   pages={29--34},
   issn={0023-1274},
   review={\MR{0416787 (54 \#4856)}},
}

\bib{MR0286577}{article}{
   author={Al{\"e}{\v{s}}in, Stanislav V.},
   title={The absence of bases in certain classes of initial automata},
   language={Russian},
   journal={Problemy Kibernet. No.},
   volume={22},
   date={1970},
   pages={67--74, 296--297},
   review={\MR{0286577 (44 \#3786)}},
}

\bib{MR0147347}{article}{
   author={Leti{\v{c}}evski{\u\i}, A. A.},
   title={Automation decompositions of mappings of free semi-groups},
   language={Russian},
   journal={\u Z. Vy\v cisl. Mat. i Mat. Fiz.},
   volume={2},
   date={1962},
   pages={467--474},
   issn={0044-4669},
   review={\MR{0147347 (26 \#4864)}},
}

\bib{MR2073355}{article}{
   author={Nekrashevych, Volodymyr V.},
   author={Sidki, Sa{\"\i}d N.},
   title={Automorphisms of the binary tree: state-closed subgroups and
   dynamics of 1/2-endomorphisms},
   conference={
      title={Groups: topological, combinatorial and arithmetic aspects},
   },
   book={
      series={London Math. Soc. Lecture Note Ser.},
      volume={311},
      publisher={Cambridge Univ. Press},
      place={Cambridge},
   },
   date={2004},
   pages={375--404},
   review={\MR{2073355 (2005d:20043)}},
}

\bib{MR2329140}{article}{
   author={Cannon, James W.},
   author={Floyd, William J.},
   author={Parry, Walter R.},
   title={Constructing subdivision rules from rational maps},
   journal={Conform. Geom. Dyn.},
   volume={11},
   date={2007},
   pages={128--136 (electronic)},
   issn={1088-4173},
   review={\MR{2329140 (2008i:37093)}},
}

\bib{MR2268483}{article}{
   author={Cannon, James W.},
   author={Floyd, William J.},
   author={Parry, Walter R.},
   title={Expansion complexes for finite subdivision rules. II},
   journal={Conform. Geom. Dyn.},
   volume={10},
   date={2006},
   pages={326--354 (electronic)},
   issn={1088-4173},
   review={\MR{2268483 (2007i:30070)}},
}

\bib{MR2218641}{article}{
   author={Cannon, James W.},
   author={Floyd, William J.},
   author={Parry, Walter R.},
   title={Expansion complexes for finite subdivision rules. I},
   journal={Conform. Geom. Dyn.},
   volume={10},
   date={2006},
   pages={63--99 (electronic)},
   issn={1088-4173},
   review={\MR{2218641 (2007c:30048)}},
}

\bib{MR1992038}{article}{
   author={Cannon, James W.},
   author={Floyd, William J.},
   author={Kenyon, Richard},
   author={Parry, Walter R.},
   title={Constructing rational maps from subdivision rules},
   journal={Conform. Geom. Dyn.},
   volume={7},
   date={2003},
   pages={76--102 (electronic)},
   issn={1088-4173},
   review={\MR{1992038 (2004f:37062)}},
}

\bib{MR1885657}{article}{
   author={Cannon, James W.},
   author={Floyd, William J.},
   author={Parry, Walter R.},
   title={Twisted face-pairing 3-manifolds},
   journal={Trans. Amer. Math. Soc.},
   volume={354},
   date={2002},
   number={6},
   pages={2369--2397 (electronic)},
   issn={0002-9947},
   review={\MR{1885657 (2003a:57036)}},
}

\bib{MR545692}{article}{
   author={Su{\v{s}}{\v{c}}ans{\cprime}ki{\u\i}, V. {\=I}.},
   title={Periodic $p$-groups of permutations and the unrestricted Burnside
   problem},
   language={Russian},
   journal={Dokl. Akad. Nauk SSSR},
   volume={247},
   date={1979},
   number={3},
   pages={557--561},
   issn={0002-3264},
   review={\MR{545692 (81k:20009)}},
}

\bib{MR0304111}{article}{
   author={Zarovny{\u\i}, V. P.},
   title={On the theory of infinite linear and quasilinear automata},
   language={Russian, with English summary},
   journal={Kibernetika (Kiev)},
   date={1971},
   number={4},
   pages={5--17},
   issn={0023-1274},
   review={\MR{0304111 (46 \#3246)}},
}

\bib{MR2468045}{book}{
   author={Chiswell, Ian},
   title={A course in formal languages, automata and groups},
   series={Universitext},
   publisher={Springer-Verlag London Ltd.},
   place={London},
   date={2009},
   pages={x+157},
   isbn={978-1-84800-939-4},
   review={\MR{2468045 (2009k:68001)}},
}
\bib{MR787801}{book}{
   author={Fenn, Roger A.},
   title={Techniques of geometric topology},
   series={London Mathematical Society Lecture Note Series},
   volume={57},
   publisher={Cambridge University Press},
   place={Cambridge},
   date={1983},
   pages={viii+280},
   isbn={0-521-28472-4},
   review={\MR{787801 (87a:57002)}},
}